\newif\iffull
\definecolor{gray}{rgb}{0.9,0.9,0.9}
\definecolor{red}{rgb}{1,0,0}
\newcommand{\graybox}[1]{\mbox{\setlength{\fboxsep}{0.5pt}%
    \colorbox{gray}{$#1$}}}
\newcommand{\ma}[1]{\ensuremath{#1}\xspace}
\newcommand\asblm[3]{\ablm{#1}{#2}{#3}{\uplambda}{#3}}
\newcommand\allocname{\ensuremath{\mathsf{alloc}}}
\newcommand\mans{\ensuremath{A}}
\newcommand\manso{\ensuremath{A'}}
\newcommand\opaque{\bullet}
\newcommand\topaque{\bullet^\mtyp}
\newcommand{\dotcup}{\mathrel{\ensuremath{\mathaccent\cdot\cup}}}
\newcommand\with[2]{{#1}/{#2}}
\newcommand\remcon[2]{\with{#1}{#2}}
\newcommand\mtyp{\ensuremath{T}}
\newcommand\mbasetyp{\ensuremath{B}}
\newcommand\menv{\ensuremath{\rho}}
\newcommand\menvo{\ensuremath{\varrho}}
\newcommand\mprg{\ma{P}}
\newcommand\mprgo{\ensuremath{Q}}
\newcommand\mexp{\ensuremath{E}}
\newcommand\mexpo{\ensuremath{E'}}
\newcommand\mvmod{{\ensuremath{\vec{\mmod}}}}
\newcommand\mvmodo{{\ensuremath{\vec{\mmodo}}}}
\newcommand\mmod{\ensuremath{M}}
\newcommand\mmodo{\ensuremath{N}}
\newcommand\mmodvar{\ensuremath{f}}
\newcommand\mmodvaro{\ensuremath{g}}
\newcommand\mmodvaroo{\ensuremath{h}}
\newcommand\mcon{\ensuremath{C}}
\newcommand\mconset{\ensuremath{\mathcal{C}}}
\newcommand\mcono{\ensuremath{D}}
\newcommand\mval{\ensuremath{V}}
\newcommand\mpreval{\ensuremath{U}}
\newcommand\mvalo{\ensuremath{U}} 
\newcommand\mvar{\ensuremath{X}}
\newcommand\mvaro{\ensuremath{y}}
\newcommand\mnum{\ensuremath{n}}
\newcommand\mnumo{\ensuremath{m}}
\newcommand\mctx{\ensuremath{\mathcal{E}}}
\newcommand\maddr{\ensuremath{a}}
\newcommand\maddro{\ensuremath{b}}
\newcommand\msto{\ensuremath{\sigma}}
\newcommand\mstoo{\ensuremath{\varsigma}}
\newcommand\mstate{\ensuremath{\varsigma}}
\newcommand\mop{O}
\newcommand\moppred{o?}
\newcommand\mlab{\ell}
\newcommand\mlabo{\ell'}
\newcommand\proves[2]{{\vdash{#1}:{#2}\mbox{\,\ding{51}}}}
\newcommand\refutes[2]{\vdash{#1}:{#2}\mbox{\,\ding{55}}}
\newcommand\ambig[2]{\vdash{#1}:{#2}\,\mbox{\bf ?}} 
\newcommand\sbegin[2]{#1;\ #2}
\newcommand\sreclamnp[3]{\ensuremath{\uplambda_{#2} #1.#3}}
\newcommand\slamnp[2]{\sreclamnp{#1}{\relax}{#2}}
\newcommand\smod[3]{\ensuremath{(\syntax{module}\:#1\:#2\:#3)}}
\newcommand\stlam[3]{\slamnp{#1\!:\!#2}{#3}}
\newcommand\slam[2]{(\slamnp{#1}{#2})}
\newcommand\sreclam[3]{(\sreclamnp{#1}{#2}{#3})}
\newcommand\strec[3]{\srec{{#1}\!:\!{#2}}{#3}}
\newcommand\srec[2]{\mu{#1}.{#2}}
\newcommand\sapp[2]{\ensuremath{#1\:#2}}
\newcommand\sif[3]{\mathsf{if}\:#1\:#2\:#3}
\newcommand\sany{\ensuremath{\syntax{any}}}
\newcommand\sarr[2]{\ensuremath{#1\!\mapsto\! #2}}
\newcommand\sdep[3]{\ensuremath{#1\!\mapsto\! \uplambda#2.#3}}
\newcommand\stdep[4]{\sdep{#1}{#2\!:\!#3}{#4}}
\newcommand\spred[1]{\ensuremath{\liftpred{#1}}}
\newcommand\scons[2]{(\syntax{cons}\:#1\:#2)}
\newcommand\sconsop{\syntax{cons}}
\newcommand\szero{\syntax{zero?}}
\newcommand\ssucc{\syntax{add1}}
\newcommand\snatp{\syntax{nat?}}
\newcommand\sequalp{\syntax{=}}
\newcommand\splus{\syntax{+}}
\newcommand\soptwo[3]{{#2}({#1},{#3})}
\newcommand\sopone[2]{{#1}({#2})}
\newcommand\sfalse{\syntax{ff}}
\newcommand\strue{\syntax{tt}}
\newcommand\sboolp{\syntax{bool?}}
\newcommand\sempty{\syntax{empty}}
\newcommand\semptyp{\syntax{empty?}}
\newcommand\scar{\syntax{car}}
\newcommand\scdr{\syntax{cdr}}
\newcommand\sprocp{\syntax{proc?}}
\newcommand\sfalsep{\syntax{false?}}
\newcommand\struep{\syntax{true?}}
\newcommand\sconsp{\syntax{cons?}}
\newcommand\syntax[1]{\ma{\mathsf{#1}}}
\newcommand\sand[2]{{#1}\wedge{#2}}
\newcommand\sor[2]{{#1}\vee{#2}}
\newcommand\sconsc[2]{\langle#1,\!#2\rangle}
\newcommand\sandc[2]{#1\wedge#2}
\newcommand\sorc[2]{#1\vee#2}
\newcommand\srecc[2]{\mu #1.#2}
\newcommand\achk[6]{\chk{#1}{#3}{#4}{#5}{#2}}
\newcommand\chk[5]{\ensuremath{\syntax{mon}^{#2,#3}_{#4}(#1, #5)}}
\newcommand\achksimple[2]{\ensuremath{\syntax{mon}(#1,{#2})}}
\newcommand\ablm[5]{\ensuremath{\syntax{blame}^{#1}_{#2}}}
\newcommand\simpleblm[2]{\ablm{#1}{#2}\relax\relax\relax}
\newcommand\sblm[2]{\ensuremath{\syntax{blame}^{#1}_{#2}}}
\newcommand\stypbool{\syntax{B}}
\newcommand\stypnum{\syntax{N}}
\newcommand\starr[2]{{#1}\rightarrow{#2}}
\newcommand\stcon[1]{\syntax{con}({#1})}
\newcommand\sflat[1]{\ensuremath{\syntax{flat}(#1)}}
\newcommand\depbless[7]{
  \ma{\slam{#2}{\achk{#3}{(\sapp{#7}{\achk{#1}{#2}{#5}{#4}{#6}{}})}{#4}{#5}{#6}{}}}}
\newcommand\simpledepbless[4]{
  \ma{\slam{#2}{\achksimple{#3}{(\sapp{#4}{\achksimple{#1}{#2}})}}}}
\newcommand\deltamap[3]{\delta({#1},{#2})\ni{#3}}
\newcommand\absdeltamap[3]{\absdelta({#1},{#2})\ni{#3}}
\newcommand\alloc[1]{\ensuremath{\mathsf{alloc}(#1)}}
\newcommand\subst[3]{\ensuremath{[#1/#2]#3}}
\newcommand\sstep{\mathbf{s}}       
\newcommand\stdstep{\longmapsto}
\newcommand\multistdstep{\longmapsto\!\!\!\!\!\rightarrow}
\newcommand\unload{\ensuremath{\mathcal{U}}}
\newcommand\mstep{\ensuremath{\stdstep\quad}}
\newenvironment
    {machine}[2][\relax]
    {\begin{display*}[#1]{#2}{\textwidth}\[
\begin{tabular}{@{}>{$}p{2.1in}<{$}>{$}p{.2in}<{$}>{$}p{2.2in}<{$}>{$}p{1.8in}<{$}}
}
    {\end{tabular}\]\end{display*}}   
\newcommand\plaindelta{\cn\delta}
\newcommand\absdelta{\delta}
\newcommand\liftpred[1]{\sflat{#1}\relax}
\newcommand\projleft[1]{\pi_1{#1}}
\newcommand\projright[1]{\pi_2{#1}}
\newcommand\proj\pi
\newcommand\fc{\textsc{fc}} 
\newcommand\vcons[2]{({#1},{#2})}
\newcommand\s[4]{\langle{#1},{#2},{#4},{#3}\rangle}
\newcommand\se[1]{\s{#1}\menv\msto\mcont}
\newcommand\sk[1]{\s\mval\menv\msto{#1}}
\newcommand\mcont\kappa
\newcommand\mconto\iota
\newcommand\cont[1]{\textsf{#1}}
\newcommand\ab[1]{\widehat{#1}}
\newcommand\cn[1]{\widetilde{#1}}
\newcommand\kchk[6]{\cont{chk}^{#3,#4}_{#5}({#1},{#2},{#6})}
\newcommand\kchkor[8]{\cont{chk-or}^{#5,#6}_{#7}({#1},{#2},{#3},{#4},{#8})}
\newcommand\kchkconso[8]{\cont{chk-cons}^{#5,#6}_{#7}({#1},{#2},{#3},{#4},{#8})}
\newcommand\kfn[4]{\cont{fn}^{#3}({#1},{#2},{#4})}
\newcommand\kap[4]{\cont{ar}^{#3}({#1},{#2},{#4})}
\newcommand\kif[4]{\cont{if}({#1},{#2},{#3},{#4})}
\newcommand\kopone[3]{\cont{op}^{#2}({#1},{#3})}
\newcommand\koptwo[5]{\cont{opr}^{#4}({#1},{#2},{#3},{#5})}
\newcommand\koptwol[5]{\cont{opl}^{#4}({#1},{#2},{#3},{#5})}
\newcommand\kdem[3]{\cont{begin}({#1},{#2},{#3})}
\newcommand\kmt{\cont{mt}}
\newcommand\mkaddr{k}
\newcommand\mkaddro{i}
\newcommand{\hbra}{
\hbox to .995 \columnwidth{\vrule width0.3mm height 1.8mm depth-0.3mm
                    \leaders\hrule height1.8mm depth-1.5mm\hfill
                    \vrule width0.3mm height 1.8mm depth-0.3mm}}
\newcommand{\hket}{
\hbox to .995 \columnwidth{\vrule width0.3mm height1.5mm
                    \leaders\hrule height0.3mm\hfill
                    \vrule width0.3mm height1.5mm}}
\newcommand{\hbraF}{
\hbox to .995 \textwidth{\vrule width0.3mm height 1.8mm depth-0.3mm
                    \leaders\hrule height1.8mm depth-1.5mm\hfill
                    \vrule width0.3mm height 1.8mm depth-0.3mm}}
\newcommand{\hketF}{
\hbox to .995 \textwidth{\vrule width0.3mm height1.5mm
                    \leaders\hrule height0.3mm\hfill
                    \vrule width0.3mm height1.5mm}}
\newcommand{\displayCap}[1]{\textbf{#1}}
\newenvironment{display}[2][]{
  \vskip 12pt
  \noindent 
  \begin{minipage}{\columnwidth}
    \displayCap{#2}\hspace{\stretch{1}}\textit{#1\:}\\[-0.6ex]
    \hbra\\[-3ex]
}{
    \\[-3ex] \hket\\[-1.5ex]
  \end{minipage}
}
\newenvironment{displayfig}[2][]{
  \noindent 
  \begin{minipage}{\columnwidth}
    \displayCap{#2}\hspace{\stretch{1}}\textit{#1\:}\\[-0.6ex]
    \hbra\\[-3ex]
}{
    \\[-3ex] \hket\\[-1.5ex]
  \end{minipage}
}
\newenvironment{display*}[3][]{
  \vskip 4pt
  \noindent 
  \begin{minipage}{#3}
    \displayCap{#2}\hspace{\stretch{1}}\textit{#1\:}\\[-0.6ex]
    \hbraF\\[-3ex]
}{
    \\[-3ex] \hketF\\[-1.5ex]
  \end{minipage}
}
\newcommand{\squishlist}{
 \begin{list}{$\bullet$}
  { \setlength{\itemsep}{0pt}
     \setlength{\parsep}{1pt}
     \setlength{\topsep}{1pt}
     \setlength{\partopsep}{0pt}
     \setlength{\leftmargin}{1.5em}
     \setlength{\labelwidth}{1em}
     \setlength{\labelsep}{0.5em} } }
\newcommand{\squishlisttwo}{
 \begin{list}{$\bullet$}
  { \setlength{\itemsep}{0pt}
     \setlength{\parsep}{0pt}
    \setlength{\topsep}{0pt}
    \setlength{\partopsep}{0pt}
    \setlength{\leftmargin}{2em}
    \setlength{\labelwidth}{1.5em}
    \setlength{\labelsep}{0.5em} } }
\newcommand{\squishend}{
  \end{list}  }
\newcommand\widen{\ensuremath{\mathsf{widen}}}
\newtheorem{theorem}{Theorem}
\newtheorem{corollary}{Corollary}
\newtheorem{lemma}{Lemma}
\begin{document}

\conferenceinfo{WXYZ '05}{date, City.}
\copyrightyear{2005}
\copyrightdata{[to be supplied]}


\title{Higher-Order Symbolic Execution via Contracts}
\authorinfo{Sam Tobin-Hochstadt \and David Van Horn}
           {Northeastern University}
           { \{samth,dvanhorn\}@ccs.neu.edu}



\maketitle
\begin{abstract}

  We present a new approach to automated reasoning about higher-order
  programs by extending symbolic execution to use behavioral contracts
  as symbolic values, enabling \emph{symbolic approximation of
    higher-order behavior}.

  Our approach is based on the idea of an \emph{abstract} reduction
  semantics that gives an operational semantics to programs with both
  concrete and symbolic components.  Symbolic components are
  approximated by their contract and our semantics gives an
  operational interpretation of contracts-as-values.  The result is a
  executable semantics that soundly predicts program behavior,
  including contract failures, for all possible instantiations of
  symbolic components.  We show that our approach scales to an
  expressive language of contracts including arbitrary programs
  embedded as predicates, dependent function contracts, and recursive
  contracts.  Supporting this feature-rich language of specifications
  leads to powerful symbolic reasoning using existing program
  assertions.

  We then apply our approach to produce a verifier for contract
  correctness of components, including a sound and computable
  approximation to our semantics that facilitates fully automated
  contract verification.  Our implementation is capable of verifying
  contracts expressed in existing programs, and of justifying valuable
  contract-elimination optimizations.
\end{abstract}




\section{Behavioral contracts as symbolic values}

Whether in the context of dynamically
loaded JavaScript programs, low-level native C code,
widely-distributed libraries, or simply intractably large code bases,
automated reasoning tools must cope with access to only part of the
program.
To handle missing components, the omitted portions are often assumed
to have arbitrary behavior, greatly limiting the precision and
effectiveness of the tool.

Of course, programmers using external components do not make such
conservative assumptions.  Instead, they attach \emph{specifications}
to these components, often with dynamic enforcement.
  These specifications increase their ability to
reason about programs that are only partially known.
But reasoning solely at the level of specification can also make
verification and analysis challenging as well as requiring substantial effort
to write sufficient specifications.

The problem of program analysis and verification in the presence of
missing \emph{data} has been widely studied, producing many effective
tools that apply \emph{symbolic execution} to non-deterministically
consider many or all possible inputs.  These tools typically determine
constraints on the missing data, and reason using these constraints.
Since the central lesson of higher-order programming is that
computation \emph{is} data, we propose symbolic execution of
higher-order programs for reasoning about systems with omitted
components, taking specifications to be our constraints.

Our approach to higher-order symbolic execution therefore combines
specification-based symbolic reasoning about opaque components with
semantics-based concrete reasoning about available components; we
characterize this technique as \emph{specifications as values}.
As specifications, we adopt higher-order behavioral software
contracts.  Contracts have two crucial advantages for our
strategy. First, they provide benefit to programmers outside of
verification, since they automatically and dynamically enforce their
described invariants.
Because of this, modern languages such as C\#, Haskell and Racket come
with rich contract libraries which programmers already
use~\cite{dvanhorn:Hinze2006Typed,dvanhorn:Fahndrich2010Embedded,dvanhorn:Findler2002Contracts}.
Rather than requiring programmers to annotate code with assertions, we
leverage the large body of code that already attaches contracts at
code boundaries.
For example, the Racket standard library features more than 4000 uses
of contracts~\cite{greenberg}.  Second, the meaning of contracts as
specifications is neatly captured by their dynamic semantics.  As we
shall see, we are able to turn the semantics of contract systems
into tools for verification of programs with contracts.  Verifying
contracts holds promise both for ensuring correctness and improving
performance: in existing Racket code, contract checks take more than
half of the running time for large computations such as rendering 
documentation and type checking large programs~\cite{chaperones}.

Our plan is as follows: we begin with a review of contracts in the
setting of Contract PCF~\cite{dvanhorn:Dimoulas2011Contract} (\S\ref{sec:cpcf}).
Next, we extend Contract PCF with abstract values described by specifications,
producing a core model of symbolic execution for our language of
higher-order contracts, which we dub Symbolic PCF with Contracts (\S\ref{sec:scpcf}).  
This allows us to give
non-deterministic behavior to programs in which any number of
modules are omitted, represented only by their
specifications; here given as contracts.  We accomplish this by
treating contracts as \emph{abstract values}, with the behavior of any
of their possible concrete instantiations.

Contracts as abstract values provides
a rich domain of symbols, including precise specifications for
abstract higher-order values.  These values present new complications
to soundness, addressed with a \emph{demonic context}, a
universal context for discovering blame for behavioral values
(\S\ref{sec:soundness}).

\def\econ{{{\texttt{even?}}}}

\def\smallcon{\texttt{\econ\ -> \econ}}

\def\bigcon{\texttt{(\smallcon) -> (\smallcon)}}

\nocaptionrule

\begin{figure}[t!]
  \noindent
  \begin{minipage}{\columnwidth}
    \hbra
\begin{alltt}
(define-contract list/c
  (rec/c X (or/c empty? (cons/c nat? X))))
(module opaque
  (provide
    [insert (nat? (and/c list/c sorted?)
             -> (and/c list/c sorted?))]
    [nums list/c])))
(module insertion-sort
  (require opaque)
  (define (foldl f l b)
    (if (empty? l) b
        (foldl f (cdr l) (f (car l) b))))
  (define (sort l) (foldl insert l empty))
  (provide
    [sort
     (list/c -> (and/c list/c sorted?))]))
> (sort nums)
(\(\bullet\) (and/c list/c sorted?))
\end{alltt}
\hket
  \end{minipage}
\caption{Verification of insertion sort}
\label{fig:insertion}
\end{figure}

We then extend this core calculus to a model of programs with
modules---including opaque modules whose implementation is not
available---and a much richer contract language
(\S\ref{sec:core-racket}), modeling the functional core of
Racket~\cite{dvanhorn:plt-tr1}.  We show that our symbolic execution
strategy soundly scales up from Symbolic PCF to this more complex
language while preserving its advantages in higher-order reasoning.
Moreover, the technique of describing symbolic values with contracts
becomes even more valuable in an untyped setting.

As the modular semantics is uncomputable, this verification strategy
is necessarily incomplete.  To address this, we apply the technique of
\emph{abstracting abstract
  machines}~\cite{dvanhorn:VanHorn2010Abstracting} to derive first an
abstract machine and then a computable approximation to our semantics
directly from our reduction system (\S\ref{sec:machine}).  We then
turn our semantics into a tool for program verification which is
integrated into the Racket toolchain and IDE
(\S\ref{sec:implementation}).  Users can click a button and explore
the behavior of their program in the presence of opaque modules,
either with a potentially non-terminating semantics, or with a
computable approximation.
Finally, we consider the extensive prior work in symbolic execution,
verification of specifications, and analysis of higher-order programs
(\S\ref{sec:related-work}) and conclude.

Our semantics allows us to use contracts for
verification in two senses: to verify that programs do not violate
their contracts, and verifying rich
properties of programs by expressing them as contracts.
In fact, the semantics alone is, in itself, a program verifier.  The
execution of a modular program which runs without contract errors on
any path is a verification that the concrete portions of the program
never violate their contracts, no matter the instantiation of the
omitted portions.
  This
technique is surprisingly effective, particularly in systems with many
layers, each of which use contracts at their boundaries.  For example,
the implementation of insertion sort in figure~\ref{fig:insertion} is
verified to live up to its contract, which states that it always
produces a sorted list.  This verification works despite the omitted
\texttt{insert} function, used in higher-order fashion as an argument
to \texttt{foldl}.

\paragraph{Contributions}
We make the following contributions:

\begin{enumerate}
\item We propose \emph{abstract reduction semantics} as technique for
  higher-order symbolic execution. This is a variant of operational
  semantics that treats specifications as values, to enable modular
  reasoning about partially unknown programs.
\item We give an abstract semantics for a core typed functional
  language with contracts that equips symbolic values represented as
  sets of contracts with an operational interpretation, allowing sound
  reasoning about opaque program components with rich specifications
  by soundly predicting program behavior for all possible
  instantiations of those opaque components.  We then scale this
  semantics up to model a more realistic untyped language with modules
  and an expressive set of contract combinators.
\item We derive a sound and computable program analysis based on our
  semantics that can serve as the basis for automated program
  verification, optimization, and static debugging.
\item We provide a prototype implementation of an interactive
  verification environment based on our theoretical models which
  successfully verifies existing programs with contracts.
\end{enumerate}

\section{Contracts and Contract PCF}
\label{sec:tools} \label{sec:cpcf}

The basic building block of our specification system is behavioral software
contracts.  Originally introduced by Meyer~\cite{samth:meyer-eiffel}, contracts are
executable specifications that sit at the boundary between software
components.
In a first-order setting, properly
assessing which component violated a contract at run-time is straightforward.  Matters are
complicated when higher-order values such as functions or objects are included
in the language.  Findler and Felleisen
\cite{dvanhorn:Findler2002Contracts} introduced the notion of
\emph{blame} and
 established a semantic framework for properly assessing blame at
run-time in a higher-order language, providing the theoretical basis
for contract systems such as Racket's.

\begin{figure}[h]
  \begin{minipage}{\columnwidth}
    \hbra
\begin{alltt}
(module double
  (provide [dbl ((even? -> even?)
                 -> (even? -> even?))])
  (define dbl (\(\uplambda\) (f) (\(\uplambda\) (x) (f (f x))))))
> (dbl (\(\uplambda\) (x) 7))
\end{alltt}
\hrule
\vspace{3pt}
\emph{top-level broke the contract} 
\emph{on {\tt dbl};}\\
\emph{expected {\tt <even?>}, given: {\tt 7}}

\hket
\end{minipage}
\end{figure}
To illustrate, consider the program above, which consists of a module
and top-level expression.  Module
{\tt double} provides a {\tt dbl} function that implements twice-iterated application,
operating on functions on even numbers.  The top-level
expression makes use of the {\tt dbl} function, but
incorrectly---{\tt dbl} is applied to a function that produces 7.

Contract checking and blame assignment in a higher-order program is
complicated by the fact that is not decidable in general whether the
argument of {\tt dbl} is a function from and to even numbers.  Thus,
higher-order contracts are pushed down into delayed lower-order
checks, but care must be taken to get blame right.
In our example, the top-level is blamed, and rightly so, even though
  {\tt even?}   witnesses the
violation when {\tt f} is applied to {\tt x} while executing
\texttt{dbl}.

\subsection{Contract PCF}

Dimoulas et
al. \cite{samth:dimoulas-POPL2011,dvanhorn:Dimoulas2011Contract}
introduce Contract PCF as a core calculus for the investigation of
contracts, which we take as the starting point for our model.  CPCF
extends PCF \cite{dvanhorn:Plotkin1977LCF} with contracts for base
values and first-class functions.  We provide a brief recap of the
syntax and semantics of CPCF.

\begin{figure}[t!]
\begin{displayfig}[]{PCF with Contracts}
\[
\begin{array}{ll@{\;}c@{\;}l}
\text{Types} &
\mtyp & ::= & \mbasetyp\ |\ \starr\mtyp\mtyp\ |\ \stcon\mtyp \\
\text{Base types} &
\mbasetyp & ::= & \stypnum\ |\ \stypbool
\\[1mm]
\text{Terms} &
\mexp & ::= & \mans\ |\ \mvar\ |\ \sapp\mexp\mexp\
|\ \strec\mvar\mtyp\mexp\ |\ \sif\mexp\mexp\mexp\\
&&|&\sopone{\mop_1}\mexp\ |\ \soptwo\mexp{\mop_2}\mexp\ |\ \chk\mcon\mmodvar\mmodvar\mmodvar\mexp
\\[1mm]
\text{Operations} &
\mop_1 &::=& \syntax{zero?}\ |\ \syntax{false?}\ |\ \dots\\
&\mop_2 &::=& +\ |\ -\ |\ \wedge\ |\ \vee\ |\ \dots
\\[1mm]
\text{Contracts} &
\mcon & ::= & \sflat\mexp\ |\ \sarr\mcon\mcon\ |\ \stdep\mcon\mvar\mtyp\mcon
\\[1mm]
\text{Answers} &
\mans & ::= & \mval\ |\ \mctx[\sblm\mmodvar\mmodvar]
\\[1mm]
\text{Values} &
\mval & ::= & \stlam\mvar\mtyp\mexp\ |\ 0\ |\ 1\ |\ -1\ |\ \dots\ |\ \strue\ |\ \sfalse
\\[1mm]
\text{Evaluation} &
\mctx & ::= & [\;]\ |\ \sapp\mctx\mexp\ |\ \sapp\mval\mctx\ |\ \soptwo\mctx{\mop_2}\mexp\
|\ \soptwo\mval{\mop_2}\mctx\\
\text{contexts} &&|& \sopone{\mop_1}\mctx\ |\ \sif\mctx\mexp\mexp\ |\ \chk\mcon\mmodvar\mmodvaro\mmodvaroo\mctx
\end{array}
\]
\end{displayfig}

\begin{displayfig}[$\mexp\stdstep\mexpo$]{Semantics for PCF with Contracts}
\[
\begin{array}{@{\;}r@{\ \ }c@{\ \ }l@{\;}}
\sif\strue{\mexp_1}{\mexp_2} &\stdstep& \mexp_1
\\[1mm]
\sif\sfalse{\mexp_1}{\mexp_2} &\stdstep& \mexp_2
\\[1mm]
\sapp{(\stlam\mvar\mtyp\mexp)}\mval &\stdstep&
\subst\mval\mvar\mexp
\\[1mm]
\strec\mvar\mtyp\mexp &\stdstep& \subst{\strec\mvar\mtyp\mexp}\mvar\mexp
\\[1mm]
\sopone\mop{\vec\mval}
&\stdstep&
\mans \mbox{ if } \delta(\mop,\vec\mval) = \mans
\\[1mm]
\chk{\stdep{\mcon_1}\mvar\mtyp{\mcon_2}}\mmodvar\mmodvaro\mmodvaroo\mval
& \stdstep &
\\
\multicolumn{3}{r@{\;}}{
\stlam\mvar\mtyp{\chk{\mcon_2}\mmodvar\mmodvaro\mmodvaroo{(\sapp\mval{\chk{\mcon_1}\mmodvaro\mmodvar\mmodvaroo\mvar})}}}
\\[1mm]
\chk{\sflat\mexp}\mmodvar\mmodvaro\mmodvaroo\mval
& \stdstep &
\sif{(\sapp\mexp\mval)}\mval{\sblm\mmodvar\mmodvaroo}
\end{array}
\]
\end{displayfig}

\end{figure}

Contracts for flat values, $\sflat\mexp$, employ predicates that may
use the full expressive power of CPCF.  Function contracts,
$\sarr{\mcon_1}{\mcon_2}$ consist of a pre-condition contract
$\mcon_1$ for the argument to the function and a post-condition
contract $\mcon_2$ for the function's result.  Dependent function
contracts, $\sdep{\mcon_1}\mvar{\mcon_2}$, bind $\mvar$ to the
argument of the function in the post-condition contract $\mcon_2$, and
thus express an dependency between a functions input and result.  In
the remainder of the paper, we treat the non-dependent function
contract $\sarr{\mcon_1}{\mcon_2}$ as
shorthand for $\sdep{\mcon_1}\mvar{\mcon_2}$ where $\mvar$ is fresh.

A contract $\mcon$ is attached to an expression with the monitor
construct $\chk\mcon\mmodvar\mmodvaro\mmodvaroo\mexp$, which carries
three labels: $\mmodvar$, $\mmodvaro$, and $\mmodvaroo$, denoting the
names of components.  (An implementation would synthesize these names
from the source code.)  The monitor checks any interaction between the
expression and its context is in accordance with the contract.

Component labels play an important role in case a contract failure is
detected during contract checking.  In such a case, blame is assigned
with the $\sblm\mmodvar\mmodvaro$ construct, which denotes the
component named $\mmodvar$ broke its contract with $\mmodvaro$.

CPCF is equipped with a standard type system for PCF plus the addition
of a contract type $\stcon\mtyp$, which denotes the set of contracts
for values of type
$\mtyp$~\cite{samth:dimoulas-POPL2011,dvanhorn:Dimoulas2011Contract}.
The type system is straightforward, so for the sake of space, we defer
the details to an appendix (\S\ref{sec:cpcf-types}).

The semantics of CPCF are given as a call-by-value reduction relation
on programs.  One-step reduction is written as $\mexp \stdstep \mexpo$
and defined as the above relation, contextually closed over evaluation
contexts $\mctx$.  The reflexive transitive closure of one-step
reduction is written $\mexp \multistdstep \mexpo$.

The first five cases of the reduction relation are standard for PCF.
The remaining two cases implement contract checking for function
contracts and flat contracts, respectively.
The monitor of a function contract on a function reduces to a function
that monitors its input with reversed blame labels and monitors its
output with the original blame labels.%
\footnote{For brevity, we have presented the so-called \emph{lax}
  dependent contract rule, although our implementation uses
  \emph{indy}~\cite{samth:dimoulas-POPL2011}, which is obtained by
  replacing the right-hand side with:
\[
\stlam\mvar\mtyp{\chk{\subst{\chk{\mcon_1}\mmodvar\mmodvaroo\mmodvaroo\mvar}\mvar{\mcon_2}}\mmodvar\mmodvaro\mmodvaroo{\sapp\mval{\chk{\mcon_1}\mmodvaro\mmodvar\mmodvaroo\mvar}}}\text.
\]
}
The monitor of a flat contract reduces to an \syntax{if}-expression
which tests whether the predicate holds.  If it does, the value is
returned.  If it doesn't, a contract error is signaled with the
appropriate blame.

\section{Symbolic PCF with Contracts}
\label{sec:scpcf}

We now describe an extension to Contract PCF that enriches the
language with \emph{symbolic values}, drawn from the language of
contracts, and show the revised semantics.  The basic idea of SCPCF is
to take the values of CPCF as ``pre''-values $\mpreval$ and add a
notion of an unknown values (of type $\mtyp$), written ``$\topaque$''.
Purely unknown values have arbitrary behavior, but we will refine
unknowns by attaching a set of contracts that specify an agreement
between the program and the missing component.  Such refinements can
guide an operational characterization of a program. Pre-values are
refined by a set of contracts to form a value,
$\with\mpreval\mconset$, where $\mconset$ ranges over sets of
contracts.

The high-level goal of the following semantics is to enable the
running of programs with unknown components.  The main requirement is
that the results of running such computations should soundly and
precisely approximate the result of running that same program after
replacing an unknown with \emph{any} allowable value.  More precisely,
if a program involving some value $\mval$ produces an answer $\mans$,
then abstracting away that value to an unknown should produce an
approximation of $\mans$:
\[
\text{if }\mctx[\mval] \multistdstep \mans\text{ and }\vdash\mval:\mtyp\text{, then }\mctx[\topaque] \multistdstep \mans'\text,
\]
where $\mans'$ ``approximates'' $\mans$ in way that is made
formal in section \ref{sec:soundness}.

\begin{figure}
\begin{displayfig}[]{Symbolic PCF with Contracts}
\[
\begin{array}{ll@{\;}c@{\;}l}
\text{Prevalues} &
\mpreval & ::= &
\topaque\ |\ \stlam\mvar\mtyp\mexp\ |\ 0\ |\ 1\ |\ -1\ |\ \!\dots\!\ |\ \strue\ |\ \sfalse
\\[1mm]
\text{Values} &
\mval & ::= &
\with\mpreval{\{\mcon,\dots\}}
\end{array}
\]
\end{displayfig}

\begin{displayfig}[$\mexp\stdstep\mexp'$]{Semantics for Symbolic PCF with Contracts}
\[
\begin{array}{@{\;}r@{\ \ }c@{\ \ }l@{\;}}
\sif\mval{\mexp_1}{\mexp_2} &\stdstep& \mexp_1
\mbox{ if }\absdelta(\struep,\mval) \ni \strue
\\[1mm]
\sif\mval{\mexp_1}{\mexp_2} &\stdstep& \mexp_2
\mbox{ if }\absdelta(\struep,\mval) \ni \sfalse
\\[1mm]
\sapp{(\stlam\mvar\mtyp\mexp)}\mval &\stdstep&
\subst\mval\mvar\mexp
\\[1mm]
\strec\mvar\mtyp\mexp &\stdstep& \subst{\strec\mvar\mtyp\mexp}\mvar\mexp
\\[1mm]
\sopone\mop{\vec\mval}
&\stdstep&
\mans \mbox{ if } \absdelta(\mop,\vec\mval) \ni \mans
\\[1mm]
\sapp{(\with{\opaque^{\starr\mtyp{\mtyp'}}\!}\mconset)}{\mval}
&\stdstep &
\with{\opaque^{\mtyp'}}{\{\subst\mval\mvar{\mcon_2}\ |\ \stdep{\mcon_1}\mvar\mtyp{\mcon_2} \in \mconset\}}
\\[1mm]
\sapp{(\with{\opaque^{\starr\mtyp{\mtyp'}}\!}\mconset)}{\mval}
&\stdstep &
\sapp{\syntax{havoc}_\mtyp}\mval
\end{array}
\]
\end{displayfig}
\end{figure}

\paragraph{Notation:} Abstract (or synonymously: symbolic) values $\ab\mval$ range over values of the form
$\with\topaque\mconset$.
Whenever the refinement of a value is irrelevant, we omit the
$\mconset$ set.  We write $\mval\cdot\mcon$ for $\with\mpreval{\mconset
  \cup \{\mcon\}}$ where $\mval = \with\mpreval\mconset$.

The semantics given above replace that of section \ref{sec:cpcf},
equipping the operational semantics with an interpretation of symbolic
values.  (The semantics of contract checking is deferred for the
moment.)

To do so requires two changes:
\begin{enumerate}
\item the $\delta$ relation must be extended to interpret operations
  when applied to symbolic values, and
\item the one-step reduction relation must be extended to the case of
 (1) branching on a
  (potentially) symbolic value, and
 (2) applying a symbolic function.
\end{enumerate}

\subsection{Operations on symbolic values}

Typically, the interpretation of operations is defined by a function
$\delta$ that maps an operation and argument values to an answer.  So
for example, $\delta(\ssucc,0)=1$.  The result of
applying a primitive may either be a value in case the operation is
defined on its given arguments, or blame in case it is not.

The extension of $\delta$ to interpret symbolic values is largely
straightforward.  It starts by generalizing $\delta$ from a function
from an operation and values to an answer, to a relation between
operations, values, and answers (or equivalently, to a function from
an operation and values to \emph{sets} of answers). This enables
multiple results when a symbolic value does not convey enough
information to uniquely determine a single result.  For example,
$\delta(\szero,0) = \{\strue\}$, but $\delta(\szero,\opaque^\stypnum)
= \{\strue,\sfalse\}$.  From here, all that remains is adding
appropriate clauses to the definition of $\delta$ for handling
symbolic values.  As an example, the definition includes:
\[
\delta(\splus,\mval_1,\mval_2) \ni \opaque^\stypnum
\text{, if }\mval_1\text{ or }\mval_2 = \with{\opaque^\stypnum}\mconset\text.
\]
The remaining cases are similarly straightforward.

The revised reduction relation reduces an operation,
non-deterministically, to any answer in the $\delta$ relation.

%

\subsection{Branching on symbolic values}

The shift from the semantics of section~\ref{sec:cpcf} to
section~\ref{sec:scpcf} involves what appears to be a cosmetic change
in the reduction of conditionals, e.g., from
\[
\sif\strue{\mexp_1}{\mexp_2} \stdstep \mexp_1
\]
to
\[
\sif\mval{\mexp_1}{\mexp_2} \stdstep \mexp_1
\mbox{ if }\absdelta(\struep,\mval) \ni \strue\text.
\]
In the absence of symbolic values, the two relations are equivalent,
but once symbolic values are introduced, the latter handles branching
on potentially symbolic values by deferring to $\delta$ to determine
if $\mval$ is possibly true.  Consequently branching on
$\opaque^\stypbool$ results in both $\mexp_1$ and $\mexp_2$ since
$\delta(\struep,\opaque^\stypbool) = \{\strue,\sfalse\}$.  Without
this slight refactoring for conditionals, additional cases for the
reduction relation are required, and these cases would largely mimic
the existing $\syntax{if}$ reductions.  By reformulating in terms of
$\delta$, we enable the uniform reduction of abstract and concrete
values.

\subsection{Applying symbolic functions}

When applying a symbolic function, the reduction relation must take
two distinct possibilities into account.
The first is that the argument to the symbolic function escapes, but
no failure occurs in the unknown context, so the function returns an
abstract value refined by the range contracts of the function.
The second is that the use of $\mval$ in a unknown context results in
the blame of $\mval$.  To discover if blaming $\mval$ is possible we
rely upon a \syntax{havoc} function, which iteratively explores the
behavior of $\mval$ for possible blame.  Its only purpose is to
uncover blame, thus it never produces a value; it either diverges or
blames $\mval$.
In this simplified model, the only behavioral values are functions, so
we represent all possible uses of the escaped value by iteratively
applying it to unknown values.  This construction represents a
universal ``demonic'' context to discover a way to blame $\mval$ if
possible, and we have named the function \syntax{havoc} to emphasize
the analogy to Boogie's \syntax{havoc}
function~\cite{dvanhorn:Fahndrich2011Static}, which serves the same
purpose, but in a first-order setting.

\def\hv{\syntax{havoc}}
\def\varx{\syntax{x}}

The \hv function is indexed by the type of its argument.
At base type, values do not have behavior, so \hv simply
produces a diverging computation.  At function type, \hv
produces a function that applies its argument to an appropriately
typed unknown input value and then recursively applies \hv
at the result type to the output:
\begin{align*}
\syntax{havoc}_\mbasetyp &= \srec\varx\varx\\
\syntax{havoc}_{\starr\mtyp{\mtyp'}} &= \stlam\varx{\starr\mtyp{\mtyp'}}{\syntax{havoc}_{\mtyp'}(\sapp\varx\topaque)}
\end{align*}

To see how \hv finds all possible errors in a term, consider the
following function guarded by a contract:
\[
\achksimple{\stlam{\syntax{f}}{\starr\stypnum\stypnum}{\sapp{\syntax{sqrt}}{(\sapp{\syntax{f}}0)}}}{\sarr{(\sarr\sany\sany)}\sany}
\]
 where \syntax{any} is the trivial contract
 $\liftpred{\slamnp{\syntax{x}}\strue}$ and \syntax{sqrt}
has the type $\starr\stypnum\stypnum$ and contract
$\sarr{\liftpred{\syntax{positive?}}}{\liftpred{\syntax{positive?}}}$.
If we then apply \hv to this term at the appropriate type, it will
supply the input $\opaque^{\starr\stypnum\stypnum}$ for \syntax{f}.
When this abstract value
is applied to $0$, it reduces to both a diverging term that produces
no blame, and the symbolic number $\opaque^\stypnum$.  Finally,
\syntax{sqrt} is applied to $\opaque^\stypnum$, which both passes and
fails the contract check on \syntax{sqrt}, since $\opaque^\stypnum$
represents both positive and non-positive numbers; the latter demonstrates
the original function could be blamed.

In contrast, if the original term was wrapped in the contract
$\sarr{(\sarr\sany{\liftpred{\syntax{positive?}}})}\sany$, then the abstract value
$\opaque^{\starr\stypnum\stypnum}$ would have been wrapped in the
contract $\sarr\sany{\liftpred{\syntax{positive?}}}$. When the
wrapped abstract function is applied to $0$, it then produces the
more precise
abstract value
$\opaque^\stypnum\cdot\liftpred{\syntax{positive?}}$ as the input to
\syntax{sqrt} and fails to blame the original function.

The ability of \hv to find blame if possible is key to our soundness
result.

\subsection{Contract checking symbolic values}
\label{sec:checking-scpcf}

We now turn to the revised semantics for contract checking reductions
in the presence of symbolic values.  The key ideas are that we
\begin{enumerate}
\item avoid checking any contracts which a value provably
  satisfies, and
\item add flat contracts to a value's refinement set whenever
a contract check against that value succeeds.
\end{enumerate}

To implement the first idea, we add a reduction relation which
sidesteps a contract check and just produces the checked value
whenever the value proves it satisfies the contract.  To implement the
second idea, we revise the flat contract checking reduction relation
to produce not just the value, but the value refined by the contract
in the success branch of a flat contract check.

\begin{display}{Contract checking, revisited}
\[
\begin{array}{@{}r@{\;\;}c@{\;\;}llr@{}}
\chk\mcon\mmodvar\mmodvaro\mmodvaroo\mval
&\stdstep&
\mval \mbox{ if } \proves\mval\mcon
\\[1mm]
\chk{\liftpred\mexp\relax}\mmodvar\mmodvaro\mmodvaroo\mval
&\stdstep
\\
\multicolumn{3}{r@{\;}}{
\sif{(\sapp\mexp\mval)}{(\mval\cdot{\liftpred\mexp})}{\simpleblm\mmodvar\mmodvaro}
\mbox{ if }\not\proves\mval{\liftpred\mexp}}
\\[1mm]
\chk{\stdep{\mcon_1}\mvar\mtyp{\mcon_2}}\mmodvar\mmodvaro\mmodvaroo\mval
& \stdstep &
\\
\multicolumn{3}{r@{\;}}{
\stlam\mvar\mtyp{\chk{\mcon_2}\mmodvar\mmodvaro\mmodvaroo{\sapp\mval{\chk{\mcon_1}\mmodvaro\mmodvar\mmodvaroo\mvar}}}}
\end{array}
\]
\end{display}

The judgment $\proves\mval\mcon$ denotes that $\mval$ provably
satisfies the contract $\mcon$, which we read as ``$\mval$ proves
$\mcon$.''  Our system is parametric with respect to this provability
relation and the precision of the symbolic semantics improves as the
power of the proof system increases.  For concreteness, we consider
the following simple, yet useful proof system which asserts a symbolic
value proves any contract it is refined by:
\[
\inferrule{\mcon\in\mconset}
          {\proves{\with\mval\mconset}\mcon}
\]
As we will see subsequently, this relation can easily be extended to
handle more sophisticated reasoning.

Taken together, the revised contract checking relation and proves
relation allow values to remember contracts once they are checked and
to avoid rechecking in subsequent computations.  Consider the
following program with abstract pieces:
$$
\begin{tabular}{c@{\;}l@{\,}c@{\,}l}
\syntax{let}&\syntax{keygen}&\syntax{=}&\achksimple{\sarr{\syntax{unit}}{\liftpred{\syntax{prime?}}}}{\opaque}\\
 &\syntax{rsa}&\syntax{=}&\achksimple{\sarr{\liftpred{\syntax{prime?}}}{(\sarr\sany\sany})}{\opaque}\\
\multicolumn{4}{l}{\syntax{in}\ \sapp{\sapp{\syntax{rsa}}{(\sapp{\syntax{keygen}}{\syntax{()}})}}{\syntax{``Plaintext"}}}
\end{tabular}
$$
When invoking \syntax{keygen} produces an abstract number, it will be checked
against the \syntax{prime?} contract, which will non-deterministically
both succeed and fail, since \syntax{keygen}'s source is not available
to be verified.  However, in the case where the check succeeds, the
\syntax{prime?} contract is \emph{remembered}, meaning that our
semantics correctly predicts that the top level application does not
break \syntax{rsa}'s contract by providing a composite number.  This
verifies that regardless of the implementation of \syntax{keygen} and
\syntax{rsa}, which may themselves be buggy, their \emph{composition} is
verified to uphold its obligations.

\subsection{Soundness}
\label{sec:soundness}

Soundness relies on the definition of approximation between terms.  We
write $\mexp \sqsubseteq \mexpo$ to mean $\mexpo$ \emph{approximates}
$\mexp$, or conversely $\mexp$ \emph{refines} $\mexpo$. The basic
intuition for approximation is an abstract value, which can be thought
of as standing for a set of acceptable concrete values, approximates a
concrete value if that value is in the set the abstract value denotes.

Since the $\topaque$ value stands for any value of type $\mtyp$, we
have the following axiom:
\[
\inferrule{\vdash \mval : \mtyp }
          {\mval\sqsubseteq\topaque}
\]
(In subsequent judgments, we assume both sides of the approximation
relation are typable at the same type and thus omit type annotations
and judgments.) A monitored expression is approximated by its
contract:
\[
\inferrule{ }
          {\achksimple\mcon\mexp \sqsubseteq \opaque\cdot\mcon}
\]
To handle the approximation of wrapped functions, we employ the
following rule, which matches the right-hand side of the reduction
relation for the monitor of a function:
\[
\inferrule{ }
          {\slam\mvar{\achksimple\mcono{(\sapp\mval{\achksimple\mcon\mvar})}}
            \sqsubseteq
            \opaque\cdot\sdep\mcon\mvar\mcono}
\]
Arbitrary contract refinements may be introduced as follows:
\begin{mathpar}
\inferrule{ }
          {\mval\cdot\mcon\sqsubseteq\mval}

\inferrule{\mval\sqsubseteq\mval'}
          {\mval\cdot\mcon\sqsubseteq\mval'\cdot\mcon}
\end{mathpar}
A contract may be eliminated when a value proves it:
\[
  \inferrule{\proves\mval\mcon}
            {\mval\sqsubseteq\mval\cdot\mcon}
\]
If an expression approximates a monitored expression, it's OK to
monitor the approximating expression too:
\[
  \inferrule
      {\achksimple\mcon\mexp \sqsubseteq \mexpo}
      {\achksimple\mcon\mexp \sqsubseteq \achksimple\mcon\mexpo}
\]
Finally, $\sqsubseteq$ is reflexively, transitively, and compatibly
closed.

\begin{theorem}[Soundness of Symbolic PCF with Contracts]
\label{thm:soundness-scpcf}
If $\mexp \sqsubseteq \mexp'$ and $\mexp \multistdstep \mans$, then
there exists some $\mans'$ such that $\mexp' \multistdstep \mans'$
where $\mans \sqsubseteq \mans'$.
\end{theorem}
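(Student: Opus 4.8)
The plan is to prove the theorem by a standard simulation (bisimulation-style) argument: establish a one-step version of the claim, then iterate. Concretely, I would prove the following \emph{key lemma}: if $\mexp \sqsubseteq \mexpo$ and $\mexp \stdstep \manso$ in one step, then there exists $\mexpo'$ with $\mexpo \multistdstep \mexpo'$ and $\manso \sqsubseteq \mexpo'$. Note the right-hand side needs \emph{multi}-step reduction, not single-step, because a concrete redex on the left may correspond to several steps on the right (for instance, a $\beta$-step underneath a wrapped-function form, where the approximating side must first unfold the monitor to a $\uplambda$ and then apply it; or a concrete operation whose symbolic counterpart proceeds via the $\delta$-relation). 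The theorem then follows by induction on the length of $\mexp \multistdstep \mans$, using the lemma at each step and transitivity of $\multistdstep$, together with the fact that $\sqsubseteq$ is transitively closed; the base case (zero steps) is immediate since $\sqsubseteq$ is reflexive. One subtlety: $\mans$ is an \emph{answer}, i.e. either a value or $\mctx[\sblm{}{}]$, so at the end I must check that when $\mans$ is blame, $\manso$ is also blame (or reduces to one) — approximation on answers should be arranged so that blame only approximates blame, and a value can approximate a value or an abstract value, never blame.

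For the key lemma I would proceed by induction on the derivation of $\mexp \sqsubseteq \mexpo$, with an inner case analysis on which reduction rule fires in $\mexp \stdstep \manso$. The compatible-closure cases (the redex sits inside a shared evaluation context) reduce to the subterm case, using that $\sqsubseteq$ is a precongruence and that approximation is preserved by plugging into contexts — here I must be slightly careful that the decomposition $\mexp = \mctx[\mexp_1]$ is matched by a decomposition of $\mexpo$, which holds because the context-former rules for $\sqsubseteq$ line up with the grammar of $\mctx$. The reflexivity/transitivity cases are routine (transitivity uses the lemma twice and closure of $\multistdstep$). The genuinely interesting cases are the \emph{axioms} of $\sqsubseteq$: (i) $\mval \sqsubseteq \topaque$ — if a redex involving $\mval$ fires on the left, I must show $\topaque$ (in the same position) reduces compatibly; this is exactly where the symbolic reduction rules for $\delta$ on opaque values, branching on $\opaque^\stypbool$, and applying $\opaque^{\starr\mtyp{\mtyp'}}$ were designed to dominate every concrete step, with the nondeterminism of the symbolic semantics providing the needed branch. (ii) $\achksimple\mcon\mexp \sqsubseteq \opaque\cdot\mcon$ and the wrapped-function rule $\slam\mvar{\dots} \sqsubseteq \opaque\cdot\sdep\mcon\mvar\mcono$ — when the left side performs a contract-checking or application step, I match it on the right by first firing the symbolic function-application rule $\sapp{(\with{\opaque^{\starr\mtyp{\mtyp'}}}\mconset)}{\mval'} \stdstep \with{\opaque^{\mtyp'}}{\{\subst{\mval'}\mvar{\mcon_2}\mid\dots\}}$ and then checking the resulting refined opaque value still approximates the left-hand result; the refinement-introduction and contract-elimination rules for $\sqsubseteq$ are what make this go through. (iii) The flat-contract case: on the left $\chk{\liftpred\mexp\relax}{}{}{}\mval \stdstep \sif{(\mexp\,\mval)}{(\mval\cdot\liftpred\mexp)}{\simpleblm{}{}}$, producing $\mval\cdot\liftpred\mexp$ on the success branch; the rule $\mval\cdot\mcon \sqsubseteq \mval$ (read the other way, combined with monotonicity) handles the match with $\opaque\cdot(\liftpred\mexp)$, and the ``$\proves{}{}$'' rule handles the case where the check is skipped entirely.

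The main obstacle I anticipate is the application-of-symbolic-function case interacting with \syntax{havoc}. The symbolic semantics has \emph{two} rules for $\sapp{(\with{\opaque^{\starr\mtyp{\mtyp'}}}\mconset)}{\mval'}$: one returning the range-refined opaque value, one reducing to $\sapp{\syntax{havoc}_\mtyp}{\mval'}$. When the left-hand (concrete) side calls a wrapped function — which, by the $\sqsubseteq$ rule, is approximated by $\opaque\cdot\sdep\mcon\mvar\mcono$ — the argument $\mval$ genuinely escapes into the unknown context, and I must show that \emph{whatever} the concrete continuation does with the returned value (including eventually blaming the escaped $\mval$), the symbolic side can track it: the ``returns range value'' rule covers the case where the context behaves well, but to cover the case where the concrete side later derives $\sblm{}{}$ pinned on (a descendant of) $\mval$, I need the havoc branch together with the analysis sketched in \S\ref{sec:checking-scpcf} showing $\syntax{havoc}_\mtyp$ uncovers every blame that any context could. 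Formalizing ``every blame the real context can extract, \syntax{havoc} can extract'' — essentially a lemma that $\syntax{havoc}_\mtyp$ is a most-general context up to $\sqsubseteq$ — is the crux, and I would isolate it as a separate lemma (by induction on $\mtyp$, pushing the escaped value through arbitrary sequences of applications and projections and matching each against a \syntax{havoc} unfolding). Everything else is bookkeeping over the finitely many reduction rules.
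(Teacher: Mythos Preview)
Your overall architecture---a one-step simulation lemma iterated along the length of the reduction, with \syntax{havoc} completeness isolated as a separate lemma---is exactly what the paper does, and you have correctly identified the auxiliary lemmas (approximation preserved by substitution and by $\delta$). The difference lies in \emph{where} the blame case and the \syntax{havoc} lemma are discharged, and this difference is not cosmetic.

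The paper's main simulation lemma carries an explicit side condition: it only covers steps $\mexp_1 \stdstep \mexp_1'$ with $\mexp_1' \neq \mctx[\sblm\mmodvar\mmodvaro]$. Blame is handled by a \emph{top-level case split on the final answer} $\mans$: if $\mans$ is a value, iterate the simulation lemma; if $\mans$ is blame, invoke \syntax{havoc} completeness \emph{directly} to exhibit an abstract reduction to the same blame. Your key lemma omits this exclusion, and your remark that ``approximation on answers should be arranged so that blame only approximates blame'' and your framing of \syntax{havoc} as a case \emph{inside} the one-step argument (``to cover the case where the concrete side \emph{later} derives blame'') show you intend to thread blame through the simulation step by step.

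That does not go through. Take the concrete side at $\slam\mvar{\achksimple\mcono{(\sapp\mval{\achksimple\mcon\mvar})}}$ approximated by $\opaque\cdot(\sdep\mcon\mvar\mcono)$. After application, the abstract side fires the range rule and becomes the \emph{value} $\opaque\cdot[\mval'/\mvar]\mcono$; it takes no further steps. The concrete side, meanwhile, continues reducing under $\achksimple\mcono{\cdot}$ and may many steps later land on $\sblm{}{}$. At the single step where the concrete side transitions to blame, your lemma would need $\sblm{}{}\sqsubseteq\opaque\cdot\mcono$, and no axiom gives you that. Choosing the $\syntax{havoc}$ branch instead requires knowing \emph{now} that blame will occur later, which a one-step lemma cannot see. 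The paper sidesteps this by never asking the simulation lemma to cover a blame step: it runs the simulation up to (but not into) blame, then appeals once to \syntax{havoc} completeness for the whole tail. So your instinct that \syntax{havoc} is the crux is right; what needs fixing is that its role is global (a case on the final answer) rather than local (a case inside the inductive step).
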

\begin{proof}(Sketch)
The proof follows from
(1) a completeness result for $\syntax{havoc}$,
which states that if $\mctx[\mval] \multistdstep
\mctx'[\simpleblm\mlab\relax]$ where $\mlab$ is not in $\mctx$, then
$\sapp{\syntax{havoc}}\mval\multistdstep\mctx''[\simpleblm\mlab\relax]$,
and
(2) the following main lemma:
if $\mexp_1\stdstep\mexpo_1 \neq\mctx[\simpleblm\mmodvar\mmodvaro]$,
and $\mexp_1 \sqsubseteq \mexp_2$, then $\mexp_2
\multistdstep\mexpo_2$ and $\mexpo_1 \sqsubseteq \mexpo_2$, which is
in turn proved reasoning by cases on $\mexp_1\stdstep\mexpo_1$ and
appealing to auxiliary lemmas that show approximation is preserved by
substitution and primitive operations.
The full proof for the enriched system of
section~\ref{sec:core-racket} is given in appendix~\ref{sec:proofs}.
\end{proof}

The soundness result achieves the high-level goal stated at the
beginning of this section: we have constructed an \emph{abstract
  reduction semantics} for the sound symbolic execution of programs
such that their symbolic execution approximates the behavior of
programs for \emph{all possible instantiations} of the opaque
components.  In particular, we can verify pieces of programs by
running them with missing components, refined by contracts.  If the
abstract program does not blame the known components, \emph{no context
  can cause those components to be blamed.}

\section{Symbolic Core Racket}
\label{sec:core-racket}

Having developed the core ideas of our symbolic executor for programs
with contracts, we extend our language to an \emph{untyped} core calculus of
modular programs with data structures and rich contracts.  This forms
a core model of a realistic programming language,
Racket~\cite{dvanhorn:plt-tr1}.  In addition to closely modeling our
target language, omitting types places a greater burden on the
contract system and symbolic executor.  As we see in this section,
ours is up to the job.

To SCPCF we add pairs, the empty list, and related operations;
contracts on pairs; recursive contracts; and conjunctive and
disjunctive contracts.  Predicates, as before, are expressed as
arbitrary programs within the language itself.  Programs are organized
as a set of module definitions, which associate a module name with a
value and a contract.  Contracts are established at module boundaries
and here express an agreement between a module and the external
context.  The contract checking portion of the reduction semantics
monitors these agreements, maintaining sufficient information to blame
the appropriate party in case a contract is broken.

\subsection{Syntax}

The syntax of our language is given in figure~\ref{fig:stx-rkt}. We write
$\vec{\mexp}$ for a possibly-empty sequence of $\mexp$, and treat
these sequences as sets where convenient. Portions highlighted in
\graybox{\mbox{gray}} are the key extensions over SCPCF, as presented
in earlier sections.

\begin{figure}[t]
\begin{displayfig}[]{}
\[
\begin{array}{@{\,}l@{\;}cl}
\mprg,\mprgo & ::= & \graybox{\mvmod \mexp}\\
\mmod,\mmodo & ::= & \graybox{\smod\mmodvar\mcon\mval}\\
\mexp,\mexpo & ::= & \graybox{\mmodvar^\mlab}\ |\ \mvar\ |\ \mans\
           |\ \sapp\mexp\mexp^\mlab\
           |\ \sif{\mexp}{\mexp}{\mexp}\
           |\ \sapp\mop{\vec\mexp}^\mlab\
           |\ \srec\mvar\mexp\
\\
&|& \chk\mcon\mmodvar\mmodvar\mmodvar\mexp\\
\mpreval & ::= & \mnum\
            |\ \strue\ |\ \sfalse\
            |\ \slam\mvar\mexp\
            |\ \opaque\
            |\ \graybox{\vcons\mval\mval\
            |\ \sempty}
\\
\mval & ::= & \with\mpreval\mconset
\\
\mcon,\mcono & ::= & \graybox\mvar\ 
|\ \sdep\mcon\mvar\mcon\ |\ \spred\mexp
\\
& |&   \graybox{ \sconsc\mcon\mcon\ |\ \sorc\mcon\mcon\ |\  \sandc\mcon\mcon
\ |\ \srecc\mvar\mcon}\\
\mop & ::= & \ssucc\ |\ \graybox{\scar\ |\ \scdr\ |\ \sconsop}\
|\ \splus \ |\ \sequalp\  |\ \moppred\ |\ \dots \\
\moppred & ::= & \graybox{\snatp\ |\ \sboolp\ |\ \semptyp\ |\ \sconsp\ |\ \sprocp\ |\
\sfalsep}\ \\
\mans & ::= & \mval\ |\ \mctx[{\ablm\mlab\mlab\mval\mcon\mval}]
\end{array}
\]
\end{displayfig}
\caption{Syntax of Symbolic Core Racket}
\label{fig:stx-rkt}
\end{figure}

A program $\mprg$ consists of a sequence of modules followed by a main
expression.  Modules are second-class entities that name a single value along with a contract
to be applied to that value.  \emph{Opaque modules} are modules whose
body is $\opaque$.  Expressions now include module references,
labeled by the module they appear in; this label is used as the
negative party for the module's contract.  Applications are also labeled;
this label is used if the application fails. Pair values and the empty
list constant are standard, along with their operations.  Since the
language is untyped, we add standard type predicates such as $\snatp$.

The new contract forms include pair contracts, with the obvious
semantics, conjunction and disjunction of contracts, and
recursive contracts with contract variables.

Contract checks $\chk\mcon\mmodvar\mmodvaro\mmodvaroo\mexp$, which
will now be inserted automatically by the operational semantics, take
all of their labels from the names of modules, with the third label
$\mmodvaroo$ represents the module in which the contract originally
appeared. As before, $\mmodvar$ represents the positive party to the
contract, blamed if the expression does not meet the contract, and
$\mmodvaro$ is the negative party, blamed if the context does not
satisfy its obligations.
Whenever these annotations can be inferred from context, we omit them; in
particular, in the definition of relations, it is assumed all
 checks of the form $\achksimple\mcon\mexp$ have the
same annotations.
We omit labels on applications whenever they provably cannot be
blamed, e.g. when the operand is known to be a function.

A blame expression, $\ablm\mlab\mlabo\mval\mcon\mval$, now indicates that
the module (or the top-level expression) named by $\mlab$ broke its
contract with $\mlabo$, which may be a module or the language,
indicated by $\Lambda$ in the case of primitive errors.

\paragraph{Syntactic requirements:}
We make the following assumptions of initial, well-formed programs,
$\mprg$: programs are closed, every module reference and application
is labeled with the enclosing module's name, or $\dagger$ if in the
top-level expression, operations are applied with the correct arity,
abstract values only appear in opaque module definitions, and no
monitors or blame expressions appear in the source program.

We also require that  recursive contracts be \emph{productive}, meaning either a
function or pair contract constructor must occur between binding and
reference.
We also require that contracts  in the source program are
closed, both with respect to $\uplambda$-bound and contract variables.
 Following standard practice, we will say that
a contract is \emph{higher-order} if it
syntactically contains a function contract;
otherwise, the contract is \emph{flat}.  Flat contracts can be checked
immediately, whereas higher-order contracts potentially require
delayed checks.  All predicate contracts are
necessarily flat.

\paragraph{Disjunction of contracts:}
For disjunctions, we require that at most one of the disjuncts is
higher-order and without loss of generality, we assume it is the right
disjunct.  The reason for this restriction is that we must choose at
the time of the \emph{initial} check of the contract which disjunct to
use---we cannot just try both because higher-order checks must be
delayed.  In Racket, disjunction is therefore restricted to contracts
that are distinguishable in a first-order way, which we simplify to
the restriction that only one can be higher-order.

\subsection{Reductions}

Evaluation is modeled with one-step reduction on programs, $\mprg
\stdstep \mprgo$.  Since the module context consists solely of
syntactic values, all computation occurs by reduction of the top-level
expression.  Thus program steps are defined in terms of top-level
expression steps, carried out in the context of several module
definitions.  We model this with a reduction relation on expressions
in a module context, which we write $\mvmod \vdash \mexp \stdstep
\mexpo$.  We omit the the module context where it is not used and
write $\mexp \stdstep \mexpo$ instead.  Our reduction system is given
with evaluation contexts, which are identical to those of SCPCF in
section~\ref{sec:scpcf}.

We present the definition of this relation in several parts.

\subsubsection{Applications, operations, and conditionals}

First, the definition of procedure applications, conditionals,
primitive operations is as usual for a call-by-value language.
Primitive operations are interpreted by a $\delta$ relation (rather
than a function), just as in section~\ref{sec:scpcf}.  The reduction relation
for these terms is defined as follows:
\begin{display}[$\mexp\;\stdstep\;\mexpo$]{Basic reductions}
\[
\begin{array}{rclr}
(\sapp{\slam\mvar\mexp}\mval)^\mlab
&\stdstep&
\subst\mval\mvar\mexp
\\ 
(\sapp{\mval}{\mval'})^\mlab
&\stdstep&
\asblm{\mlab}{\Lambda}{\mval}
&
\mbox{if }\absdeltamap\sprocp\mval\sfalse
\\ 
(\sapp{\mop}{\vec{\mval}})^\mlab
&\stdstep&
\mans
&
\mbox{if }\deltamap{\mop^\mlab}{\vec{\mval}}\mans
\\ 
\sif{\mval}{\mexp}{\mexpo}
&\stdstep&
\mexp
&
\mbox{if }\deltamap\sfalsep\mval\sfalse
\\ 
\sif\mval\mexp\mexpo
&\stdstep&
\mexpo
&
\mbox{if }\deltamap\sfalsep\mval\strue
\\[1mm]
\end{array}
\]
\end{display}
\noindent
Again, we rely on $\delta$ not only to interpret operations, but
also to determine if a value is a procedure or $\sfalse$; this
 allows uniform handling of abstract values, which may (depending on
 their remembered contracts) be treated as both true and false.  We add a reduction to
$\simpleblm\mmodvar\Lambda$ when applications are misused; the program
has here broken the contract with the language, which is no longer
checked statically by the type system as it was in SCPCF.
Additionally, our rules for \syntax{if}
 follow the Lisp tradition (which Racket adopts) in treating all non-$\sfalse$ values as
 true.

\subsubsection{Basic operations}

Basic operations, as with procedures and conditionals, follow SCPCF
closely. Operations on concrete values are standard, and we present
only a few selected cases.  Operations on abstract values are more
interesting.  A few selected cases are given in figure~\ref{fig:delta}
as examples.  Otherwise, the definition of $\delta$ is for concrete
values is standard and we relegate the remainder to
appendix~\ref{sec:absdelta}.

When applying base operations to abstract values, the results are
potentially complex.  For example, $\sapp\ssucc\opaque$ might produce
any natural number, or it might go wrong, depending on what value
$\opaque$ represents.  We represent this in
$\delta$ with a combination of
non-determinism, where $\absdelta$ relates an operation and its inputs
to multiple answers, as well as abstract values as results, to handle
the arbitrary natural numbers or booleans that might be produced.  A
representative selection of the $\absdelta$ definition for abstract
values is presented in figure~\ref{fig:delta}.

\begin{figure}
\begin{displayfig}[$\absdelta(\mop^\mlab, \vec\mval) \ni
  \mans$]{Primitive operations (concrete values)}
\[
\begin{array}{lcl}
&&\absdelta(\ssucc,\mnum) \ni \mnum+1 \\
&&\absdelta(\splus,\mnum,\mnumo)\ni \mnum+\mnumo \\
&&\absdelta(\scar,\vcons\mval{\mval'})\ni\mval \\
&&\absdelta(\scdr,\vcons\mval{\mval'})\ni\mval'
\end{array}
\]
\end{displayfig}
\begin{displayfig}{Primitive operations (abstract values)}
\[
\begin{array}{lcl}
\proves\mval\moppred &\implies& \absdeltamap\moppred\mval\strue\\
\refutes\mval\moppred &\implies& \absdeltamap\moppred\mval\sfalse\\
\ambig\mval\moppred
&\implies& \absdeltamap\moppred\mval{\with\opaque{\{\liftpred\sboolp\}}}\\
\proves\mval\snatp  &\implies& \absdeltamap\ssucc\mval{\with\opaque{\{\liftpred\snatp}\}}\\
\refutes\mval\snatp &\implies& \absdeltamap{\ssucc^\mlab}\mval{\ablm\mlab\ssucc\mval\lambda\mval}\\
\ambig\mval\snatp &\implies& \absdeltamap\ssucc\mval{\with\opaque{\liftpred\snatp}}\\
                  &&\wedge\ \absdeltamap{\ssucc^\mlab}\mval{\ablm\mlab\ssucc\mval\lambda\mval}\\
\proves\mval\sconsp &\implies& \absdeltamap\scar\mval{\projleft(\mval)}\\
\refutes\mval\sconsp &\implies& \absdeltamap{\scar^\mlab}\mval{\ablm\mlab\scar\mval\lambda\mval}
\\
\ambig\mval\sconsp &\implies& \absdeltamap\scar\mval{\projleft(\mval)} \\
&&\wedge\ \absdeltamap{\scar^\mlab}\mval{\ablm\mlab\scar\mval\lambda\mval}
 \end{array}
 \]
\end{displayfig}
\caption{Basic operations}
\label{fig:delta}
\end{figure}

The definition of $\absdelta$ relies on a proof system relating
predicates and values, just as with contract checking.  Here, $\proves\mval\moppred$ means that
$\mval$ is known to satisfy $\moppred$, $\refutes\mval\moppred$
means that
$\mval$ is known not to satisfy $\moppred$, and $\ambig\mval\moppred$
 means $\mval$ neither is known.  For example, $\proves7\snatp$, $\refutes\strue\sconsp$, and
$\ambig{\opaque}\moppred$ for any $\moppred$.  (Again, our system is
parametric with respect to this proof system, although we present a useful
instance in section~\ref{sec:proof-system}.)

Finally, if no case matches, then an error is produced:
\[
\absdelta(\mop^\mlab,\vec\mval) \ni \simpleblm\mlab\Lambda
\]
Labels on operations come from the application site of the operation
in the program, e.g.~$\sapp{\ssucc}5^\mlab$ so that the appropriate
module can be blamed when primitive operations are misused, as in the
last case, and are
omitted whenever they are irrelevant.  When primitive operations are
misused, the violated contract is on $\Lambda$, standing for the
programming language itself, just as in the rule for application of
non-functions.

\subsubsection{Module references}

To handle references to module-bound variables, we define a module
environment that describes the module context $\mvmod$.  Using the
module reference annotation, the environment distinguishes between
self references and external references.  When an external module is
referenced ($\mmodvar\neq\mmodvaro$), its value is wrapped in a contract check; a self-reference
is resolved to its (unchecked) value.  This distinction implements the
notion of ``contracts as
boundaries''~\cite{dvanhorn:Findler2002Contracts}, in other words,
contracts are an agreement between the module and its context, and the
module can behave internally as it likes.

\begin{display}[$\mvmod \vdash \mmodvar^\mmodvaro \stdstep
  \mexp$]{Module references}
\[
\begin{array}{l@{\ \vdash\ }l@{\ \stdstep\ }l@{\mbox{ if }}l@{\,}l}
\mvmod & \mmodvar^\mmodvar & \mval & \smod\mmodvar\mcon\mval &\in \mvmod \\
\mvmod & \mmodvar^\mmodvaro & \chk\mcon\mmodvar\mmodvaro\mmodvar\mval & \smod\mmodvar\mcon\mval &\in \mvmod \\ 
\mvmod & \mmodvar^\mmodvaro & \chk\mcon\mmodvar\mmodvaro\mmodvar{\opaque \cdot \mcon} & \smod\mmodvar\mcon\opaque &\in \mvmod
\end{array}
\]
\end{display}

\subsubsection{Contract checking}

With the basic rules handled, we now turn to the heart of the system,
contract checking.  As in section~\ref{sec:scpcf}, as computation is
carried out, we can discover properties of values that may be useful
in subsequently avoiding spurious contract errors.  Our primary
mechanism for remembering such discoveries is to add properties,
encoded as contracts, to values as soon as the computational process
proves them.  If a value passes a flat contract check, we add the
checked contract to the value's remembered set.  Subsequent checks of
the same contract are thus avoided.  We divide contract checking
reductions into two categories, those for flat contracts and those for
higher-order contracts, and consider each in turn.

\paragraph{Flat contracts:}  First,
checking flat contracts is handled by three rules, presented in
figure~\ref{fig:flatcon}, depending on whether the value has already
passed the relevant contract.

\begin{figure}
\begin{displayfig}[$\achksimple\mcon\mval\;\stdstep\;\mexpo$]{Flat contract reduction}
\[
\begin{array}{r@{\ }c@{\ }lr}
\chk\mcon\mmodvar\mmodvaro\mmodvaroo\mval
&\stdstep&
\mval\cdot\mcon
&{\mbox{if }\mcon\mbox{ is flat}
\mbox{ and }\proves\mval\mcon}
\\[1mm]
\chk\mcon\mmodvar\mmodvaro\mmodvaroo\mval
&\stdstep&
{\simpleblm\mmodvar\mmodvaroo}
&{\mbox{ if }\mcon\mbox{ is flat}
\mbox{ and }\refutes\mval\mcon}
\\[1mm]
\chk\mcon\mmodvar\mmodvaro\mmodvaroo\mval
&\stdstep&
\multicolumn{2}{l}{\sif{(\sapp{\fc(\mcon)}\mval)}{(\mval\cdot\mcon)}{\simpleblm\mmodvar\mmodvaroo}}
\\
\multicolumn{4}{r}{
\mbox{if }\mcon\mbox{ is flat and }\ambig\mval\mcon}
\\
\end{array}
\]
\end{displayfig}
\begin{displayfig}[$\fc(\mcon) = \mexp$]{Flat contract checking}
\[
\begin{array}{@{}r@{\ }c@{\ }l@{}}
\fc(\srecc\mvar\mcon) &=& \srec\mvar{{\fc(\mcon)}}
\\
\fc(\mvar) &=& \mvar\\
\fc(\spred{\mexp}) &=& \mexp
\\
\fc(\sandc{\mcon_1}{\mcon_2}) &=&
\slamnp\mvaro{\sand{(\sapp{\fc(\mcon_1)}\mvaro)}{(\sapp{\fc(\mcon_2)}\mvaro})}
\\
\fc(\sorc{\mcon_1}{\mcon_2}) &=&
\slamnp\mvaro{\sor{(\sapp{\fc(\mcon_1)}\mvaro)}{(\sapp{\fc(\mcon_2)}\mvaro})}
\\
\fc(\sconsc{\mcon_1}{\mcon_2}) &=&\\
\multicolumn{3}{r}{
\slamnp\mvaro{(\texttt{and}\:{(\sapp\sconsp\mvaro)}{\:}{(\sapp{\fc(\mcon_1)}{(\sapp\scar\mvaro)})}\:{(\sapp{\fc(\mcon_2)}{(\sapp\scdr\mvaro)})})}}
\end{array}
\]
\end{displayfig}
\caption{Flat contracts}
\label{fig:flatcon}
\end{figure}

The first two rules consider the case where the value definitely does
pass the contract, written $\proves\mval\mcon$ (``$\mval$ proves
$\mcon$''), or does not pass, written $\refutes\mval\mcon$ (``$\mval$
refutes $\mcon$'').
If neither of these is the case, written $\ambig\mval\mcon$,
the third rule implements a contract check by compiling it to an
\syntax{if}-expression.  The test is an application of the function
generated by $\fc(\mcon)$ to $\mval$.  If the test succeeds,
$\mval\cdot\mcon$ is produced. Otherwise, the positive party,
here $\mmodvar$, is blamed for breaking the contract on
$\mmodvaroo$.

The three judgments checking the relation between values and contracts
are a simple proof system; by parameterizing over these relations, we
enable our system to make use of sophisticated existing decision
procedures.  For the moment, the key property is that
$\proves{\mval\cdot\mcon}\mcon$ holds, just as in section~\ref{sec:checking-scpcf}, and further details are discussed in section~\ref{sec:proof-system}.

\paragraph{Compiling flat checks to predicates:}

The $\fc$ metafunction, also in figure~\ref{fig:flatcon}, takes a flat contract and produces
the source code of a function which when applied to a value produces
true or false indicating whether the value passes the contract.  The
additional complexity over the similar rules of
sections~\ref{sec:cpcf} and~\ref{sec:scpcf}
handles the addition of flat contracts containing recursive contracts,
disjunctive and conjunctive contracts, and pair contracts.  In particular, to check disjunctive contracts,
we must \emph{test} if the left disjunct passes the contract, and conditionalize
on the result, whereas our earlier reduction rules for flat contracts simply
 \emph{fail} for contracts that don't pass.
As an example, the  check expression
$\chk{\spred\snatp}\mmodvar\mmodvaro\mmodvaroo\mval$ reduces to
$\sif{(\sapp{\snatp}\mval)}\mval{\simpleblm\mmodvar\mmodvaroo}$,
but using this reduction to check $\achksimple{\sorc{\spred\snatp}{\spred\sboolp}}{\strue}$
would cause a blame error when checking the left disjunct, which is
obviously not the intended result.  Instead, the rules for $\fc$
generate the check
$\sif{(\sapp\snatp\strue)}{\strue}{(\achksimple{\spred\sboolp}\strue)}$,
which succeeds as desired.

\paragraph{Higher-order contracts:}
The next set of reduction rules, presented in figure~\ref{fig:hocon},
defines the behavior of higher-order contract checks; we assume for
these rules that
the checked contract is not flat.

\begin{figure}
\begin{displayfig}[$\achksimple{\mcon}\mval\,\stdstep\,\mexpo$]{Function
    contract reduction}
\[
\begin{array}{@{\,}l@{\;}c@{\;}l@{\;\;}l}
\chk{\sdep{\mcon}\mvar{\mcono}}\mmodvar\mmodvaro\mmodvaroo\mval
& \stdstep &
\\
\multicolumn{4}{r}{
\slam\mvar{\chk{\mcono}\mmodvar\mmodvaro\mmodvaroo{(\sapp\mval{\chk{\mcon}\mmodvaro\mmodvar\mmodvaroo\mvar})}}}
\\
&&&{\mbox{ if }\deltamap\sprocp\mval\strue}
\\[1mm]
\chk{\sdep{\mcon}\mvar{\mcono}}\mmodvar\mmodvaro{\mmodvaroo}\mval
& \stdstep &
{\ablm\mmodvar{\mmodvaroo}{\relax}{\sarr{\mcon}{\mcono}}\mval}
& {\mbox{ if }\deltamap\sprocp\mval\sfalse}
\\
\end{array}
\]
\end{displayfig}
\begin{displayfig}[]{Other higher-order contract reductions}
\[
\begin{array}{@{}l@{\;}c@{\;}l@{}r}
\achksimple{\sconsc{\mcon}{\mcono}}{\mval}
&\stdstep&\\
\multicolumn{4}{r}{
\scons{\achksimple{\mcon}{\sapp\scar{\mval'}}}
         {\achksimple{\mcono}{\sapp\scdr{\mval'}}}}
\\
\multicolumn{4}{r@{}}{
\mbox{if }\deltamap\sconsp\mval\strue\mbox{ and }\mval' = \mval\cdot{\liftpred\sconsp}}
%
\\[1mm]
\chk{\sconsc{\mcon}{\mcono}}\mmodvar\mmodvaro{\mmodvaroo}\mval
& \stdstep
& \ablm\mmodvar{\mmodvaroo}{\relax}{\sarr{\mcon}{\mcono}}\mval
&\mbox{if }\deltamap\sconsp\mval\sfalse
\\[1mm]
\achksimple{\srecc{\mvar}{\mcon}}\mval
&\stdstep&
\multicolumn{2}{@{}l}{\achksimple{\subst{\srecc{\mvar}{\mcon}}\mvar\mcon}\mval}
\\[1mm]
\achksimple{\sandc\mcon\mcono}{\mval}
&\stdstep&
\multicolumn{2}{@{}l}{\achksimple\mcono{\achksimple\mcon\mval}}
\\[1mm]
\achksimple{\sorc\mcon\mcono}{\mval}
&\stdstep&
\multicolumn{2}{@{}l}{\sif{(\sapp{\fc(\mcon)}\mval)}{(\mval\cdot\mcon)}{\achksimple\mcono\mval}}
\\
\multicolumn{4}{r}{
\mbox{if }\ambig\mval\mcon}
\\[1mm]
\achksimple{\sorc\mcon\mcono}\mval
&\stdstep&
\mval
&\mbox{ if }\proves\mval\mcon
\\[1mm]
\achksimple{\sorc\mcon\mcono}\mval
&\stdstep&
{\achksimple\mcono\mval}&
{\mbox{ if }\refutes\mval\mcon}
\\[1mm]
%
\end{array}
\]
\end{displayfig}
\caption{Higher-order contract reduction}
\label{fig:hocon}
\end{figure}

In the first rule, we again use the $\eta$-expansion technique
pioneered by \citet{dvanhorn:Findler2002Contracts} to decompose a
higher-order contract into subcomponents.  This rule only applies if
the contracted value $\mval$ is indeed a function, as indicated by
$\syntax{proc?}$ (In SCPCF, this side-condition is avoided thanks to
the type system).  Otherwise,
the second rule blames the positive party of a function contract when
the supplied value is not a function.

The remaining rules handle higher-order contracts that are not
immediately function contracts, such as pairs of function contracts.
The first two are for pair contracts.  If the value is determined to
be a pair by $\sconsp$, then the components are extracted using
$\scar$ and $\scdr$ and checked against the relevant portions of the
contract.  If the value is not a pair, then the program reduces to
blame, analogous to the case for function contracts.

The last set of rules decompose combinations of higher-order
contracts.  Recursive contracts are unrolled.  (Productivity ensures
that contracts do not unroll forever.) Conjunctions are split into
their components, with the left checked before the right.  For
higher-order disjunctions, we rely on the invariant that only the
right disjunct is higher-order and use $\fc$ for the check of the
left.  When possible, we omit the generation of this check by using
the proof system as described above (in the final two rules).

\subsubsection{Applying abstract values}

Again, application of abstract values poses a challenge, just as it in
in section~\ref{sec:scpcf}.   We now must explore more possible
behaviors of abstract operators, and we no longer have types to guide
us.  Fortunately, abstract values also give us the tools to express
the needed computation.

\begin{display}[$\mexp  \ \stdstep \ \mexpo \mbox{ {\rm where} }  \ab\mval\,=\,\with\opaque\mconset$]{Applying abstract values}
\[
\begin{array}{rclr}
\sapp{\ab\mval}\mval'
&\stdstep&
\multicolumn{2}{l}{\with\opaque{\{
\subst{\ab\mval}\mvar\mcono\ |\ (\sdep\mcon\mvar\mcono) \in \mconset \}}}
\\
&&&\mbox{if }\deltamap\sprocp{\ab\mval}\strue
\\
\sapp{\ab\mval}\mval'
&\stdstep&
\sapp{\syntax{havoc}}\mval' 
&\mbox{if }\deltamap\sprocp{\ab\mval}\strue
\end{array}
\]
\[
\begin{array}{c}
\syntax{havoc}=
\srec{\syntax{y}}{\slam\varx{\textsc{amb}(\{
\sapp{\syntax{y}}{(\sapp\varx\opaque)},
\sapp{\syntax{y}}{(\sapp\scar\varx)},
\sapp{\syntax{y}}{(\sapp\scdr\varx)}\})}}
\end{array}
\]
\[
\begin{array}{lcl}
\textsc{amb}(\{\mexp\}) &=& \mexp\\
\textsc{amb}(\{\mexp,\mexp_1,\dots\}) &=& \sif\opaque\mexp{\textsc{amb}(\{\mexp_1,\dots\})}
\end{array}
\]
\end{display}

The behavior of abstract values, which are created by references
opaque modules, is handled in much the same way as in SCPCF.  When an
abstract function is applied, there are again two possible scenarios:
(1) the abstract function returns an abstract value or (2) the
argument escapes into an unknown context that causes the value to be
blamed.  We again make use of a $\syntax{havoc}$ function for
discovering if the possibility of blame exists. In contrast to the
typed setting of SCPCF, we need only one such value.  The demonic
context is a universal context that will produce blame if it there
exists a context that produces blame originating from the value.  If
the universal demonic context cannot produce blame, only the range
value is produced.

The $\syntax{havoc}$ function is implemented as a recursive function
that makes a non-deterministic choice as to how to treat its
argument---it either applies the argument to the least-specific value,
$\opaque$, or selects one component of it, and then recurs on the
result of its choice.  This subjects the input value to all possible
behavior that a context might have.  Note that the demonic context
might itself be blamed; we implicitly label the expressions in the
demonic context with a distinguished label and disregard these
spurious errors in the proof of soundness.
We use the \textsc{amb} metafunction to implement the non-determinism
of \syntax{havoc}; \textsc{amb} uses an \syntax{if} test of an opaque
value, which reduces to both branches.

\subsection{Proof system}
\label{sec:proof-system}

Compared to the very simple proof system of
section~\ref{sec:checking-scpcf}, the system for proving or refuting
whether a given value satisfies a contract in Core Racket is 
more sophisticated, although the general principles remain the same.

In particular, we rely on three different kinds of judgments that
relate values and contracts: proves, refutes, and neither.  The first,
$\proves\mval\mcon$ includes the original judgment that a value proves
a contract if it remembers that contract.  Additionally, we add
judgements for reasoning about type predicates in the language.  For
example if a value is known to satisfy a particular base predicate,
written $\proves\mval\moppred$, then the value satisfies the contract
$\liftpred\moppred$.  This relies on the relation between values and
predicates used above in the definition of $\delta$, which is defined
in a straightforward way.

The refutes relation is more interesting and relies on additional
semantic knowledge, such as the disjointness of data types.  For
instance, a value that remembers it is a procedure, refutes all pair
contracts and the \syntax{pair?} predicate contract.  Other refutes
judgments are straightforward based on structural decomposition of
contracts and values.

The complete definition of these relations is given in
appendix~\ref{sec:absdelta}.  Our implementation, described in
section~\ref{sec:implementation}, incorporates a richer set of rules
for improved reasoning.  The implementation is naive but effective for
basic semantic reasoning, however it essentially does no sophisticated
reasoning about base type domains such as numbers, strings, or lists.
The tool could immediately benefit from leveraging an external solver
to decide properties of concrete values.

\subsection{Improving precision via non-determinism}

Since our reduction rules, and in particular the $\delta$ relation,
make use of the remembered contracts on values, making these contracts
as specific as possible improves precision of the results.

\begin{display}[$\ab\mval \ \stdstep \ \ab\mval'$]{Improving precision via non-determinism}
\[
\begin{array}{lcl}
\with\opaque{\mconset\dotcup\{ \sorc{\mcon_1}{\mcon_2} \}}
&\stdstep&
\remcon\opaque{\mconset\cup{\{ \mcon_i \}}}\quad i \in \{1,2\}
\\[1mm]
\with\opaque{\mconset\dotcup\{ \srecc\mvar\mcon \}}
&\stdstep&
\remcon\opaque{\mconset\cup{\{ \subst{\srecc\mvar\mcon}\mvar\mcon \}}}
\end{array}
\]
\end{display}
The two rules above increase the specificity of abstract values.  The
first splits abstract values known to satisfy a disjunctive contract.
For example,
$\with\opaque\{\sorc{\liftpred\snatp}{\liftpred\sboolp}\}\: \stdstep\:
\with\opaque{\liftpred\snatp}$ and $\with\opaque{\liftpred\sboolp}$.
This converts the imprecision of the value into non-determinism in
the reduction relation, and makes subsequent uses of $\delta$ more
precise on the two resulting values.  Similarly, we unfold recursive
contracts in abstract values; this exposes further disjunctions to
split, as with a contract for lists.

As an example of the effectiveness of this simple approach, consider
the list length function:
\begin{alltt}
(module length
  (provide [len (list/c -> nat?)])
  (define len
    (\(\uplambda\) (l)
      (if (empty? l) 0 (+ 1 (len (cdr l)))))))
\end{alltt}
When applied to the symbolic
value $$\opaque\cdot\srecc\varx{(\sorc{\liftpred{\syntax{empty?}}}{\sconsc\snatp{\varx}})}$$
which is the definition of \syntax{list/c},
we immediately unroll and split the abstract value, meaning that we
evaluate the body of $\syntax{len}$ in exactly the two cases it is
designed to handle, with a precise result for each.  Without this
splitting, the test would return both $\strue$ and $\sfalse$, and the
semantics would attempt to take the $\syntax{cdr}$ of the empty list,
even though the function will never fail on concrete inputs.  This
provides some of the benefits of occurrence
typing~\cite{dvanhorn:TobinHochstadt2010Logical} simply by exploiting
the non-determinism of the reduction semantics.


\subsection{Evaluation and Soundness}
\label{sec:racket-soundness}

We now define evaluation of entire modular programs, and prove soundness for
our abstract reduction semantics.  One complication remains.  In any program with
opaque modules, any module might be referenced, and then treated
arbitrarily, by one of the opaque modules.  While this does not
affect the value that the main expression might reduce to, it does
create the possibility of blame that has not been previously
predicted.  We therefore place each concrete module into the
previously-defined demonic context and non-deterministically choose
one of these expressions to run \emph{prior} to running the main
module of the program.

The evaluation function is defined as:
\[
\mathit{eval}(\mvmod\mexp) =
  \{ \mexp'\ |\ \mvmod \,\vdash\, \sbegin\mexpo\mexp \multistdstep \mexp'\}\text,
\]
where $\mexpo =
\textsc{amb}(\{\strue,\overline{\sapp{\syntax{havoc}}{\mmodvar}}\})$,
$\smod\mmodvar\mcon\mval \in \mvmod$.

Soundness, as in section~\ref{sec:soundness}, relies on the definition
of approximation between terms, and its straightforward extension to
modules and programs.  (The details of the approximation relation for
Symbolic Core Racket are given in appendix \ref{sec:proofs}.)

\begin{theorem}[Soundness of Symbolic Core Racket]
\label{thm:soundness}
\ \\
If $\mprg \sqsubseteq \mprgo$ where $\mprgo = \mvmod
\mexp$ and $\mans \in \mathit{eval}(\mprg)$, then there exists
 some $\manso \in \mathit{eval}(\mprgo)$ where $\mans \sqsubseteq_\mvmod \manso$.
\end{theorem}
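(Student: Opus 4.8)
The plan is to establish Theorem~\ref{thm:soundness} as the Symbolic Core Racket analogue of Theorem~\ref{thm:soundness-scpcf}, by a forward-simulation argument: every reduction sequence of the more concrete program $\mprg$ is tracked, step by step, by a reduction sequence of the more abstract $\mprgo = \mvmod\mexp$ ending in an approximating answer. Since $\mathit{eval}$ prefixes the main expression with the demonic choice $\textsc{amb}(\{\strue,\overline{\sapp{\syntax{havoc}}{\mmodvar}}\})$ over the program's modules, and since $\mprg \sqsubseteq \mprgo$ forces every module of $\mprg$ to refine the corresponding (possibly opaque) module of $\mprgo$, it suffices to give a step-by-step simulation of the expression reduction relation, with the module-context mismatch absorbed by the module-reference rules. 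I would first record the monotonicity facts everything rests on: (i) substitution preserves approximation, i.e.\ $\mexp_1 \sqsubseteq \mexp_2$ and $\mval_1 \sqsubseteq \mval_2$ imply $\subst{\mval_1}\mvar{\mexp_1} \sqsubseteq \subst{\mval_2}\mvar{\mexp_2}$; (ii) $\delta$ is monotone, i.e.\ if $\vec\mval_1 \sqsubseteq \vec\mval_2$ and $\deltamap\mop{\vec\mval_1}{\mans}$ then $\deltamap\mop{\vec\mval_2}{\manso}$ for some $\manso \sqsupseteq \mans$; and (iii) the proof system is consistent with $\sqsubseteq$: if $\mval_1 \sqsubseteq \mval_2$ then $\mval_2$ cannot prove a contract $\mval_1$ refutes nor refute one $\mval_1$ proves, and whenever $\mval_2$ proves (resp.\ refutes) $\mcon$ so does $\mval_1$. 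Fact (iii) is what keeps the abstract side from committing to blame while the concrete side sidesteps a check, and (ii) reduces to (iii) by inspection of the clauses in figure~\ref{fig:delta}.

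The core is a single-step simulation lemma: if $\mexp_1 \stdstep \mexpo_1$ with $\mexpo_1$ not a blame answer and $\mexp_1 \sqsubseteq \mexp_2$, then $\mexp_2 \multistdstep \mexpo_2$ for some $\mexpo_2 \sqsupseteq \mexpo_1$; it is proved by cases on the reduction and on the derivation of $\mexp_1 \sqsubseteq \mexp_2$ at the redex. The structural cases ($\lambda$-application, \syntax{if}, primitive operations, module self-reference, unrolling of recursive contracts, splitting of conjunctions and disjunctions) go through using (i) and (ii); the contract-checking cases use (iii): when the concrete redex fires via $\proves\mval\mcon$ or $\refutes\mval\mcon$, the abstract value either agrees with that verdict or is ambiguous about $\mcon$, and in the ambiguous case we simply select the non-deterministic branch (pass, fail, or \textsc{amb}-choice) that matches --- the theorem asks only for one approximating run. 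The new work relative to Theorem~\ref{thm:soundness-scpcf} is: (a) the clauses for pair, conjunctive, disjunctive, and recursive contracts, whose right-hand sides are themselves built from monitors and $\fc$, so approximation must be pushed through $\fc(\mcon)$; (b) the precision-improving rules, which only ever replace an abstract value by one of its refinements and so are matched by first splitting the abstract side the same way and then taking the identity step; and (c) the fact that $\syntax{havoc}$ is now a single untyped recursive function that may also apply $\scar$ and $\scdr$ to its argument.

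For blame-producing steps, and for relating the two demonic prefixes, I would prove the Core Racket version of the completeness lemma for $\syntax{havoc}$: if $\mctx[\mval] \multistdstep \mctx'[\simpleblm\mlab\mlabo]$ where $\mlab$ does not occur in $\mctx$ --- so the blame is attributable to how $\mval$ is used rather than to $\mctx$ --- then $\sapp{\syntax{havoc}}\mval \multistdstep \mctx''[\simpleblm\mlab\mlabo]$; the proof is by induction on the reduction, peeling off each elimination performed on $\mval$ (application, $\scar$, $\scdr$) and matching it with the corresponding \textsc{amb}-branch of $\syntax{havoc}$, disregarding spurious errors carrying the distinguished demonic label. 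Theorem~\ref{thm:soundness} then follows by induction on the length of a reduction witnessing $\mans \in \mathit{eval}(\mprg)$: non-final steps are handled by iterating the simulation lemma (noting that the demonic prefix of $\mprg$, being $\syntax{havoc}$ applied to refinements of $\mprgo$'s modules, is approximated by the demonic prefix of $\mprgo$), and the final step, when it produces blame, is handled either directly by the blame cases of the simulation argument, or --- when the blame arises because a value escaping into an opaque context in $\mprgo$ is mistreated there --- by invoking that completeness lemma on the matching branch of the abstract-application rule or of the $\mathit{eval}$ prefix. I expect this last point to be the main obstacle: showing that \emph{every} way the fully-instantiated $\mprg$ can raise blame is reproduced, up to approximation, by \emph{some} run of $\mprgo$ forces the demonic context --- both inside the abstract-application rule and in the $\mathit{eval}$ prefix --- to be genuinely universal, and getting the label bookkeeping right there, separating real module blame from the disregarded demonic-label and $\Lambda$ errors, is the delicate part.
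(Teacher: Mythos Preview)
Your proposal is essentially the paper's own argument: the same supporting lemmas (monotonicity of substitution, of $\delta$, and of $\fc$; compatibility of the proves/refutes judgments with $\sqsubseteq$), the same single-step simulation lemma excluding blame (the paper's Lemma~\ref{lem:main-lemma}), the same completeness lemma for \syntax{havoc} (Lemma~\ref{lem:demonic}), and the same top-level induction on reduction length. The one place where the paper's proof is crisper than your sketch is the handling of the final blame step: instead of distinguishing ``blame handled by the simulation'' from ``blame from escape into an opaque context,'' the paper splits directly on the blamed label $\mlab$. If $\mlab$ names an opaque module of $\mvmod$ or $\mlab=\dagger$, the approximation rule $\simpleblm\mlab\mlabo \sqsubseteq_\mvmod \mexp$ makes any expression in $\mathit{eval}(\mprgo)$ an acceptable $\manso$; if $\mlab$ names a concrete module $\mmodvar$, Lemma~\ref{lem:demonic} yields $\sapp{\syntax{havoc}}\mmodvar \multistdstep \simpleblm\mlab\mlabo$ directly in the $\mathit{eval}$ prefix. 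This avoids the ``escape bookkeeping'' you anticipate as the main obstacle.
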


\noindent
This soundness result, proved in appendix~\ref{sec:proofs}, implies
that if a program with opaque modules does not produce blame, then the
known modules cannot be blamed, \emph{regardless} of the choice of
implementation for the opaque modules.

\begin{corollary}
If  $\mmodvar$ is the name of a concrete module in $\mprg$, and $\mprg
\not\multistdstep \mbox{\emph{$\mctx[\simpleblm\mmodvar\mmodvaro]$}}$,
 then \emph{no} instantiation of the opaque modules in
$\mprg$ can cause $\mmodvar$ to be blamed.
\end{corollary}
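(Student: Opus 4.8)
The plan is to obtain the corollary as the contrapositive of Theorem~\ref{thm:soundness}: every concrete choice of implementations for the opaque modules yields a program that \emph{refines} $\mprg$, so soundness transports any blame exhibited by the instantiation back into the abstract execution of $\mprg$, where it contradicts the hypothesis.

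First I would make ``instantiation'' precise. Given $\mprg$, replace each opaque module $\smod\mmodvaro\mcon\opaque$ by a concrete module $\smod\mmodvaro\mcon\mval$ with the same name and the same contract but an arbitrary closed, well-formed value $\mval$ in place of $\opaque$, leaving every concrete module --- in particular $\mmodvar$ --- and the main expression untouched; call the result $\mprgo$. Since $\mval \sqsubseteq \opaque$ holds for every value $\mval$, the extension of $\sqsubseteq$ to module sequences and programs (given in the appendix and compatibly closed) yields $\mprgo \sqsubseteq \mprg$. This holds even when $\mval$ violates its own contract $\mcon$: the contract wrapper inserted at the module boundary means that such internal misbehavior can only be blamed on the opaque module itself, never on a concrete module such as $\mmodvar$, and $\opaque\cdot\mcon$ soundly over-approximates $\mval$ wrapped in $\mcon$.

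The one structural fact about approximation I would need is: if $\mans \sqsubseteq_\mvmod \manso$ and $\mans = \mctx[\simpleblm\mmodvar\mmodvaro]$ where $\mmodvar$ names a concrete module, then $\manso$ is itself of the form $\mctx'[\simpleblm\mmodvar\mmodvaroo]$ for some context $\mctx'$ and label $\mmodvaroo$ --- an approximant of an answer that blames $\mmodvar$ again blames $\mmodvar$. This goes by inspection of the approximation rules together with their reflexive, transitive and compatible closure: no rule relates a blame answer to a value, the compatible closure relates $\mctx[\simpleblm\cdots]$ to another term only when the embedded blame subexpressions are related, and blame is related only to blame carrying the same positive party. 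The distinguished labels attached to the demonic-context expressions (whose blame is the spurious error discarded in the soundness proof, \S\ref{sec:core-racket}) are not module names, so genuine blame on the concrete module $\mmodvar$ is never confused with them.

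The conclusion then follows by contraposition. Suppose some instantiation causes $\mmodvar$ to be blamed, \ie, the corresponding $\mprgo$ has $\mans = \mctx[\simpleblm\mmodvar\mmodvaro] \in \mathit{eval}(\mprgo)$. Applying Theorem~\ref{thm:soundness} to $\mprgo \sqsubseteq \mprg$ gives some $\manso \in \mathit{eval}(\mprg)$ with $\mans \sqsubseteq_\mvmod \manso$, and by the structural fact $\manso$ blames $\mmodvar$ as well; hence $\mprg$, read through $\mathit{eval}$ as in Theorem~\ref{thm:soundness} (which in particular subsumes the plain reductions $\mprg \multistdstep \cdots$ taken by the $\strue$ branch of the demonic prefix), reaches an answer blaming $\mmodvar$, contradicting the hypothesis. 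I expect the main obstacle to be precisely the structural fact on approximation --- verifying that all of the $\sqsubseteq$ clauses and closure rules preserve ``blames the concrete module $\mmodvar$'' --- together with reconciling the plain-$\multistdstep$ phrasing of the corollary's hypothesis with the $\mathit{eval}$-based statement of soundness, so that blame arising legitimately from the universal demonic context (modelling opaque modules that abuse other modules) is correctly accounted for rather than mistaken for a spurious error.
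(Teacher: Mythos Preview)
Your approach is correct and is exactly the intended derivation: the paper states this as an immediate corollary of Theorem~\ref{thm:soundness} without giving a separate proof, and the contrapositive argument you outline---instantiate the opaque modules, observe that the result refines $\mprg$, transport the blame back via soundness, and use the approximation rule for blame (which only lets $\simpleblm\mmodvar\mmodvaro$ be approximated by an arbitrary expression when $\mmodvar$ is \emph{opaque} or $\dagger$) to conclude that the approximant still blames the concrete $\mmodvar$---is precisely what the paper has in mind.

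Your worry about the mismatch between the plain $\multistdstep$ in the corollary's hypothesis and the $\mathit{eval}$-based statement of Theorem~\ref{thm:soundness} is well-placed and is really an imprecision in the paper rather than in your argument. The demonic prefix in $\mathit{eval}$ exists exactly to model opaque modules invoking concrete ones, so the hypothesis should morally be read as ``no answer in $\mathit{eval}(\mprg)$ blames $\mmodvar$''; with that reading your contrapositive goes through cleanly. The structural fact you isolate is justified directly by the blame rule in the appendix's definition of $\sqsubseteq_\mvmod$, which fires only for opaque or top-level labels, so blame on a concrete $\mmodvar$ can be approximated only by blame on the same $\mmodvar$ (up to context).
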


\section{Convergence and decidability}
\label{sec:machine}

At this point, we have constructed an abstract reduction semantics
that gives meaning to programs with opaque components.  The semantics
is a sound abstraction of all possible instantiations of the omitted
components, thus it can be used to verify modular programs satisfy
their specifications.  However, in order to automatically verify
programs, the semantics must converge for the particular program being
analyzed.

We now describe how to refactor the semantics in such a way that we
can accelerate and---if desired---\emph{guarantee} convergence by
introducing further orthogonal approximation into the semantics.  This is
accomplished in a number of ways:
\squishlist
\item basic operations may widen when applied to concrete values,

\item environment structure may be bounded, and

\item control structure may be bounded.
\squishend

In order to guarantee convergence for all possible programs, all
three of these forms of approximation must be employed and are sufficient to
guarantee decidability of the semantics.
However, our experience suggests that such strong convergence
guarantees may be unnecessary in practice.  For example, we have found
that a simple widening
of  concrete recursive function applications to
  their contract when applied to abstract values works well for ensuring
convergence of tail-recursive programs broken into small
modules.
By adding a limited form of control structure approximation, we are
able to automatically verify non-tail-recursive functions.
Taken together, these forms of approximation do not guarantee
convergence in general, yet they do induce convergence for all of the
examples we have considered (see \S\ref{sec:examples}) and with fewer
spurious results compared to more traditional forms of abstraction
such as 0CFA and pushdown flow analysis.
But rather than advocate a particular approximation strategy, we now
describe how to refactor the semantics so that all these choices
may be expressed. Our implementation (\S\ref{sec:implementation})
then makes it easy to explore any of them.

\subsection{Widening values}

 Widening the results of basic
operations, i.e., those interpreted by $\delta$, can cut down
the set of base values, and if necessary can ensure finiteness of base
values.  Thus, we replace $\delta$ with $\delta'$:
\[
\delta'(\mop,\vec\mval) \ni \widen(\mval) \iff
\delta(\mop,\vec\mval) \ni \mval
\]
where $\widen$ represents an arbitrary choice of a metafunction for
mapping a value to its approximation.  To avoid approximation, it is
interpreted as the identity function.  To ensure finite base values,
it must map to a finite range; a simple example is $\widen(\mval) =
\bullet$ for all $\mval$.  An example of a more refined interpretation
is $\widen(n) = \spred\snatp$, $\widen\scons\mval\mvalo =
\spred\sconsp$, etc.  For soundness, we require that $\widen(\mval) =
\mvalo$ implies $\mval \sqsubseteq \mvalo$.

\subsection{Bounding environment structure}

 The lexical environment of a
program represents a source of unbounded structure.  To enable
approximation of the lexical environment, we first refactor the
semantics as calculus of explicit
substitutions~\cite{dvanhorn:Curien1991Abstract} with a global store.
Substitutions are modeled by finite maps from variables to addresses
and the store maps addresses to \emph{sets of} values, which are now
represented as closures:
\[
\begin{array}{rcl}
\menv,\menvo &::=& \emptyset\ |\ \menv[\mvar\mapsto\maddr]\\
\msto,\mstoo &::=& \emptyset\ |\ \msto[\maddr\mapsto \{\mval,\dots\}]
\end{array}
\]
Reductions that bind variables, such as function application, must allocate
and extend the environment.  For example,
\[
\begin{array}{r@{\quad}c@{\quad}l@{}l}
\sapp{\slam\mvar\mexp}\mval^\mlab
&\stdstep&
\subst\mval\mvar\mexp
\end{array}
\]
becomes an analogous reduction relation on closures and stores:
\[
\begin{array}{r@{\quad}c@{\quad}l@{}l}
\sapp{(\slam\mvar\mexp,\menv)}\mval^\mlab,\msto
&\stdstep&
(\mexp,\menv[\mvar\mapsto \maddr]),\msto\sqcup[\maddr\mapsto\mval]
\\
&&\mbox{where }\maddr = \alloc{\msto,\mvar}
\end{array}
\]
and the interpretation of $\msto\sqcup[\maddr\mapsto\mval]$ is
$\msto'$ s.t. $\msto'(\maddro) = \msto(\maddro)$ if $\maddr \neq \maddro$
and $\msto'(\maddr) = \msto(\maddr) \cup \{\mval\}$.

Since reduction operates over closures, there is an additional case
needed to handle variable references:
\[
\begin{array}{r@{\quad}c@{\quad}l@{}l}
(\mvar,\menv), \msto &\stdstep&
\mval, \msto \mbox{ if } \mval \in \msto(\menv(\mvar))
\end{array}
\]

The $\allocname$ metafunction provides a point of control which
regulates the approximation of environment structure.  To ensure
finite environment approximation, the metafunction must map to a
finite set of addresses (for a fixed program).  A simple finite
abstraction is $\alloc{\msto,\mvar} = 0$ for all $\msto$ and $\mvar$.
This abstraction maps all bindings to a single location, thus
conflating all bindings in a program.  Although highly imprecise, this
is a sound approximation, and in fact \emph{any} instantiation of
$\allocname$ is sound~\cite{dvanhorn:VanHorn2010Abstracting}.  A more refined abstraction is
$\alloc{\msto,\mvar} = \mvar$, which provides a finite abstraction of
environment structure similar to 0CFA in which multiple bindings of
the same variable are conflated.  To avoid approximation,
$\alloc{\msto,\mvar}$ should chose a fresh address not in $\msto$.
Consequently, the store maps all addresses to singleton sets of values
and the environment-based reduction semantics corresponds precisely with
the original.

\subsection{Bounding control structure}
 The remaining source of unbounded
structure stems from the control component of a program.  Bounding the
environment structure was achieved by (1) making substitutions
explicit as environments and (2) threading environments through a
store which could be bounded.  An analogous approach is taken for
control: (1) evaluation contexts are explicated as continuations and
(2) continuations are threaded through the store.

The resulting semantics is an abstract machine that
operates over a triple comprised of a closure, a store, and a
continuation.  Transitions take three forms: decomposition steps,
which search for the next redex and push continuations if needed; plug
steps which return a value to a context and pop continuations if
needed; and contraction steps, which implement the $\sstep$ notion of
reduction.

We write continuations $\mcont$ as single evaluation context frames with
embedded addresses representing (a pointer to) the surrounding context.
So for example $\sapp\mexp\maddr$ represents $\sapp\mexp\mctx$ where
$\maddr$ points to a continuation representing $\mctx$.
A simple decompose case is:
\[
\begin{array}{r@{\quad}c@{\quad}l@{}l}
\langle (\sapp\mexp\mexpo,\menv),\msto,\mcont\rangle
&\longmapsto &
\langle (\mexp,\menv),\msto \sqcup [\maddr\mapsto\mcont],\sapp\maddr{(\mexpo,\menv)}\rangle
\\
&&\mbox{where }\maddr = \alloc{\msto,\mcont}
\end{array}
\]
Notice that this transition searches for the next redex in the left
side of an application and allocates a pointer to the given context in
order to push on the argument to be evaluated later.  Similar to
variable binding, continuations are allocated using $\allocname$ and
joined in the store, allowing for multiple continuations to reside in
a single location.  The corresponding plug rule pops the current
continuation frame and non-deterministically chooses a continuation:
\[
\begin{array}{r@{\quad}c@{\quad}l@{}l}
\langle \mval,\msto,\sapp\maddr{(\mexpo,\menv)}\rangle
&\longmapsto &
\langle \sapp\mval{(\mexpo,\menv)},\msto,\mcont\rangle
\mbox{ where }\mcont \ni \msto(\maddr)
\end{array}
\]
The contraction rule simply applies the reduction relation on explicit
substitutions:
\[
\begin{array}{r@{\quad}c@{\quad}l@{}l}
\langle (\mexp,\menv),\msto,\mcont \rangle
&\longmapsto &
\langle (\mexpo,\menvo),\mstoo,\mcont \rangle\\
\multicolumn{3}{r@{}}{\mbox{ if } ((\mexp,\menv),\msto)\,\stdstep\,((\mexpo,\menvo),\mstoo)}
\end{array}
\]

To ensure a finite approximation of control, $\allocname$ must map to
a finite set of addresses.  A simple finite abstraction is
$\alloc{\mcont,\msto} = 0$.  A more refined finite abstraction of
control is to use a frame abstraction: $\alloc{\sapp\mexp\maddr,\msto}
= \sapp\mexp{[\;]}$ and likewise for other continuation forms.  To
avoid approximation, $\alloc{\mcont,\msto}$ should produce an address
not in $\msto$.

We have now restructured our semantics as a machine model with three
distinct points of control over approximation: basic operations,
environments, and control; the full definition of the machine is given
in appendix~\ref{full-machine}. We now establish the correspondence
between the previous reduction semantics and the machine model when no
approximation occurs.  Let $\stdstep_{\mathit{CESK}}$ denote the
machine transition relation under the exact interpretations of
$\widen$ and $\allocname$.  Let $\unload$ be the straightforward
recursive ``unload'' function that maps a closure and store to the
closed term it represents.
\begin{lemma}[Correspondence]
If $\mprg \multistdstep \mprgo$, then there exists $\mstate$ such that
$\langle\mprg,\varnothing,\varnothing\rangle
\multistdstep_{\mathit{CESK}} \mstate$ and $\unload(\mstate) =
\mprgo$.
\end{lemma}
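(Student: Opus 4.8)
The plan is to show the machine faithfully implements the reduction semantics by factoring each source reduction step into a burst of machine steps: a sequence of \emph{decompose} and \emph{plug} transitions that navigate to the redex, followed by a single \emph{contraction} transition. Since we work under the exact interpretations of $\widen$ and $\allocname$, every allocation yields a fresh address and the store maps each address to a singleton set; consequently $\unload$ is a well-defined total function on reachable states, and the decompose and plug transitions are forced — the only genuine nondeterminism is that already present in $\stdstep$ via $\delta$, \textsc{amb}, and abstract-value application, and it is confined to contraction steps.

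First I would establish a \textbf{refocusing lemma}: if $\mstate$ is reachable and $\unload(\mstate) = \mctx[\mexp]$ with $\mexp$ in evaluation position, then $\mstate \multistdstep_{\mathit{CESK}} \mstate'$ using decompose and plug steps alone, where $\mstate' = \langle c, \msto', \mcont\rangle$ with $c$ a closure representing $\mexp$ and $\mcont$ (threaded through $\msto'$) representing $\mctx$, and moreover $\unload(\mstate') = \mctx[\mexp]$. This is proved by induction on the evaluation context $\mctx$, checking that each context frame is matched by a decompose transition that pushes the corresponding continuation frame into the store, and that plug transitions undo this while preserving $\unload$; a symmetric observation — that decompose and plug steps never change $\unload$ — falls out of the same case analysis.

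Next I would prove a \textbf{contraction lemma} bridging explicit substitutions and ordinary substitution: whenever $\mexp \sstep \mexpo$ is an instance of the notion of reduction, a closure representing $\mexp$ contracts in one CESK step (the rule lifting $\stdstep$ to closures) to a state whose $\unload$ equals $\mctx[\mexpo]$. The content here is the standard adequacy of explicit substitutions — environment extension followed by the variable-lookup rule realizes capture-avoiding substitution, verified by a routine substitution lemma — together with the observation that every contraction of the term semantics (application, $\syntax{if}$, primitive operations, module reference, the contract-checking rules, and \syntax{havoc}) has a matching closure-level rule producing the same unloaded term.

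Combining the two lemmas, a single source step $\mctx[\mexp] \stdstep \mctx[\mexpo]$ is simulated by decompose and plug steps followed by one contraction, landing in a state that unloads to $\mctx[\mexpo]$ (no further plug steps are needed, since the post-contraction state already unloads correctly). The theorem then follows by induction on the length of $\mprg \multistdstep \mprgo$: the base case takes $\mstate = \langle\mprg,\varnothing,\varnothing\rangle$, for which $\unload(\mstate) = \mprg$; the inductive step applies the single-step simulation to the last reduction and the induction hypothesis to the prefix. The main obstacle is the refocusing lemma — pinning down the representation invariant so that the store-threaded continuation genuinely denotes the surrounding evaluation context at every intermediate state, and that $\unload$ is invariant throughout the search phase — after which everything else is routine case analysis.
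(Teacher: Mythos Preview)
Your proposal is sound and follows the standard refocusing/adequacy methodology for relating a reduction semantics to its derived abstract machine. The paper, however, does not actually supply a proof of this lemma: it is stated without proof in \S\ref{sec:machine}, and the appendix only treats Theorems~\ref{thm:soundness-machine} and~\ref{thm:decidability}, with a blanket remark that the machine-level proofs ``closely follow those given by \citet{dvanhorn:VanHorn2010Abstracting}.'' So there is no paper proof to compare against; the lemma is treated as a routine consequence of the abstracting-abstract-machines recipe.

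Your decomposition into a refocusing lemma (decompose/plug steps preserve $\unload$ and navigate to the redex) plus a contraction lemma (the explicit-substitution reduction realizes the substitution-based notion of reduction under $\unload$) is exactly the standard structure such a proof takes, and it is what the cited work establishes in the general case. The one place to be careful is that the initial configuration in the statement is $\langle\mprg,\varnothing,\varnothing\rangle$ with an \emph{empty} continuation component rather than $\kmt$; you should check that your representation invariant and $\unload$ definition accommodate this, or silently identify $\varnothing$ with $\kmt$ as the paper appears to do. Otherwise the sketch is complete.
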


We now relate any approximating variant of the machine to its
exact counterpart.  Let $\stdstep_{\widehat{\mathit{CESK}}}$ denote
the machine transition under any sound interpretation of $\widen$.  We
define an \emph{abstraction map} as a structural abstraction of the
state-space of the exact machine to its approximate counterpart.  The
key case is on stores:
\begin{align*}
  \alpha(\msto) &= \lambda \hat{\maddr} . \!\!\!\! \bigsqcup_{\alpha(\maddr) = \hat\maddr} \!\!\!\! \{\alpha(\msto(\maddr))\}
\end{align*}
The $\sqsubseteq$ relation is lifted to machine states as the natural
point-wise, element-wise, component-wise and member-wise lifting.

\begin{theorem}[Soundness]
\label{thm:soundness-machine}
If $\mstate \stdstep_{\mathit{CESK}} \mstate'$ and $\alpha(\mstate)
\sqsubseteq \hat\mstate'$, then there exists $\hat\mstate'$ such that
$\hat\mstate \stdstep_{\widehat{\mathit{CESK}}} \hat\mstate'$ and
$\alpha(\mstate') \sqsubseteq \hat\mstate'$.
\end{theorem}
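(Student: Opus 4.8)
The statement is a single-step forward simulation in the style of abstracting abstract machines~\cite{dvanhorn:VanHorn2010Abstracting}, so the plan is to proceed by case analysis on the exact transition $\mstate \stdstep_{\mathit{CESK}} \mstate'$ --- a decompose step, a plug step, or a contraction step --- and in each case to exhibit the mirrored abstract transition as the witness $\hat\mstate'$, checking that $\alpha(\mstate') \sqsubseteq \hat\mstate'$. (I read the hypothesis as $\alpha(\mstate) \sqsubseteq \hat\mstate$; the primed state there appears to be a typo.) Before the case analysis I would isolate the monotonicity facts the argument rests on: that store and continuation lookup respect abstraction, i.e.\ if $\alpha(\msto) \sqsubseteq \hat\msto$ and $\mval \in \msto(\maddr)$ then some $\hat\mval \in \hat\msto(\alpha(\maddr))$ has $\alpha(\mval) \sqsubseteq \hat\mval$, and likewise for continuation addresses; that the store join respects abstraction, i.e.\ $\alpha(\msto \sqcup [\maddr \mapsto \mval]) \sqsubseteq \hat\msto \sqcup [\alpha(\maddr) \mapsto \hat\mval]$ whenever the components are related; that $\widen$ is sound by hypothesis, so $\delta'$ inherits the monotonicity of $\delta$ already used for Theorem~\ref{thm:soundness}; that $\allocname$ on the abstract side may legitimately return $\alpha(\maddr)$ for any concrete choice $\maddr$, by soundness of arbitrary allocation; and that $\sqsubseteq$ is preserved by substitution and by the $\fc$, $\syntax{havoc}$, and demonic-context constructions, all already established in appendix~\ref{sec:proofs}.

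With these lemmas in hand the three cases become routine. For a decompose step the term component shrinks toward the next redex and at most one frame is pushed; the abstract machine makes the same structural move, choosing its allocation to agree with $\alpha$ of the concrete one, and $\alpha$ of the new store stays below the abstract store by the join lemma, the pushed frame on each side being the $\alpha$-image. For a plug step a continuation frame is read out of the store and reassembled with the returned value; the lookup lemma supplies a matching abstract frame, and the reassembled redex approximates the concrete one by compatibility of $\sqsubseteq$. For a contraction step the machine simply applies the closure-and-store version of the notion of reduction $\sstep$, so I would appeal directly to the one-step approximation lemma behind Theorem~\ref{thm:soundness} --- cases on the redex, using monotonicity of $\delta'$, soundness of $\widen$, preservation of $\sqsubseteq$ under substitution, and the $\syntax{havoc}$ completeness fact --- with the store side again discharged by the join lemma. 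In every case the witness is the state produced by the mirrored transition, and $\alpha(\mstate') \sqsubseteq \hat\mstate'$ is assembled component-wise.

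The main obstacle I expect is the interaction of nondeterministic allocation with the store join. Because $\alpha$ conflates many concrete addresses into one abstract address, a single abstract location accumulates the $\alpha$-images of several concrete values or continuations, so I must (a) always pick the abstract transition whose allocation matches $\alpha$ applied to the concrete allocation, justified by the ``any allocation is sound'' property, and (b) ensure that a later plug step reading back from that shared location still yields \emph{some} continuation above the concrete one --- which is exactly the lookup lemma, but it must be proved for the $\alpha$-abstracted store rather than merely assumed of an arbitrary sound over-approximation. Everything else is bookkeeping, and once the single-step statement is in place the iterated form needed for program verification follows by an immediate induction on the length of the exact run, since at each stage it suffices to exhibit \emph{one} matching abstract step.
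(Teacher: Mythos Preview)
Your proposal is correct and follows essentially the same approach as the paper: case analysis on the concrete transition, with the deterministic cases discharged by direct calculation and the nondeterministic cases (lookup from the store) discharged by the store-approximation assumption $\alpha(\msto)\sqsubseteq\hat\msto$. The paper's own proof is a three-sentence sketch that says exactly this and no more; you have simply unpacked the monotonicity lemmas and the allocation bookkeeping that the paper leaves implicit, and you correctly flagged the typo in the hypothesis.
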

We have now established any instantiation of the machine is a sound
approximation to the exact machine, which in turn correspond to the
original reduction semantics.  Furthermore, we can prove decidability
of the semantics for finite instantiations of $\widen$ and $\allocname$:
\begin{theorem}[Decidability]
\label{thm:decidability}
If $\widen$ and $\allocname$ have finite range for a
program $\mprg$, then $\langle\mprg,\varnothing,\varnothing\rangle
\multistdstep_{\widehat{\mathit{CESK}}} \mstate$ is decidable for any $\mstate$.
\end{theorem}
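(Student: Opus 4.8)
The plan is the standard ``abstracting abstract machines'' recipe for decidability: establish that, for a fixed program $\mprg$, (i) the set of machine states reachable from $\langle\mprg,\varnothing,\varnothing\rangle$ is finite, and (ii) the transition relation $\stdstep_{\widehat{\mathit{CESK}}}$ is computable, i.e.\ from any state one can effectively enumerate the finite set of its successors. Given (i) and (ii), the reachability question $\langle\mprg,\varnothing,\varnothing\rangle \multistdstep_{\widehat{\mathit{CESK}}} \mstate$ is decided by a worklist exploration: start from $\langle\mprg,\varnothing,\varnothing\rangle$, repeatedly add all successors of states already collected, and stop when no new state appears; by (ii) each round is effective and by (i) the process reaches a fixpoint after finitely many rounds, at which point we test whether $\mstate$ is among the collected states.

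The heart of the argument is the finiteness of the reachable state space, which I would show component by component. Machine states are triples of a closure, a store, and a continuation. First I would bound the set of \emph{expressions} and \emph{contracts} that can ever occur: they are the subterms and sub-contracts of $\mprg$ together with the finitely many additional terms produced by the metafunctions in the reduction rules --- the $\eta$-expansions of the (finitely many, closed) contracts occurring in $\mprg$, the images of $\fc$ on those contracts, the fixed $\syntax{havoc}$ term and the \textsc{amb} expansions, and the contract checks inserted at module references --- where \emph{productivity} of recursive contracts guarantees that unrolling $\srecc\mvar\mcon$ during a check, and the splitting/unrolling of abstract values by the ``improving precision'' rules, cannot manufacture unboundedly many distinct contracts. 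Since $\widen$ has finite range, the set of base prevalues is finite; closures range over (finite expressions) $\times$ (environments); environments are finite maps from the finitely many program variables to $\allocname$'s finite address set, hence finite; remembered contract sets $\mconset$ are subsets of the finite contract universe, hence finite; so the set of values $\mval = \with\mpreval\mconset$ is finite. A store is then a finite map from the finite address set to the (finite) powerset of values --- note $\sqcup$ only grows these bounded sets --- so there are finitely many stores; continuations are single frames built from finite expressions and closures with embedded addresses drawn from the finite address set, so finitely many. The product of these finite sets is finite, and so is any subset of it, in particular the reachable set.

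For computability of the transition relation I would check that every side condition and metafunction in the machine is decidable or effectively computable on finite data: $\delta'$ is a finite case analysis postcomposed with the computable $\widen$; the \emph{proves}, \emph{refutes}, and \emph{neither} judgments between values and contracts are defined by structural recursion over finite terms and are therefore decidable; $\allocname$, $\fc$, \textsc{amb}, substitution, environment and store extension, and store lookup are all plainly computable; and since every decompose, plug, and contraction rule inspects a bounded redex, the set of rules applicable to a given state --- and hence its set of successors --- is finite and enumerable. Combining this with the finiteness of the state space licenses the worklist algorithm above, giving decidability.

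I expect the main obstacle to be the syntactic bounding in step (i): the reductions are \emph{not} size--non-increasing, since $\eta$-expansion of higher-order contracts, unrolling of recursive contracts both in checks and in the precision rules, and the $\fc$ compilation all enlarge terms, so a naive ``subterms of $\mprg$'' count does not suffice. The fix is to lean carefully on productivity together with closedness and finiteness of the program's contracts to argue that only finitely many distinct contracts (and hence expressions) are ever reachable --- most cleanly by working with a canonical \emph{folded} representation of recursive contracts and bounding unfoldings up to the guarding function- or pair-constructor. Once that finite ``contract universe'' is pinned down, the remaining finiteness facts (values, stores, continuations) and the computability of the transition relation are routine. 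The full proof for the machine of appendix~\ref{full-machine} would carry out this analysis against its complete rule set.
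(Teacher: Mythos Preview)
Your proposal is correct and follows the same route as the paper: argue that the reachable state space is finite (given finite \allocname\ and \widen) and that the transition relation is effectively computable, hence reachability is decidable by exhaustive exploration. The paper's own proof is two sentences (``the state-space of the machine is non-recursive with finite sets at the leaves \dots hence reachability is decidable since the abstract state-space is finite''); you have simply spelled out the component-wise finiteness argument and the computability of successors that the paper leaves implicit, and you correctly flag the term-growth subtleties (contract unrolling, $\fc$, $\eta$-expansion) that the paper's terse argument glosses over.
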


The proofs of these theorems closely follow those given by
\citet{dvanhorn:VanHorn2010Abstracting} and are deferred to appendix~\ref{sec:proofs}.

\section{Implementation}
\label{sec:implementation}

\begin{figure}
\hspace{-4mm}
\includegraphics[width=1.1\columnwidth]{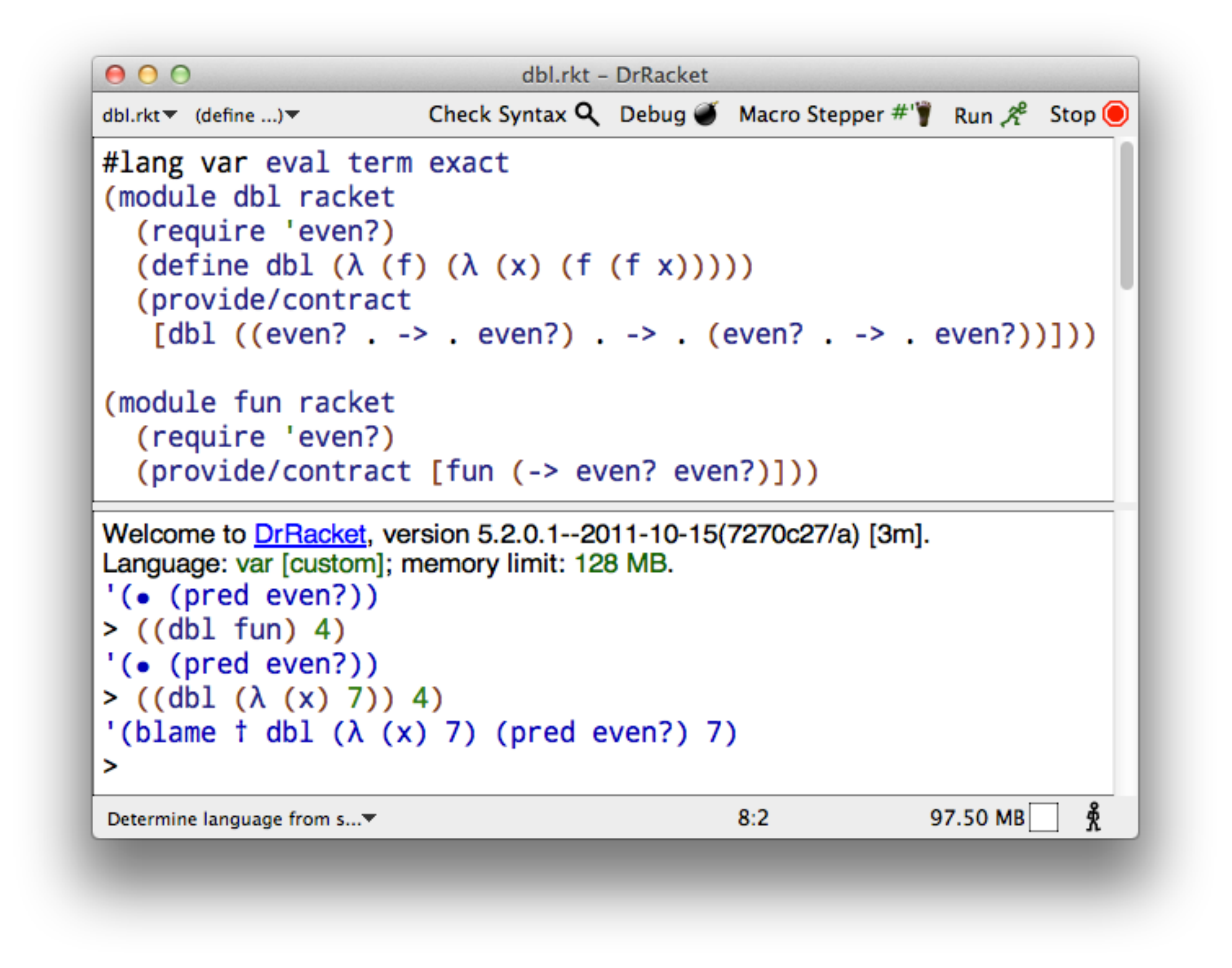}
\vspace{-6mm}
\caption{Interactive program verification environment}
\label{fig:screenshot}
\end{figure}

To validate our approach, we have implemented a prototype interactive
program verification environment, as seen in
figure~\ref{fig:screenshot}.
We can take the example from section~\ref{sec:tools}, define
the relevant modules, and explore the behavior of different choices
for the main expression.

Programs are written with the {\tt \#lang var \emph{<options>}}
header, where {\tt \emph{<options>}} range over a visualization mode:
{\tt trace}, {\tt step}, or {\tt eval}; a model mode: {\tt term} or
{\tt machine}; and an approximation mode: {\tt approx}, {\tt exact},
or {\tt user}.  Following the header, programs are written in a subset
of Racket, consisting of a series of module definitions and a
top-level expression.

The visualization mode controls how the state space is explored. The
choices are simply running the program to completion with a
read-eval-print loop, visualizing a directed graph of the state space
labeled by transitions, or with an interactive step-by-step
exploration.  The model mode selects whether to use the {\tt term} or
the {\tt machine} as the underlying model of computation.

Finally, the approximation mode selects what, if any, approximation
should be used.  The {\tt exact} mode uses no approximation;
allocation always returns fresh addresses and no widening is used for
base values.  The {\tt approx} mode uses a default mode of
approximation that has proved useful in verifying programs (discussed
below).  The {\tt user} mode allows the user to provide their own
custom metafunctions for address allocation and widening.

\subsection{Implementation extensions}

Our prototype includes significant extensions to the system as
described above.

First, we make numerous extensions in order to
verify existing Racket programs.  For example, modules are extended to
include multiple definitions and functions may accept zero or more
arguments.  The latter complicates the reduction relation as new
possibilities arise for errors due to arity mismatches.
Second, we add additional base values and operations to the
  model to support more realistic programs.
  Third, we make the implementation of contract checking and
  reduction more sophisticated, to reduce complexity in simple cases,
  improving running time and simplifying visualization.
Fourth, we implement several techniques to reduce the size of the
state space explored in practice, including abstract garbage
  collection~\cite{dvanhorn:Might:2006:GammaCFA}.  Abstract GC
  enables naive allocation strategies to perform with high precision.
  Additionally, we widen contracted recursive functions to their
  contracts on recursive calls; this implements a form of induction
  that is highly effective at increasing convergence.
Fifth, we add simpler rules to model non-recursive functions and
  non-dependent contracts.  This brings the model closer to
  programmers expectation of the semantics of the language, and
  simplifies visualizations.
Sixth, we include several more contract combinators, such as
  {\tt and/c} for contract conjunction; {\tt atom/c} for expressing
  equality with atomic values; {\tt one-of/c} for finite enumerations;
  {\tt struct/c} for structures; and {\tt listof} and {\tt
    non-empty-listof} for lists of values..

Finally, we provide richer blame information when programs go
  wrong, as can be seen in the screen shot of
  figure~\ref{fig:screenshot}.  Our system reports the full complement of
  information available in Racket's production contract library, which
  reports the failure of \texttt{dbl} as:
\begin{alltt}
> ((dbl (\(\uplambda\!\) (x) 7)) 4)
\emph{top-level broke (even? \(\rightarrow\) even?) \(\rightarrow\)}
\emph{\qquad\qquad\qquad\qquad(even? \(\rightarrow\) even?)
on dbl; expected <even?>, given: 7}
\end{alltt}

\noindent
Our prototype is available  at\;
\href{https://github.com/samth/var/}{\texttt{github.com/samth/var}}\;.

\subsection{Verified examples}
\label{sec:examples}

We have verified a number of example programs, which fall into three
categories:
\begin{itemize}
\item small programs with rich contracts such as the example of {\tt
    insertion-sort}  from the introduction, where verifying contract
  correctness is close to full functional verification;
\item tricky-to-verify programs with simple contracts such
  as the tautology checker described by Wright and Cartwright, where
  our tool proves not only that the function satisfies its
  \texttt{boolean? -> boolean?} contract, but also that it has no
  internal run-time type errors; and
\item several larger graphical, interactive video games developed
  according to the program-by-design~\cite{dvanhorn:Felleisen2001How} approach that stresses
  data- and contract-driven program design.
\end{itemize}

In the third category, we were able to
automatically verify contract correctness for non-trivial existing
programs, including two, Snake and Tetris, developed
for our undergraduate programming course.  The programs make use of
higher-order functions such as folds and maps and construct anonymous
(upward) functions.  Their contracts include finite enumerations,
structures, and recursive (ad-hoc) unions.  Here is the key
contract definition for the Snake game:
\begin{verbatim}
(define-contract snake/c
  (struct/c snake
    (one-of/c 'up 'down 'left 'right)
    (non-empty-listof
      (struct/c posn nat? nat?))))
\end{verbatim}
We are able to automatically verify these stated invariants, such
as that snakes have at least one segment and that it moves in
one of four possible directions.

For these examples, we found that simply widening the results of recursive
calls with abstract arguments is sufficient to ensure convergence in
the semantics given an appropriate fine-grained module decomposition.
All of our verified examples are available with our implementation.

\section{Related work}
\label{sec:related-work}

The analysis and verification of programs and specifications has been
a research topic for half a century; we survey only closely related
work here.

\paragraph{Symbolic execution}
Symbolic execution~\cite{samth:symexec} is the idea of running a
program, but with abstract inputs.  The technique can be used either
for testing, to avoid the need to specify certain test data, or for
verification and analysis.  Over the past 35 years, it has been used
for numerous testing and verification tasks. There has been a
particular upsurge in interest in the last ten
years~\cite{samth:Cadar2006EXE,samth:klee08}, as high-performance SAT and SMT
solvers have made it possible to eliminate infeasible paths by
checking large sets of constraints.

Most approaches to symbolic execution focus on abstracting first order
data such as numbers, typically with constraints such as inequalities
on the values.  In this paper, we present an approach to symbolic
execution based on contracts as symbols, which scales
straightforwardly to higher-order values.  Despite this focus on
higher-order values, the remembered contracts maintained by our system
let us constrain symbolic execution to feasible evaluations; using an
external solver to decide relationships such as $\proves\mval\mcon$ is
an important area of future work.

Recently, under the heading of \emph{concolic
  execution}~\cite{samth:cute-concolic,
samth:dart-concolic}, symbolic
execution has been paired with test generation to analyze software
more effectively.  We believe that we could effectively use our system
as the framework for such a system, by nondeterminstically reducing
abstract values to concrete instances.

The only work on higher-order symbolic execution that we are aware of
is by \citet{samth:thiemann-symexp} on eliminating redundant pattern
matches.  \citeauthor{samth:thiemann-symexp} considers only a very
restricted form of symbols: named functions partially applied to
arguments, constructors, and a top value.  This approximation is only
sound for a purely functional language, and thus while we could
incorporate it into our current symbolic model of Racket, further
extensions to handle mutable state rule out the technique.  It is
unclear whether redundancy elimination would benefit from contracts as
symbol.

\paragraph{Verification of first-order contracts}

Over the past ten years, there has been enormous success verifying
modular first-order programs, as demonstrated by tools such as the
SLAM and Spec\# projects~\cite{samth:slam-cacm,dvanhorn:BarnettEA10,
  dvanhorn:Fahndrich2011Static}.  These tools typically operate by
abstracting first-order programs in languages such as C to simpler
systems such as automata or boolean programs, then model-checking the
results for violations of specified contracts.

However, these approaches do not attempt to handle
of the higher-order features of languages such as Racket, Python,
Scala, and Haskell.  For instance, the boolean program abstraction
employed by SLAM~\cite{samth:slam-pldi01} is inherently
first-order: variables can take only boolean values.  Our system, in
contrast, scales to higher-order language features.

Despite the fundamental difference, there are important similarities
between this work and ours. The systems all employ nondeterminism
extensively to reason about unknown behavior, and abstract the
environment by allowing it to take arbitrary actions; as in the
\texttt{havoc} statement in Boogie~\cite{dvanhorn:Fahndrich2011Static}
which we generalize to the \textsf{havoc} function for placing
higher-order functions in an arbitrary context.

Additionally, we believe that the techniques used in these existing
first-order tools could improve
precision for first-order predicate checks in our system; exploring
this is an important avenue for future work.

\paragraph{Verification of higher-order contracts}
The most closely related work to ours is the modular set-based
analysis based on contracts of
\citeauthor{dvanhorn:Meunier2006Modular}~\cite{dvanhorn:Meunier2006Modular,dvanhorn:Meunier2006Diss}
and the static contract checking of
\citeauthor{dvanhorn:Xu2009Static}~\cite{dvanhorn:Xu2009Static,samth:Xu2012:Hybrid}.

\textbf{\citeauthor{dvanhorn:Meunier2006Modular}} take a program analysis approach, generating set
constraints describing the flow of values through the program text.
When solved, the analysis maps source labels to sets of abstract
values to which that expression may evaluate.  Meunier's system is
more limited than ours in several significant ways.

First, the set-based analysis is defined as a separate semantics,
which must be manually proved to correspond to the concrete semantics.
This proof requires substantial support from the reduction semantics,
making it significantly and artificially more complex by carrying
additional information  used only in the proof.  Despite this, the
system is unsound, since it lacks an analogue of \syntax{havoc}.
This unsoundness has been verified in Meunier's prototype.

Second, while our semantics allows the programmer to choose how much
to make opaque and how much to make concrete, Meunier's system always
treats the entire rest of the program opaquely from the perspective of
each module.

Third, our language of contracts is much more expressive: we consider
disjunction and conjunction of contracts, dependent function contracts,
and data structure contracts.  Our ability to statically reason about
contract checks is significantly greater---Meunier's system
includes only the simplest of our rules for $\proves\mval\mcon$.

Finally, Meunier approximates conditionals by the union of its
branches' approximation; the test is ignored.
This seemingly minor point becomes significant when considering
predicate contracts.  Since predicate contracts reduce to
conditionals, this effectively  approximates all predicates as both
holding and not holding, and thus \emph{all predicate contracts may
  both fail and succeed}.

\textbf{\citeauthor{dvanhorn:Xu2009Static}} \cite{dvanhorn:Xu2009Static} describe a static
contract verification system for Haskell.  Their approach is to
compile contract checks into the program, using a transformation
modeled on \citet{dvanhorn:Findler2002Contracts}, simplify the program
using the GHC optimizer, and examine the result to see if any contract
checks are left in the residual program.  In subsequent work,
\citet{samth:Xu2012:Hybrid} applies this approach to OCaml, providing
a formal account of the simplifier employed, and extending
simplification by using an SMT solver as an oracle for some
simplification steps.  In both systems, if not all contract checks are
eliminated by simplification, the system reports them as potentially
failing.

Our approach extends that of \citeauthor{dvanhorn:Xu2009Static} in
three crucial ways.  First, our symbolic execution-based approach
allows us to consider full executions of programs, rather than just a
simplification step.  Second, \citeauthor{dvanhorn:Xu2009Static}
considers a significantly restricted contract language, omitting
conjunction, disjunction, and recursive contracts, as well as
contracts that may not terminate, may fail, or are include calls to
unknown functions.  As we saw in section~\ref{sec:core-racket}, these
extensions add significant complexity and expressiveness to the
system.  Third, as with
\citeauthor{dvanhorn:Meunier2006Modular}'s work, the user has no control over what is precisely
analyzed; indeed, \citeauthor{dvanhorn:Xu2009Static} \emph{inline} all non-contracted
functions.

\citet{samth:mcallester-contracts} provide a semantic model of
contracts which includes a definition of when a term is
$\mathbf{Safe}$, which is when it can never be caused to produce
blame.  We use a related technique to verify that modules cannot be
blamed, by constructing the \syntax{havoc} context.  However, we do
not attempt to construct a semantic model of contracts; instead we
merely approximate the run-time behaviors of programs with contracts.

\paragraph{Abstract interpretation}
Abstract interpretation provides a general theory of semantic
approximation \cite{dvanhorn:Cousot:1977:AI} that relates concrete
semantics to an abstract semantics that interprets programs over a domain of
abstract values.  Our approach is very much an instance of abstract
interpretation.  The reachable state semantics of CPCF is our
concrete semantics, with the semantics of SCPCF as an abstract interpretation
defined over the union of concrete values and abstract values
represented as sets of contracts.  In a first-order setting, contracts
have been used as abstract values \cite{dvanhorn:Fahndrich2011Static}.
Our work applies this idea to behavioral contracts and higher-order
programs.

\paragraph{Combining expressions with specifications}
 Giving semantics to programs combined with specifications has a long
 history in the setting of program
 refinements~\cite{dvanhorn:RalphJohan1998Refinement}.
Our key
 innovations are (a) treating specifications as abstract values,
 rather than as programs in a more abstract language, (b) applying abstract
 reduction to modular program analysis, as opposed to program
 derivation or by-hand verification, and (c) the use of higher-order contracts
 as specifications.

Type inference and checking can be recast as a reduction
semantics~\cite{dvanhorn:esop:kmf07}, and doing so
bears a conceptual resemblance to our contracts-as-values reduction.  The
principal difference is that Kuan et al.~are concerned with producing
a \emph{type}, and so all expressions are reduced to types
before being combined with other types.  Instead, we are concerned with
\emph{values}, and thus contracts are maintained as
specification values, but concrete values are not abstracted away.

Also related to our specification-as-values notion of reduction is
Reppy's~\cite{dvanhorn:Reppy2006Typesensitive} variant of 0CFA that
uses ``a more refined representation of approximate values'', namely
types.  The analysis is modular in the sense that all module imports
are approximated by their type, whereas our approach allows more
refined analysis whenever components are not opaque.  Reppy's analysis
can be considered as an instance of our framework by applying the
techniques of section~\ref{sec:machine} and thus could be derived from
the semantics of the language rather than requiring custom design.

\paragraph{Modular program analysis}
\citet{dvanhorn:Shivers:1991:CFA},
\citet{dvanhorn:Serrano1995Control}, and \citet{dvanhorn:ashley-dybvig-toplas98} address modularity (in
the sense of open-world assumptions of missing program components) by
incorporating a notion of an \emph{external} or \emph{undefined}
value, which is analogous to always using the abstract value $\opaque$
for unknown modules, and therefore allowing more
descriptive contracts can be seen as a refinement of the abstraction
on missing program components.

Another sense of the words \emph{modular} and \emph{compositional} is that program components
can be analyzed in isolation and whole programs can be analyzed by
combining these component-wise analysis
results. \citet{dvanhorn:Flanagan1997Effective} presents a set-based
analysis in this style for analyzing untyped programs, with many
similar goals to ours, but without considering specifications and
requiring the whole program before the final analysis is available.    Banerjee and
Jensen~\cite{dvanhorn:banerjee-jensen-mscs03,dvanhorn:banerjee-icfp97}
and
\citet{dvanhorn:Lee2002Proof} take similar
approaches to  type-based and 0CFA-style analyses, respectively.

\paragraph{Other approaches to higher-order verification}

\citeauthor{dvanhorn:Kobayashi2011Predicate}
\cite{dvanhorn:Kobayashi2011Predicate,samth:kobayashi-popl09} have recently
proposed approaches to verification of temporal properties of
higher-order programs based on model checking.  This work differs from
ours in four important respects.  First, it addresses temporal
properties while we focus on behavioral properties.  Second, it uses
externally-provided specifications, whereas we use contracts, which
programmers already add to their programs.  Third, and most
importantly, our system handles opaque components, while
model-checking approaches are whole-program.  Fourth, it operates on
higher-order recursion schemes, a computational model with less power
than CPCF, the basis of our development.

\citet{samth:liquid} present Liquid Types, an extension to the type
system of OCaml which incorporates dependent refinement types, and
automatically discharges the obligations using a solver.  This
naturally supports the encoding of some uses of contracts, but
restricts the language of refinements to make proof obligations
decidable.  We believe that a combination of our semantics with an
extension to use such a solver to decide the $\proves\mval\mcon$
relation would increase the precision and effectiveness of our
system.

\section{Conclusion}

We have presented a technique for verifying modular
higher-order programs with behavioral software contracts.  Contracts
are a powerful specification mechanism that are already used in
existing languages.  We have shown that by using contracts as abstract
values that approximate the behavior of omitted components, a
reduction semantics for contracts becomes a verification system.
Further, we can scale this system both to a rich contract language,
allowing expressive specifications, as well as to a computable
approximation for automatic verification derived directly from our
semantics.
Our central lesson is that abstract reduction semantics can turn the
semantics of a higher-order programming language with executable
specifications into a symbolic executor and modular verifier for those
specifications.



\paragraph{Acknowledgments:}
We are grateful to Phillipe Meunier for discussions of his prior work
and providing code for the prototype implementation of his system; to
Casey Klein for help with Redex; and to Christos Dimoulas for
discussions and advice.

\bibliographystyle{plainnat}
\bibliography{dvh-bibliography,sth-bibliography,greenberg}

\iffull

\clearpage
\appendix
\section{Auxiliary definitions}

This section presents the full definitions of the type systems,
metafunctions and relations, and machine transitions from the earlier
sections of the paper.

\subsection{Types for PCF with Contracts}
\label{sec:cpcf-types}

The type system for CPCF is entirely conventional and taken from \citet{dvanhorn:Dimoulas2011Contract}.

\begin{figure}[h]
\begin{displayfig}{CPCF Type System}
\begin{mathpar}
\inferrule{ }
          {\Gamma\vdash\mnum:\stypnum}

\inferrule{ }
          {\Gamma\vdash\strue:\stypbool}

\inferrule{ }
          {\Gamma\vdash\sfalse:\stypbool}

\inferrule{ }
          {\Gamma\vdash\szero:\starr\stypnum\stypbool}

\inferrule{ }
          {\Gamma\vdash\struep:\starr\stypbool\stypbool}

\inferrule{ }
          {\Gamma\vdash\sblm\mmodvar\mmodvaro:\mtyp}

\inferrule{\Gamma(\mvar) = \mtyp}
          {\Gamma\vdash\mvar:\mtyp}

\inferrule{\Gamma\vdash\mexp:\starr\mtyp{\mtyp'}
  \\
  \Gamma\vdash\mexpo:\mtyp}
  {\Gamma\vdash\sapp\mexp\mexpo:\mtyp'}

\inferrule{\Gamma;\mvar:\mtyp\vdash\mexp:\mtyp'}
          {\Gamma\vdash\stlam\mvar\mtyp\mexp:\starr\mtyp{\mtyp'}}

\inferrule{\Gamma;\mvar:\mtyp\vdash\mexp:\mtyp}
          {\Gamma\vdash\strec\mvar\mtyp\mexp}

\inferrule{\Gamma\vdash\mexp:\stypbool\\
  \Gamma\vdash\mexp_1:\mtyp\\
  \Gamma\vdash\mexp_2:\mtyp}
          {\Gamma\vdash\sif\mexp{\mexp_1}{\mexp_2}:\mtyp}

\inferrule{\Gamma\vdash\mcon:\stcon\mtyp\\
  \Gamma\vdash\mexp:\mtyp}
          {\Gamma\vdash\achksimple\mcon\mexp:\mtyp}

\inferrule{\Gamma\vdash\mexp:\starr\mtyp\stypbool}
          {\Gamma\vdash\sflat\mexp:\stcon\mtyp}

\inferrule{\Gamma\vdash\mcon:\stcon\mtyp\\
  \Gamma\vdash\mcono:\stcon{\mtyp'}}
          {\Gamma\vdash\stdep\mcon\mtyp\mvar\mcono:\stcon{\starr\mtyp{\mtyp'}}}
\end{mathpar}
\end{displayfig}
\end{figure}

\subsection{Types for Symbolic PCF with Contracts}
\label{sec:scpcf-types}

The type system extends straightforwardly to handle abstract values
labeled with their types.

\begin{display}{SCPCF Type System}
\begin{mathpar}
\inferrule{ }
          {\Gamma\vdash\topaque:\mtyp}

\inferrule{\Gamma\vdash:\mpreval:\mtyp \\
  \mcon\in\mconset\implies \Gamma\vdash\mcon:\stcon\mtyp}
          {\Gamma\vdash\with\mpreval\mconset:\mtyp}

\end{mathpar}
\end{display}

\subsection{Operations on concrete values}

The following rules define $\absdelta$ for concrete values, we refer
to the subset of $\absdelta$ that relates only concrete values as
$\plaindelta$.  We assume $\mval$, $\mvalo$ are concrete here.


\begin{figure}
\begin{displayfig}{Concrete Operations}
\begin{align*}
(\ssucc,\mnum,\mnum+1) &\in\plaindelta\\
(\scar,\vcons\mvalo\mval,\mvalo) &\in\plaindelta\\
(\scdr,\vcons\mvalo\mval,\mval) &\in\plaindelta\\
(\splus,\mnum,\mnumo,\mnum+\mnumo) &\in\plaindelta\\
(\sequalp,\mnum,\mnum,\strue) &\in \plaindelta \\
\mnum\not=\mnumo \implies (\sequalp,\mnum,\mnumo,\sfalse) &\in \plaindelta\\
(\mathsf{cons},\mvalo,\mval,\vcons\mvalo\mval)&\in \plaindelta\\
(\snatp,\mnum,\strue) &\in \plaindelta \\
\mval\not\in\mathcal{N}\implies(\snatp,\mval,\sfalse) &\in \plaindelta\\
\mval\in\{\strue,\sfalse\} \implies (\sboolp,\mval,\strue) &\in\plaindelta\\
\mval\not\in\{\strue,\sfalse\} \implies (\sboolp,\mval,\sfalse) &\in\plaindelta\\
(\semptyp,\sempty,\strue) &\in\plaindelta\\
\mval\neq\sempty \implies (\semptyp,\mval,\sfalse) &\in\plaindelta\\
(\sconsp,\vcons{\mvalo}{\mval},\strue) &\in\plaindelta \\
\mval' \not=\vcons{\mvalo}{\mval} \implies (\sconsp,\mval',\sfalse) &\in \plaindelta\\
(\sprocp,\sreclam\mvaro\mvar\mexp,\strue) &\in\plaindelta\\
\mval\neq\sreclam\mvaro\mvar\mexp\implies (\sprocp,\mval,\sfalse) &\in\plaindelta\\
(\sfalsep,\sfalse,\strue) &\in \plaindelta \\
\mval\not=\sfalse\implies (\sfalsep,\mval,\sfalse) &\in \plaindelta
\end{align*}
\end{displayfig}
\end{figure}

(We have omitted the rules producing blame for arity mismatch and undefined cases.)

\subsection{Operations on symbolic values}
\label{sec:absdelta}

The defintion of $\absdelta$ on symbolic values is presented in figure~\ref{fig:absdelta-full}.
We assume $\mval$, $\mvalo$ are abstract.  Further relations on
abstract values are presented in figures \ref{fig:contract-rel}, \ref{fig:more-more-prove} and \ref{fig:more-prove}.

\begin{figure}
\begin{display}[$\proves\mval\moppred$ \emph{and} $\refutes\mval\moppred$]{Value proves or refutes base predicate}
\begin{mathpar}
\inferrule{\plaindelta(\moppred,\mval) \ni \strue}
          {\proves\mval\moppred}

\inferrule{\proves\mcon\moppred}
          {\proves{\with\mval{\mconset\cup \{ \mcon \}}}\moppred}
\\
\inferrule{\plaindelta(\moppred,\mval) \ni \sfalse}
          {\refutes\mval\moppred}

\inferrule{\refutes\mcon\moppred}
          {\refutes{\with\mval{\mconset\cup \{ \mcon \}}}\moppred}
\end{mathpar}
\end{display}
\caption{Provability relations}
\label{fig:more-more-prove}
\end{figure}

\subsection{Machine}

The full CESK machine is given in figure \ref{fig:full-machine}.
\label{full-machine}

\begin{figure}[t]
\begin{display}{}
\begin{align*}
\proves\mval\moppred\implies (\moppred,\mval,\strue) &\in \absdelta\\
\refutes\mval\moppred\implies (\moppred,\mval,\sfalse) &\in \absdelta\\
\ambig\mval\moppred
\implies (\moppred,\mval,\with\opaque{\liftpred\sboolp}) &\in \absdelta\\
\proves\mval\snatp\implies(\ssucc,\mval,\with\opaque{\liftpred\snatp}) &\in \absdelta\\
\refutes\mval\snatp\implies(\ssucc^\mlab,\mval,\ablm\mlab\ssucc\mval\lambda\mval)
&\in\absdelta
\\
\ambig\mval\snatp\implies(\ssucc,\mval,\with\opaque{\liftpred\snatp})&\in\absdelta\\
\wedge\ (\ssucc^\mlab,\mval,\ablm\mlab\ssucc\mval\lambda\mval)&\in\absdelta
\\
\proves\mval\sconsp\implies(\scar,\mval,\projleft(\mval)) &\in \absdelta
\\
\refutes\mval\sconsp\implies(\scar^\mlab,\mval,\ablm\mlab\scar\mval\lambda\mval)
&\in \absdelta
\\
\ambig\mval\sconsp\implies (\scar,\mval,\projleft(\mval)) &\in\absdelta
\\
\wedge\ (\scar^\mlab,\mval,\ablm\mlab\scar\mval\lambda\mval)
 &\in\absdelta
\\
\proves\mval\sconsp\implies(\scdr,\mval,\projright(\mval)) &\in \absdelta
\\
\refutes\mval\sconsp\implies(\scdr^\mlab,\mval,\ablm\mlab\scdr\mval\lambda\mval)
&\in \absdelta
\\
\ambig\mval\sconsp\implies (\scdr,\mval,\projright(\mval)) &\in\absdelta
\\
\wedge\ (\scdr^\mlab,\mval,\ablm\mlab\scdr\mval\lambda\mval)
 &\in\absdelta
\\
\proves\mval\snatp \wedge \proves\mvalo\snatp
\implies\\
(\splus,\mval,\mvalo,\with\opaque{\liftpred\snatp}) &\in \absdelta
\\
\refutes\mval\snatp 
\implies
(\splus,\mval,\mvalo,\ablm\mlab\splus\mval\lambda\mval)
&\in\absdelta
\\
\refutes\mvalo\snatp 
\implies
(\splus,\mval,\mvalo,\ablm\mlab\splus\mvalo\lambda\mvalo)
&\in\absdelta
\\
\ambig\mval\snatp \wedge \ambig\mvalo\snatp
\implies\\
(\splus,\mval,\mvalo,\with\opaque{\liftpred\snatp}) 
&\in \absdelta
\\
\ambig\mval\snatp \implies
(\splus,\mval,\mvalo,\ablm\mlab\splus\mval\lambda\mval)
&\in\absdelta
\\
\ambig\mvalo\snatp \implies
(\splus,\mval,\mvalo,\ablm\mlab\splus\mvalo\lambda\mvalo)
&\in\absdelta
\\
\proves\mval\snatp \wedge \proves\mvalo\snatp
\implies\\
(\sequalp,\mval,\mvalo,\with\opaque{\liftpred\sboolp}) &\in \absdelta
\\
\refutes\mval\snatp 
\implies
(\sequalp,\mval,\mvalo,\ablm\mlab\sequalp\mval\lambda\mval)
&\in\absdelta
\\
\refutes\mvalo\snatp 
\implies
(\sequalp,\mval,\mvalo,\ablm\mlab\sequalp\mvalo\lambda\mvalo)
&\in\absdelta
\\
\ambig\mval\snatp \wedge \ambig\mvalo\snatp
\implies\\
(\sequalp,\mval,\mvalo,\with\opaque{\liftpred\sboolp}) 
&\in \absdelta
\\
\ambig\mval\snatp \implies
(\sequalp,\mval,\mvalo,\ablm\mlab\sequalp\mval\lambda\mval)
&\in\absdelta
\\
\ambig{\mval_1}\snatp \implies
(\sequalp,\mval,\mvalo,\ablm\mlab\sequalp\mvalo\lambda\mvalo)
&\in\absdelta
\\
(\sconsop,\mval,\mvalo,\vcons{\mval}{\mvalo})
&\in\absdelta
\end{align*}
\end{display}
\caption{Basic operations on abstract values}
\label{fig:absdelta-full}
\end{figure}

\begin{figure}
\begin{display}[$\proves\mcon\moppred$ \emph{and} $\refutes\mcon\moppred$]{Contract proves or refutes base predicate}
\begin{mathpar}
\inferrule{}
          {\proves{\liftpred\sfalsep}\sboolp}

\inferrule{}
          {\proves{\sconsc\mcon\mcono}\sconsp}

\inferrule{}
          {\proves{\sdep\mcon\mvar\mcono}\sprocp}

\inferrule{}
          {\proves{\liftpred\moppred}\moppred}          

\inferrule{\proves\mcon\moppred \\ \proves\mcono\moppred}
          {\proves{\sorc\mcon\mcono}\moppred}

\inferrule{\proves\mcon\moppred \mbox{ or } \proves\mcono\moppred}
          {\proves{\sandc\mcon\mcono}\moppred}
\\
\inferrule{\moppred \neq \sprocp}
            {\refutes{\sdep\mcon\mvar\mcono}\moppred}

\inferrule{\moppred \neq \sconsp}
          {\refutes{\sconsc\mcon\mcono}\moppred}

\inferrule{\refutes\mcon\moppred \\ \refutes\mcono\moppred}
          {\refutes{\sorc\mcon\mcono}\moppred}

\inferrule{\refutes\mcon\moppred \mbox{ or } \refutes\mcono\moppred}
          {\refutes{\sandc\mcon\mcono}\moppred}

\inferrule{\refutes{\subst{\srecc\mvar\mcon}\mvar\mcon}\moppred}
          {\refutes{\srecc\mvar\mcon}\moppred}

\inferrule{\moppred\neq\moppred' \\ 
  \{\moppred,\moppred'\}\neq\{\sfalsep,\sboolp\}}
          {\refutes{\liftpred{\moppred'}}\moppred}
\end{mathpar}
\end{display}
\caption{Contract relations}
\label{fig:contract-rel}
\end{figure}





\begin{figure*}
\begin{machine}[$\s\mexp\menv\msto\mcont \stdstep \s\mexpo\menvo\mstoo\mconto$]{Basic reductions}
\se{\sapp\mexp\mexpo^\mlab}
&\mstep&
\s\mexp\menv{\msto[\mkaddr\mapsto\mcont]}{\kap\mexpo\menv\mlab\mkaddr}
\\
\se{\sif\mexp{\mexpo_1}{\mexpo_2}}
&\mstep&
\s\mexp\menv{\msto[\mkaddr\mapsto\mcont]}{\kif{\mexpo_1}{\mexp_2}\menv\mkaddr}
\\
\se{\sapp\mop\mexp^\mlab}
&\mstep&
\s\mexp\menv{\msto[\mkaddr\mapsto\mcont]}{\kopone\mop\mlab\mkaddr}
\\
\se{\sapp\mop{\mexp\;\mexpo}^\mlab}
&\mstep&
\s\mexp\menv{\msto[\mkaddr\mapsto\mcont]}{\koptwol\mop\mexpo\menv\mlab\mkaddr}
\\
\se\mvar
&\mstep&
\s\mval{\menvo}\msto\mcont
&
\mbox{ if }(\mval,\menvo)\in\msto(\menv(\mvar))
\\
\sk{\kap\mexp\menvo\mlab\mkaddr}
&\mstep&
\s\mexp\menvo{\msto[\mkaddr\mapsto\mcont]}{\kfn\mval\menv\mlab\mkaddr}
\\
\sk{\kfn{\slam\mvar\mexp}{\menvo}\mlab{\mkaddr}}
&\mstep&
\multicolumn{2}{l}{
$\s\mexp{\menvo[\mvar\mapsto\maddr]}{\msto[\maddr\mapsto(\mval,\menv)]}\mcont$
}
\\
\sk{\kfn{\mvalo}{\menvo}\mlab{\mkaddr}}
&\mstep&
\s{\simpleblm{\mlab}{\Lambda}}\emptyset\emptyset\kmt
& 
\mbox{ if }\deltamap\sprocp\mvalo\sfalse
\\
\sk{\kif\mexp\mexpo{\menvo}\mkaddr}
&\mstep&
\s\mexp{\menvo}\msto\mcont 
&
\mbox{ if }\deltamap\sfalsep\mval\sfalse
\\
\sk{\kif\mexp\mexpo{\menvo}\mkaddr}
&\mstep&
\s\mexpo{\menvo}\msto\mcont 
&
\mbox{ if }\deltamap\sfalsep\mval\strue
\\
\sk{\kopone\mop\mlab\maddr}
&\mstep&
\s\mans\emptyset\msto\mcont
&
\mbox{ if }\deltamap{\mop^\mlab}\mval\mans
\\
\s{\vcons{(\mvalo,\menvo)}{(\mval,\menv)}}\emptyset\msto{\kopone\scar\relax\maddr}
&\mstep&
\s\mvalo\menvo\msto\mcont
\\
\s{\vcons{(\mvalo,\menvo)}{(\mval,\menv)}}\emptyset\msto{\kopone\scdr\relax\maddr}
&\mstep&
\s\mval\menv\msto\mcont
\\
\sk{\koptwol\mop\mexp\menvo\mlab\mkaddr}
&\mstep&
\s\mexp\menvo\msto{\koptwo\mop\mval\menv\mlab\mkaddr}
\\
\sk{\koptwo\sconsop\mvalo\menvo\mlab\maddr}
&\mstep&
\s{\vcons{(\mvalo,\menvo)}{(\mval,\menv)}}\emptyset\msto\mcont
\\
\sk{\koptwo\mop\mvalo\menvo\mlab\maddr}
&\mstep&
\s\mans\emptyset\msto\mcont
\\
\s{\simpleblm\mlab\mlabo}\menv\msto\mcont
&\mstep&
\s{\simpleblm\mlab\mlabo}\emptyset\emptyset\kmt
\end{machine}

\begin{machine}{Module references}
\s{\mmodvar^\mmodvar\!}\menv\msto\mcont
&\mstep&
\s\mval\emptyset\msto\mcont
&
\mbox{ if }(\mmodvar^\mmodvar\!, \mval)\in\widehat\Delta(\vec\mmod)
\\
\s{\mmodvar^\mmodvaro\!}\menv\msto\mcont
&\mstep&
\s\mval\emptyset{\msto[\mkaddr\mapsto\mcont]}{\kchk\mcon\emptyset\mmodvar\mmodvaro\mmodvaroo\mkaddr}
&
\mbox{ if }(\mmodvar^\mmodvar\!, \chk\mcon\mmodvar\mmodvaro\mmodvaroo\mval)\in\widehat\Delta(\vec\mmod)
\end{machine}

\begin{machine}{Contract checking}
\s{\chk\mcon\mmodvar\mmodvaro\mmodvaroo\mexp}\menv\msto\mcont
&\mstep&
\s\mexp\menv{\msto[\mkaddr\mapsto\mcont]}{\kchk\mcon\menv\mmodvar\mmodvaro\mmodvaroo\mkaddr}
\\
\sk{\kchk\mcon\menvo\mmodvar\mmodvaro\mmodvaroo\mkaddr}
&\mstep&
\s\mval\menv{\msto[\mkaddr'\mapsto\kif{\with\mval{\{\mcon\}}}{\simpleblm{\mmodvar}{\mmodvaro}}\menv\mkaddr]}{\kfn\mvalo\menvo\relax{\mkaddr'}}
\\
&&
\mbox{ where }\mcon \mbox{ is flat}
\mbox{ and }\mvalo=\textsc{fc}(\mcon,\mval)
\\
\sk{\kfn{\depbless\mcon\mvar\mcono\mmodvar\mmodvaro\mmodvaroo\maddr}{\menvo}\mlab\mkaddr}
\\
&\mstep&
\langle{\mval},{\menv},{\kchk{\mcon}{\menvo}\mmodvaro\mmodvar\mmodvaroo{\mkaddr'}},
\msto[
\begin{tabular}[t]{@{}>{$}l<{$}}
\mkaddr' \mapsto \kfn{\mvalo}{\menvo'\!}\mlab{\mkaddr''},\\
\mkaddr'' \mapsto \kchk{\mcono}{\menvo[\mvar\mapsto\maddro]}\mmodvar\mmodvaro\mmodvaroo\mkaddr,\\
\maddro\mapsto(\mval,\menv)]\rangle
\end{tabular}
\\
&&\mbox{where } (\mvalo,\menvo')\in\msto(\maddr)
\\
\sk{\kchk{\sdep\mcon\mvar\mcono}{\menvo}\mmodvar\mmodvaro\mmodvaroo\mkaddr}
&\mstep&
\s{\depbless\mcon\mvar\mcono\mmodvar\mmodvaro\mmodvaroo\maddr}{\menvo}{\msto[\maddr\mapsto(\mval,\menv)]}\mconto
&
\\
&&&
\mbox{ if }\deltamap\sprocp\mval\strue
\\
\sk{\kchk{\sdep\mcon\mvar\mcono}\menvo\mmodvar\mmodvaro\mmodvaroo\mkaddr}
&\mstep&
\s{\simpleblm\mmodvar\mmodvaroo}\emptyset\emptyset\kmt
&
\mbox{ if }\deltamap\sprocp\mval\sfalse
\\
\sk{\kchk{\sandc\mcon\mcono}\menvo\mmodvar\mmodvaro\mmodvaroo\mkaddr}
&\mstep&
\s\mval\menv{\msto[\mkaddro\mapsto\kchk\mcono\menvo\mmodvar\mmodvaro\mmodvaroo\mkaddr]}{\kchk\mcon\menvo\mmodvar\mmodvaro\mmodvaroo\mkaddro}
\\
\sk{\kchk{\sorc\mcon\mcono}\menvo\mmodvar\mmodvaro\mmodvaroo\mkaddr}
&\mstep&
\s\mval\menv{\msto[\mkaddro\mapsto\kchkor\mval\menv{\sorc\mcon\mcono}\menvo\mmodvar\mmodvaro\mmodvaroo\mkaddr]}{\kap\mvalo\menvo\relax\mkaddro}
\\
&&\mbox{where }\mvalo=\fc(\mcon)
\\
\sk{\kchkor\mvalo\menvo{\sorc\mcon\mcono}{\menv'}\mmodvar\mmodvaro\mmodvaroo\mkaddr}
&\mstep&
\s{\with\mvalo{\{\mcon\}}}\menvo\msto\mcont 
&
\mbox{ if }\deltamap\sfalsep\mval\sfalse
\\
\sk{\kchkor\mvalo\menvo{\sorc\mcon\mcono}{\menv'}\mmodvar\mmodvaro\mmodvaroo\mkaddr}
&\mstep&
\s\mvalo\menvo\msto{\kchk\mcono{\menv'\!}\mmodvar\mmodvaro\mmodvaroo\mkaddr}
&
\mbox{ if }\deltamap\sfalsep\mval\strue
\end{machine}

\begin{machine}{Abstract values}
\sk{\kfn{\with\opaque\mconset}\menvo\mlab\mkaddr}
&\mstep&
\s\mexp\menv\msto{\kdem\mvalo\menvo\mkaddr}
&
\mbox{ if }\deltamap\sprocp{\with\opaque\mconset}\strue
\\
&&\multicolumn{2}{l}{$\mbox{where }\mexp=\textsc{amb}(\{\strue,\textsc{demonic}(\bigwedge\textsc{dom}(\mconset),\mval)\})\mbox{ and }\mvalo=\with\opaque{\textsc{rng}(\mconset)}$}
\\
\sk{\kdem\mexp\menvo\mkaddr}
&\mstep&
\s\mexp\menvo\msto\mcont
\\
\s{\with\opaque{\mconset\dotcup\{\sorc{\mcon_1}{\mcon_2}\}}}\menv\msto\mcont
&\mstep&
\s{\with\opaque{\mconset\cup\{\mcon_i\}}}\menv\msto\mcont
\\
\s{\with\opaque{\mconset\dotcup\{\srecc\mvar\mcon\}}}\menv\msto\mcont
&\mstep&
\s{\with\opaque{\mconset\dotcup\{\subst{\srecc\mvar\mcon}\mvar\mcon\}}}\menv\msto\mcont
\end{machine}

\begin{machine}{Higher-order pair contract checking}
\s\mval\menv\msto{\kchk{\sconsc\mcon\mcono}{\menvo}\mmodvar\mmodvaro\mmodvaroo\mkaddr}
&\mstep&
\s{\simpleblm\mmodvar\mmodvaroo}\emptyset\emptyset\kmt
&\mbox{if }\deltamap\sconsp\mval\sfalse
\\
\s\mval\menv\msto{\kchk{\sconsc\mcon\mcono}{\menvo}\mmodvar\mmodvaro\mmodvaroo\mkaddr}
&\mstep&
\s\mvalo\menv{\msto[\mkaddro\mapsto\kchk\mcon\menvo\mmodvar\mmodvaro\mmodvaroo{\mkaddr'},\mkaddr'\mapsto\kchkconso\mcono\menvo\mvalo\menv\mmodvar\mmodvaro\mmodvaroo\mkaddr]}{\kopone\scar\relax\mkaddro}
\\
&&
\multicolumn{2}{l}{
$\mbox{if }\deltamap\sconsp\mval\strue,
\mbox{ where }\mvalo=\with\mval\{\liftpred\sconsp\}$}
\\
\sk{\kchkconso\mcon\menvo\mvalo{\menv'}\mmodvar\mmodvaro\mmodvaroo\mkaddr}
&\mstep&
\s\mvalo{\menv'}{\msto[\mkaddro\mapsto\kchk\mcon\menvo\mmodvar\mmodvaro\mmodvaroo{\mkaddr'},
\mkaddr'\mapsto\koptwo\sconsop\mval\menv\relax\mkaddr]}
{\kopone\scdr\relax\mkaddro}
\end{machine}
\caption{Full CESK$^*$ machine}
\label{fig:full-machine}
\end{figure*}

\begin{figure}
\begin{display}[$\proves\mval\mcon$ \emph{and} $\refutes\mval\mcon$]{Value proves or refutes contract}
\begin{mathpar}
\inferrule{\mcon \in \mconset}
          {\proves{\with\mval\mconset}{\mcon}}

\inferrule{\proves\mval\moppred}
          {\proves\mval{\liftpred\moppred}}
\\
\inferrule{\refutes\mval\moppred}
          {\refutes\mval{\liftpred\moppred}}

\inferrule{\refutes\mval\mcon \\ \refutes\mval\mcono}
          {\refutes\mval{\sorc\mcon\mcono}}

\inferrule{\refutes\mval\mcon \mbox{ or } \refutes\mval\mcono}
          {\refutes\mval{\sandc\mcon\mcono}}

\inferrule{\mcon \in \{\liftpred\sconsp,\liftpred\snatp,\liftpred\sfalsep,\liftpred\sboolp\} \\ \proves\mval\sprocp}
          {\refutes\mval\mcon}

\inferrule{\refutes\mval\sprocp}
          {\refutes\mval{\sdep\mcon\mvar\mcono}}

\inferrule{\refutes\mval\sconsp}
          {\refutes\mval{\sconsc\mcon\mcono}}

\inferrule{\refutes{\projleft(\mval)}\mcon \mbox{ or } 
           \refutes{\projright(\mval)}\mcono}
          {\refutes\mval{\sconsc\mcon\mcono}}

\inferrule{\refutes\mvalo\mcon \mbox{ or } 
           \refutes\mval\mcono}
          {\refutes{\vcons\mvalo\mval}{\sconsc\mcon\mcono}}

\inferrule{\refutes\mval{\subst{\srecc\mvar\mcon}\mvar\mcon}}
          {\refutes\mval{\srecc\mvar\mcon}}
\end{mathpar}
\end{display}
\caption{Provability relations}
\label{fig:more-prove}
\end{figure}

\clearpage

\section{Proofs}
\label{sec:proofs}

The approximation relation on expressions, modules, and programs is
formalized below.  We show only the important cases and omit the
straightforward structurally recursive cases.
We parametrize by the module context of the abstract program,
to determine the opaque modules; we omit this context where it can be
inferred from context.
\begin{display}[$\mprg \sqsubseteq \mprgo$, $\mmod \sqsubseteq_\mvmod
  \mmodo$, \emph{and} $\mexp \sqsubseteq_\mvmod \mexpo$]{Approximates}
\begin{mathpar}
\inferrule{ }
          {\mval\sqsubseteq\opaque}

\inferrule{ }
          {\achksimple\mcon\mexp \sqsubseteq \opaque\cdot\mcon}

\inferrule{ }
          {\slam\mvar{\achksimple\mcono{(\sapp\mval{\achksimple\mcon\mvar})}}
            \sqsubseteq
            \opaque\cdot\sdep\mcon\mvar\mcono}

\inferrule{ }
          {\mval\cdot\mcon\sqsubseteq\mval}

\inferrule{\mval\sqsubseteq\mval'}
          {\mval\cdot\mcon\sqsubseteq\mval'\cdot\mcon}

          \inferrule{\proves\mval\mcon}
          {\mval\sqsubseteq\mval\cdot\mcon}
          
\inferrule{\mvmodo\sqsubseteq_\mvmod \mvmod \\ \mexpo\sqsubseteq_\mvmod \mexp}
          {\mvmodo\mexpo \sqsubseteq \mvmod\mexp}

          \inferrule
          {\achksimple\mcon\mexp \sqsubseteq \mexpo}
          {\achksimple\mcon\mexp \sqsubseteq \achksimple\mcon\mexpo}
          
  \inferrule{\smod\mmodvar\mcon\bullet \in \vec\mmod \text{ or } \mmodvar=\dagger}{
     \ablm{\mmodvar}{\mmodvaro}{\mval}{\mcon'}{\mval'} \sqsubseteq_\mvmod \mexp} 

\inferrule{\smod\mmodvar\mcon\opaque \in \mvmod}
          {\smod\mmodvar\mcon\mval \sqsubseteq_\mvmod \smod\mmodvar\mcon\opaque}
\end{mathpar}
\end{display}
We lift $\sqsubseteq$ to evaluation contexts $\mctx$ by structural
extension.  We lift $\sqsubseteq$ to contracts by structural extension
on contracts and $\sqsubseteq$ on embedded values.  We lift
$\sqsubseteq$ to vectors by point-wise extension and to sets
of expressions by point-wise, subset extension.

With the approximation relation in place, we now state and prove our
main soundness theorem.
%



\begin{lemma}
\label{lem:delta}
If $\vec\mval\sqsubseteq_\mvmod\vec\mvalo$, then $\delta(\mop,\vec\mval) \sqsubseteq_\mvmod
\delta(\mop,\vec\mvalo)$.
\end{lemma}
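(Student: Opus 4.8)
The plan is to start by unfolding the conclusion through the point‑wise, subset lifting of $\sqsubseteq$ to answer sets: $\delta(\mop,\vec\mval) \sqsubseteq_\mvmod \delta(\mop,\vec\mvalo)$ means precisely that for every $\mans$ with $\delta(\mop,\vec\mval)\ni\mans$ there is some $\manso$ with $\delta(\mop,\vec\mvalo)\ni\manso$ and $\mans\sqsubseteq_\mvmod\manso$. So I would fix such an $\mans$ and case split on the $\delta$‑clause (from figure~\ref{fig:absdelta-full}, the concrete clauses of figure~\ref{fig:delta}, and the side‑condition judgments of figures~\ref{fig:more-more-prove}, \ref{fig:contract-rel}, and~\ref{fig:more-prove}) that derives $\delta(\mop,\vec\mval)\ni\mans$, and in each case exhibit a matching clause witnessing $\delta(\mop,\vec\mvalo)\ni\manso$.

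The linchpin is an auxiliary \emph{monotonicity} lemma for the value/contract proof system: if $\mval\sqsubseteq_\mvmod\mvalo$ then $\proves\mvalo\mcon$ implies $\proves\mval\mcon$ and $\refutes\mvalo\mcon$ implies $\refutes\mval\mcon$; hence, using that $\proves$ and $\refutes$ are mutually exclusive, a value that is $\ambig$ about $\mcon$ can only become decided by moving \emph{down} the $\sqsubseteq$ order, never up. I would prove this by induction on the derivation of $\mval\sqsubseteq_\mvmod\mvalo$, checking each approximation rule against figures~\ref{fig:more-more-prove}, \ref{fig:contract-rel}, and~\ref{fig:more-prove}; the only steps needing real thought are the rules $\mval\sqsubseteq\opaque\cdot\mcon$, $\slam\mvar{\achksimple\mcono{(\sapp\mval{\achksimple\mcon\mvar})}}\sqsubseteq\opaque\cdot\sdep\mcon\mvar\mcono$, and $\mval\sqsubseteq\mval\cdot\mcon$ with its premise $\proves\mval\mcon$, where one verifies that the refined side already proves (resp.\ refutes) whatever the opaque‑with‑contract side does.

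Granting this, the case analysis is uniform: the $\delta$‑clauses come in triples indexed on $\proves\mval\moppred$, $\refutes\mval\moppred$, and $\ambig\mval\moppred$ for the relevant argument. If the ``proves'' clause fired on $\vec\mval$, monotonicity forces the corresponding argument of $\vec\mvalo$ to be either $\proves$ or $\ambig$ (never $\refutes$), so the ``proves'' or ``ambiguous'' clause fires on $\vec\mvalo$ and produces a comparable answer; symmetrically for the ``refutes'' and ``ambiguous'' cases. It then remains to check $\mans\sqsubseteq_\mvmod\manso$ between the produced answers: for blame answers such as $\ablm\mlab\ssucc\mval\lambda\mval$ this is compatibility of $\sqsubseteq$ together with $\mval\sqsubseteq_\mvmod\mvalo$; for freshly created abstract results such as $\with\opaque{\liftpred\snatp}$ the two sides produce the same value, so reflexivity suffices; the only mildly delicate case is when $\vec\mval$ is concrete and the clause returns a concrete answer (e.g.\ $\delta(\ssucc,n)\ni n{+}1$) while the more abstract $\vec\mvalo$ returns an abstract one (e.g.\ $\with\opaque{\liftpred\snatp}$), where one uses that a concrete value proves its own type predicate, so $n{+}1\sqsubseteq (n{+}1)\cdot\liftpred\snatp\sqsubseteq\with\opaque{\liftpred\snatp}$ by the refinement and refinement‑congruence rules plus transitivity. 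The $\scar$/$\scdr$ clauses additionally need that $\sqsubseteq$ is preserved by $\projleft$ and $\projright$, which is immediate from compatibility on concrete pairs and from the definition of projection on abstract pair values. I expect the monotonicity lemma to be the main obstacle, since it is the one place where the concrete rules of the proof system must be inspected individually; the rest is finite bookkeeping over the $\delta$‑clauses.
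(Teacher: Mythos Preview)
Your proposal is correct and is essentially what the paper does: its proof is the one-liner ``By inspection of $\delta$ and cases on $\mop$ and $\vec\mval$,'' and you have simply spelled out what that inspection entails, including the monotonicity of the proves/refutes judgments that is implicitly relied upon.
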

\begin{proof}
By inspection of $\delta$ and cases on $\mop$ and $\vec\mval$.
\end{proof}


\begin{lemma}
\label{lem:subst}
If $\mexp\sqsubseteq_\mvmod\mexpo$ and $\mval\sqsubseteq_\mvmod\!\mvalo$, then
$\subst\mval\mvar\mexp \sqsubseteq_\mvmod \subst\mvalo\mvar\mexpo$.
\end{lemma}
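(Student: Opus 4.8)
(Sketch)
The plan is to prove the lemma by induction on the derivation of $\mexp\sqsubseteq_\mvmod\mexpo$, together with the analogous statement for contracts (if $\mcon\sqsubseteq_\mvmod\mcono$ then $\subst\mval\mvar\mcon\sqsubseteq_\mvmod\subst\mvalo\mvar\mcono$), since substitution threads through contracts and the two approximation relations are mutually defined. As a preliminary I would dispatch the ``diagonal'' case on its own: if $\mval\sqsubseteq_\mvmod\mvalo$ then $\subst\mval\mvar\mexp\sqsubseteq_\mvmod\subst\mvalo\mvar\mexp$ for every term $\mexp$ and $\subst\mval\mvar\mcon\sqsubseteq_\mvmod\subst\mvalo\mvar\mcon$ for every contract $\mcon$; this is a routine mutual structural induction on $\mexp$ and $\mcon$ whose only nontrivial leaf is $\mexp=\mvar$ (both sides are then $\mval$ and $\mvalo$), every other case closing by the induction hypothesis and compatible closure of $\sqsubseteq$. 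Throughout I adopt the usual hygiene convention that the variable bound by a $\uplambda$-, $\mu$-, or dependent-contract form differs from $\mvar$ and is not free in $\mval$ or $\mvalo$, so that substitution commutes with these binders.

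With the diagonal fact available, the main induction is short. Its reflexivity case $\mexp=\mexpo$ is exactly that fact. The compatible-closure cases — application, conditionals, $\mu$, primitive application, contract monitors, pairing, and the contract constructors, including the refinement-set position of $\with\mpreval\mconset$ — follow by distributing the substitution over the constructor, applying the induction hypothesis to each immediate subderivation, and recombining by compatibility. For transitivity, say $\mexp\sqsubseteq_\mvmod\mexp_1\sqsubseteq_\mvmod\mexpo$, I would invoke the induction hypothesis on $\mexp\sqsubseteq_\mvmod\mexp_1$ with the reflexive value approximation $\mval\sqsubseteq_\mvmod\mval$ to obtain $\subst\mval\mvar\mexp\sqsubseteq_\mvmod\subst\mval\mvar{\mexp_1}$, then on $\mexp_1\sqsubseteq_\mvmod\mexpo$ with $\mval\sqsubseteq_\mvmod\mvalo$ to obtain $\subst\mval\mvar{\mexp_1}\sqsubseteq_\mvmod\subst\mvalo\mvar\mexpo$, and close by transitivity of $\sqsubseteq$.

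The axiom cases are the substance of the proof, and they all follow one recipe. Substituting a value for $\mvar$ into the left-hand side of any of the base rules — a value, a monitored expression $\achksimple\mcon\mexp$, a wrapped function, or a refined value $\mval'\cdot\mcon$ — yields a term of exactly the same syntactic shape, so the \emph{same} axiom re-derives the approximation for the substituted left-hand side; the subtlety is only that the right-hand side it produces has $\mval$ (not $\mvalo$) pushed into its embedded contracts and values. One then bridges from that intermediate right-hand side to the intended $\subst\mvalo\mvar\mexpo$ using the diagonal fact on those embedded components (for instance $\subst\mval\mvar\mcon\sqsubseteq_\mvmod\subst\mvalo\mvar\mcon$), compatibility, and transitivity. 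Thus $\mval'\sqsubseteq_\mvmod\opaque$ is immediate ($\opaque$ is closed); $\achksimple\mcon\mexp\sqsubseteq_\mvmod\opaque\cdot\mcon$, the wrapped-function axiom, $\mval'\cdot\mcon\sqsubseteq_\mvmod\mval'$, its refinement-congruence companion, and the monitor-under-monitor rule all go through in this way; the blame-approximation and opaque-module axioms are untouched since their side conditions mention only module names, and the module- and program-level rules do not occur as subderivations of an expression approximation and are in any case trivial for substitution since modules and programs are closed.

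The one genuinely separate ingredient — and where I expect the main care to be needed — is the contract-elimination axiom, which from $\proves{\mval'}\mcon$ concludes $\mval'\sqsubseteq_\mvmod\mval'\cdot\mcon$: to reapply it after substituting $\mval$ for $\mvar$ I must know $\proves{\subst\mval\mvar{\mval'}}{\subst\mval\mvar\mcon}$, i.e.\ that the provability judgment of Section~\ref{sec:proof-system} is closed under value substitution. I would establish this as a short auxiliary induction on the derivation of $\proves{\mval'}\mcon$: the membership rule is stable since $\mcon\in\mconset$ implies that $\subst\mval\mvar\mcon$ lies in the corresponding substituted refinement set; the rules routed through $\plaindelta$ and the base-predicate relations inspect only the outermost (pre)value constructor, which substitution preserves (a $\uplambda$ remains a $\uplambda$, a pair remains a pair, numerals and $\sempty$ are closed, and so on); and the structural rules for disjunctive, conjunctive, pair, and recursive contracts commute with substitution. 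Granting this sublemma, the contract-elimination axiom fits the recipe above. No single case is difficult; the real work lies in the bookkeeping of the expression/contract mutual induction, in noticing that the axiom leaves must be bridged via the separately-proved diagonal fact rather than an induction hypothesis, and in the handling of binders. The companion substitution claim for modules and programs follows by the same argument.
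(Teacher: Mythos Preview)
Your proposal is correct and considerably more detailed than the paper's own proof, which is the single sentence ``By induction on the structure of $\mexp$ and cases on the derivation of $\mexp\sqsubseteq_\mvmod\mexpo$.'' Your organization differs slightly in that you induct on the derivation rather than on the term, and you explicitly isolate the diagonal sublemma, the transitivity split, and the closure of $\proves\mval\mcon$ under substitution---all details the paper leaves implicit; but the underlying argument is the same, and your version would serve as a faithful expansion of the paper's sketch.
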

\begin{proof}
  By induction on the structure of $\mexp$ and cases on the derivation
  of $\mexp\sqsubseteq_\mvmod\mexpo$.
\end{proof}

\begin{lemma}
\label{lem:fc}
If $\mcon\sqsubseteq_\mvmod\mcono$, then $\fc(\mcon) \sqsubseteq_\mvmod \fc(\mcono)$.
\end{lemma}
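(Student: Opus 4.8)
(Sketch)
The plan is to argue by induction on the structure of the flat contract $\mcon$, equivalently on the derivation of $\mcon\sqsubseteq_\mvmod\mcono$. Two observations make this routine. First, $\fc$ is only ever applied to flat contracts, and the recursive calls it makes---on the subcontracts of $\srecc\mvar\mcon$, $\sandc{\mcon_1}{\mcon_2}$, $\sorc{\mcon_1}{\mcon_2}$, and $\sconsc{\mcon_1}{\mcon_2}$---are again on flat contracts, so the induction is well-founded. Second, the approximation relation on contracts is the structural extension of $\sqsubseteq$ (acting on embedded subcontracts and predicate bodies), so $\mcon\sqsubseteq_\mvmod\mcono$ forces $\mcono$ to share $\mcon$'s top-level contract constructor; hence the cases of the induction line up one-to-one with the defining clauses of $\fc$.

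First I would handle the base cases. When $\mcon=\mvar$, then $\mcono=\mvar$ and $\fc(\mvar)=\mvar\sqsubseteq_\mvmod\mvar$ by reflexivity. When $\mcon=\spred{\mexp}$, then $\mcono=\spred{\mexpo}$ with $\mexp\sqsubseteq_\mvmod\mexpo$, and $\fc(\spred\mexp)=\mexp\sqsubseteq_\mvmod\mexpo=\fc(\spred\mexpo)$ directly; any nontrivial approximation occurring inside the predicate body---e.g.\ a monitored subterm approximated by $\opaque\cdot\mcon'$---is already carried by the hypothesis $\mexp\sqsubseteq_\mvmod\mexpo$, so no extra work is needed.

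For the inductive cases I would apply the induction hypothesis to each immediate subcontract and then close the resulting approximations under the expression constructors appearing on the right-hand sides of the $\fc$ clauses, using that $\sqsubseteq$ is compatibly closed (a congruence). For $\srecc\mvar\mcon$, the hypothesis gives $\fc(\mcon)\sqsubseteq_\mvmod\fc(\mcono)$ and compatibility of $\srec\mvar{-}$ gives $\srec\mvar{\fc(\mcon)}\sqsubseteq_\mvmod\srec\mvar{\fc(\mcono)}$. For $\sandc{\mcon_1}{\mcon_2}$ and $\sorc{\mcon_1}{\mcon_2}$, the hypothesis on the two components, together with compatibility of $\uplambda$-abstraction, application, and the boolean combinators, yields the claim. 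For $\sconsc{\mcon_1}{\mcon_2}$, the same pattern applies, with the fixed subterms $\sapp\sconsp\mvaro$, $\sapp\scar\mvaro$, and $\sapp\scdr\mvaro$ related to themselves by reflexivity.

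I do not expect a real obstacle: this is a congruence induction with no surprises. The only things worth a moment's care are (i) confirming, as noted above, that $\fc$'s recursion stays within flat contracts so the induction terminates, and (ii) observing that, because $\fc$ is invoked only on contracts appearing in the source program or synthesized by the reduction rules---never on an opaque prevalue---the relevant instances of $\sqsubseteq$ on contracts never alter a contract's shape, so no ``$\mval\sqsubseteq\opaque$''-style case can arise at the contract level.
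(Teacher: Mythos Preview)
Your proposal is correct and follows essentially the same approach as the paper: the paper's proof is a one-line sketch---``by induction on the structure of $\mcon$ and cases on the derivation of [the approximation] and the definition of $\fc$''---and you have faithfully expanded exactly that argument, case by case, using the compatible closure of $\sqsubseteq$ and the fact that the contract-level approximation is defined by structural extension.
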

\begin{proof}
  By induction on the structure of $\mcon$ and cases on the derivation
  of $\mexp\sqsubseteq_\mvmod\mexpo$ and the defintion of $\fc$.
\end{proof}


\begin{lemma}
\label{lem:proves-fc}
Let $\mexp = \fc(\mcon)$, then
\begin{enumerate}
\item if $\proves\mval\mcon$ and $\mval\sqsubseteq_\mvmod\mvalo$, then
$\sapp\mexp\mvalo \multistdstep \mans \sqsupseteq \mbox{\emph{\strue}}$,
\item if $\refutes\mval\mcon$ and $\mval\sqsubseteq_\mvmod\mvalo$, then
$\sapp\mexp\mvalo \multistdstep \mans \sqsupseteq \mbox{\emph{\sfalse}}$.
\end{enumerate}
\end{lemma}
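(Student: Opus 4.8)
The plan is to prove the two parts together by induction on the derivation of $\proves\mval\mcon$ (part~1) and $\refutes\mval\mcon$ (part~2), with a nested case analysis on the last rule used: in each case I would unfold $\fc(\mcon)$, take a bounded number of reduction steps of $\sapp\mexp\mvalo$, and then appeal to the induction hypothesis on the sub-derivations (which concern strictly smaller contracts, or a single unrolling). The one transfer principle used throughout is an immediate corollary of Lemma~\ref{lem:delta}: if $\deltamap\moppred\mval{b}$ and $\mval \sqsubseteq_\mvmod \mvalo$, then $\deltamap\moppred\mvalo{b'}$ for some $b' \sqsupseteq b$, since some element of $\delta(\moppred,\mvalo)$ must approximate the given element of $\delta(\moppred,\mval)$. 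The base cases are then routine. For $\mcon = \liftpred\moppred$ a primitive predicate, $\fc(\mcon)=\moppred$; a derivation of $\proves\mval{\liftpred\moppred}$ forces $\proves\mval\moppred$ and hence $\deltamap\moppred\mval\strue$, while a derivation of $\refutes\mval{\liftpred\moppred}$ — including the sub-case that invokes $\proves\mval\sprocp$ together with the disjointness rules of the proof system, which yield $\refutes\mval\moppred$ — forces $\deltamap\moppred\mval\sfalse$; the transfer principle then steps $\sapp\moppred\mvalo$ to a boolean $b' \sqsupseteq \strue$ (resp.\ $\sqsupseteq \sfalse$).

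For the compound cases of part~2, I would push $\sapp{\fc(\mcon)}\mvalo$ through the $\beta$-redex that $\fc$ introduces and then through the generated conjunction/disjunction (evaluated left-to-right, short-circuiting on a decisive operand), choosing the branch that matches the rule. For $\refutes\mval{\sorc{\mcon_1}{\mcon_2}}$ (from $\refutes\mval{\mcon_1}$ and $\refutes\mval{\mcon_2}$) the induction hypothesis sends the left check to a $\sfalse$-ish value, after which the disjunction either proceeds to the right check (induction hypothesis again) or, when the left result is an abstract boolean, returns that value; in either case the outcome approximates $\sfalse$, using the $\delta$ clauses for $\vee$ on abstract booleans. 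For $\refutes\mval{\sandc{\mcon_1}{\mcon_2}}$ (from $\refutes\mval{\mcon_i}$) the short-circuiting conjunction reaches $\sfalse$ after evaluating only the $i$-th check — essential, since the other conjunct's check need not terminate, and here the fact that $\multistdstep$ asks only for the existence of one path to an answer does the work. For $\sconsc{\mcon_1}{\mcon_2}$, where $\fc$ builds a short-circuiting conjunction of a $\sconsp$-test with component checks on $\scar$ and $\scdr$: if the refutation comes from $\refutes\mval\sconsp$ the $\sconsp$-test steps to $\sfalse$-ish and short-circuits; if it comes from a refuted component (or from an explicit pair value with a refuted component) we take the $\sconsp$ branch and recur into the offending component, using $\projleft(\mval)\sqsubseteq_\mvmod\projleft(\mvalo)$, $\projright(\mval)\sqsubseteq_\mvmod\projright(\mvalo)$, and the induction hypothesis. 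For $\srecc\mvar\mcon$ I would use that $\fc$ commutes with unrolling — $\fc(\subst{\srecc\mvar\mcon}\mvar\mcon) = \subst{\srec\mvar{\fc(\mcon)}}\mvar{\fc(\mcon)}$, since $\fc(\mvar)=\mvar$ — so a single $\srec$-step in operator position of $\sapp{\fc(\srecc\mvar\mcon)}\mvalo$ yields $\sapp{\fc(\subst{\srecc\mvar\mcon}\mvar\mcon)}\mvalo$, matched by the induction hypothesis on the strictly smaller $\refutes$-derivation for the unrolled contract; productivity supplies the intervening pair constructor, so the induction is well-founded.

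The genuinely delicate case — and the one I expect to be the main obstacle — is the direct rule $\mcon\in\mconset$ for $\proves\mval\mcon$, which, since the proof system has essentially no other way to derive $\proves$ at a compound or arbitrary-predicate contract, is almost all of part~1. Here $\mcon$ may be an arbitrary $\liftpred\mexp$ (or a disjunction, or a recursive contract) that $\mval$ merely remembers, $\fc$ offers no structural handle, and $\mvalo$ need not remember $\mcon$ at all, since the approximation rule $\mval\cdot\mcon \sqsubseteq \mval$ may have discarded it. I would discharge this with the invariant, maintained by every reduction and threaded through the soundness development, that a value acquires a flat refinement $\mcon$ only after the corresponding check has succeeded on it, so that $\sapp{\fc(\mcon)}\mval \multistdstep \mans \sqsupseteq \strue$; this then lifts to $\mvalo$ by the one-step monotonicity (``main'') lemma of Section~\ref{sec:soundness}, applied to $\sapp{\fc(\mcon)}\mval \sqsubseteq_\mvmod \sapp{\fc(\mcon)}\mvalo$. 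The real difficulty is organizational rather than computational: arranging the development so that ``remembered contracts are honest'' is available at this point without a circular dependence on the monotonicity lemma — for instance by proving the two simultaneously, or by establishing the invariant at the level of reachable program states beforehand. Everything else is a mechanical, rule-by-rule reduction backed by Lemma~\ref{lem:delta}.
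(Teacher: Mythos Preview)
The paper's own proof of this lemma is a single sentence: ``By induction on the structure of $\mcon$ and cases on the derivation of $\mval\sqsubseteq_\mvmod\mvalo$ and the definition of $\fc$.'' Your proposal is organized slightly differently---you induct on the derivation of $\proves\mval\mcon$ / $\refutes\mval\mcon$ rather than on the structure of $\mcon$---but for part~2 the two inductions coincide, since every refutes rule decomposes $\mcon$, and your case-by-case treatment of disjunction, conjunction, pairs, and recursive unrolling via $\fc$ is exactly what the paper's sketch is gesturing at. So on part~2 you are doing the same thing in far more detail.

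Where you go well beyond the paper is in your analysis of the $\mcon\in\mconset$ rule for part~1. You are right that this is essentially the whole of part~1 (the only other $\proves$ rule reduces to a base predicate and is dispatched by Lemma~\ref{lem:delta}), and you are right that neither induction gives any leverage: $\mcon$ is an arbitrary flat contract---possibly $\sflat\mexp_0$ for an arbitrary predicate $\mexp_0$---and $\mvalo$ need not remember it. Your counter\-example sketch is on point: take $\mval = \with{5}{\{\sflat{\slamnp{x}{\sfalse}}\}}$ and $\mvalo=\mval$; then $\proves\mval\mcon$ holds, yet $\sapp{\fc(\mcon)}\mvalo \multistdstep \sfalse \not\sqsupseteq \strue$. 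The lemma as stated therefore requires exactly the ``remembered contracts are honest'' invariant you name, and the circularity you flag with Lemma~\ref{lem:main-lemma} is real: the paper invokes Lemma~\ref{lem:proves-fc} inside the proof of Lemma~\ref{lem:main-lemma}, so one cannot simply appeal back. Your proposed fixes---prove the two simultaneously, or restrict to states reachable from well-formed programs (where refinements arise only as the success branch of a flat check)---are the natural ways out. The paper's one-line proof simply does not address this; your diagnosis is sharper than what the paper records.
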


\begin{proof}
  By induction on the structure of $\mcon$ and cases on the derivation
  of $\mvalo\sqsubseteq_\mvmod\mval$ and the defintion of $\fc$.
\end{proof}

\begin{lemma}
If $\mprg\;\stdstep\;\mprg'$,
\emph{$\mprg'\neq\mvmod\:\simpleblm\mmodvar\mmodvaro$} and
$\mprg\sqsubseteq\mprgo$, then
$\mprgo\;\multistdstep\;\mprgo'$ and $\mprg'\sqsubseteq\mprgo'$.
\label{lem:main-lemma}
\end{lemma}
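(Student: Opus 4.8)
The plan is a forward-simulation argument: a case analysis on the reduction $\mprg \stdstep \mprg'$, reducing the nontrivial subgoals to the auxiliary Lemmas~\ref{lem:delta}, \ref{lem:subst}, \ref{lem:fc}, and~\ref{lem:proves-fc}. First I would use the program-level clause of $\sqsubseteq$ to peel off the module context: $\mprg = \mvmod\,E$ and $\mprgo = \mvmodo\,F$ with $\mvmod \sqsubseteq_\mvmodo \mvmodo$ and $E \sqsubseteq_\mvmodo F$. Since all computation occurs in the top-level expression, $E = \mctx[R]$ for the redex $R$, with $R \stdstep R'$ a step of the underlying notion of reduction and $\mprg' = \mvmod\,\mctx[R']$. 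The core is then an induction on the derivation of $\mctx[R] \sqsubseteq_\mvmodo F$, organised around two situations: (a) the derivation decomposes along $\mctx$, so that $F = \mctx'[G]$ with $\mctx \sqsubseteq \mctx'$ and $R \sqsubseteq_\mvmodo G$ --- the abstract program still exposes the redex; or (b) one of the non-structural clauses of $\sqsubseteq$ --- that a value is approximated by $\opaque$, that a monitored expression is approximated by $\opaque$ refined by its contract, that an $\eta$-expanded wrapped function is approximated by $\opaque$ refined by the function contract, the contract-refinement clauses, or the clause closing $\sqsubseteq$ under the monitor constructor --- is applied at a position enclosing the hole of $\mctx$, so that the whole subcomputation around the redex is already subsumed, on the right, by an abstract value or a monitored term.

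In case (a) I would recurse on the shape of $R \stdstep R'$. For the concrete redexes --- $\beta$, conditionals, $\delta$-applications, module references, and contract checks whose subject provably or refutably satisfies the contract --- either $G$ has the matching redex shape, in which case $\mprgo$ takes the analogous step and the goal closes using Lemma~\ref{lem:subst} (for $\beta$ and for the $\eta$-expanded function monitors), Lemma~\ref{lem:delta} (for primitives and for the $\sprocp$ and $\sfalsep$ side conditions of applications and conditionals), or Lemmas~\ref{lem:fc} and~\ref{lem:proves-fc} (for the $\fc$-generated predicate tests in the ambiguous flat-contract case), together with the monotonicity of the $\proves$ and $\refutes$ judgments along $\sqsubseteq$ to realign side conditions; or $G$ is a strictly more abstract value --- say the operator is $\opaque$ refined by a function contract, or the checked value is $\opaque\cdot\mcon$ --- and $\mprgo$ mimics the step via the abstract-function-application rule, choosing the \emph{range-value} branch rather than the \syntax{havoc} branch, or via the ``value proves the contract'' shortcut, the result then being approximated by the refinement clauses. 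The genuinely nondeterministic reductions (disjunction splitting, recursive-contract unrolling, and the \textsc{amb} construct) are discharged by choosing, on the abstract side, the branch that tracks the concrete one; because $\mprgo$ is the more abstract program it always retains at least the behaviours available to $\mprg$.

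In case (b) the abstract program takes \emph{zero} steps, so it suffices to show that every one-step reduct of a subsumed subcomputation is subsumed by the \emph{same} right-hand side: reducing a flat monitor on $\mval$ to $\mval\cdot\mcon$, to its \syntax{if}-expansion, or --- via the ``proves'' shortcut --- to $\mval$ itself, all preserve ``$\sqsubseteq \opaque\cdot\mcon$'' by the clauses $\mval\sqsubseteq\opaque$, $\mval\cdot\mcon\sqsubseteq\mval'\cdot\mcon$, and monitor-congruence, while the $\eta$-expansion of a function monitor preserves ``$\sqsubseteq \opaque\cdot\sdep\mcon\mvar\mcono$''. I expect this invariant-preservation across the non-structural clauses to be the main obstacle: it is where the monitor-congruence clause is indispensable, and where one must verify that the abstract values generated by the monitoring rules really are closed under contract checking. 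It is also where the hypothesis $\mprg' \neq \mvmod\,\simpleblm\mmodvar\mmodvaro$ is spent: it discards sub-reductions that would emit blame, which are instead recovered in the blame case of Theorem~\ref{thm:soundness} via the completeness of \syntax{havoc}. The full case analysis for Symbolic Core Racket is carried out in appendix~\ref{sec:proofs}.
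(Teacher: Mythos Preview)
Your proposal is correct and follows essentially the same two-case decomposition as the paper: (1) the evaluation contexts align structurally, handled by case analysis on the redex using Lemmas~\ref{lem:delta}--\ref{lem:proves-fc}, and (2) a non-structural $\sqsubseteq$ rule fires at a position enclosing the hole, handled by taking zero abstract steps. The paper phrases case~(2) more tersely---observing that the relevant non-structural rules are ``oblivious to the contents of $\mexp$ and $\mexp'$'' rather than checking each invariant explicitly as you do---but the underlying argument is the same.
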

\begin{proof}
We split into two cases.

\noindent
Case (1):
\begin{align*}
\mprg &= \mvmod\;\mctx[\mexp] \stdstep \mvmod\;\mctx[\mexp']\\
\mprgo &= \vec\mmodo\;\mctx'[\mexpo] \stdstep \vec\mmodo\;\mctx'[\mexpo']
\end{align*}
where $\mctx\sqsubseteq_\mvmodo\mctx'$.  We reason by cases on the step
from $\mexp$ to $\mexp'$.
\begin{itemize}

\item Case: $\mexp = \mmodvar^\mmodvaro$ and $\smod\mmodvar\mcon{\mexp'}
  \in \mvmod$

If $\mmodvar$ is transparent in $\vec\mmodo$, then
$\smod\mmodvar\mcon{\mexp'}
  \in \vec\mmodo$  and we are done by simple application of the
  reduction rules for module references.
Otherwise, $\mmodvar\not=\mmodvaro$ and thus
\begin{align*}
\mexp' &= \chk\mcon\mmodvar\mmodvaro\mmodvar\mval &
\mexpo'&= \chk\mcon\mmodvar\mmodvaro\mmodvar{\with\opaque{\{\mcon\}}}\text,
\end{align*}
but now $\mexp' \sqsubseteq_\mvmodo \mexpo'$, since $\mexp'\sqsubseteq_\mvmodo
{\with\opaque{\{\mcon\}}}$.

\item Case: $\achksimple{\sdep\mcon\mvar\mcono}\mval
  \;\stdstep\;
\simpledepbless\mcon\mvar\mcono{\mvalo}$
where $\deltamap\sprocp\mval\strue$
 and $\mvalo = \with\mval{\{\sdep\mcon\mvar\mcono\}}$.

Since $\mexpo$ is a redex, by $\sqsubseteq$ we have $\mexpo=
\achksimple{\sdep{\mcon'}\mvaro{\mcono'}}{\mval'}$, where
$\mval\sqsubseteq_\mvmodo\mval'$.  By lemma~\ref{lem:delta},
$\deltamap\sprocp{\mval'}\strue$.  So
$\mexpo'=\simpledepbless{\mcon'}\mvaro{\mcono'}{\mvalo'}$ where
$\mvalo'=\with{\mval'}{\{\sdep{\mcon'}\mvaro{\mcon'}\}} \sqsubseteq_\mvmodo
\with\mval{\{\sdep\mcon\mvar\mcono\}}$ and thus
$\mexp'\sqsubseteq_\mvmodo\mexpo'$.

\item
  Case: $\sapp{\mval_1}{\mval_2}^\mlab\;\stdstep\;
  \asblm{\mlab}{\Lambda}{\mval}$, where
  $\deltamap\sprocp{\mval_1}\sfalse$.

By $\sqsubseteq$, we have $\mexpo = \sapp{\mvalo_1}{\mvalo_2}^\mlab$
and $\mvalo_i \sqsubseteq_\mvmodo \mval_i$.  By lemma~\ref{lem:delta},
$\deltamap\sprocp{\mvalo_1}\sfalse$, hence
$\sapp{\mvalo_1}{\mvalo_2}^\mlab\;\stdstep\; \simpleblm{\mlab}{\Lambda}$.

\item
  Case: $\sapp{\mval_1}{\mval_2}^\mlab\;\stdstep\;
  \mexp'$, where
  $\deltamap\sprocp{\mval_1}\strue$.
 
By $\sqsubseteq$, we have $\mexpo=\sapp{\mvalo_1}{\mvalo_2}^\mlab$ and
  $\mvalo_i \sqsubseteq_\mvmodo \mval_i$.  By lemma~\ref{lem:delta},
$\deltamap\sprocp{\mvalo_1}\strue$.

Either $\mval_1$ and $\mvalo_1$ are structurally similar, in which
case the result follows by possibly relying on lemma~\ref{lem:subst},
or $\mval_1 = \with{\sreclam\mvar\mvaro{\mexp_0}}\mconset$ and
$\mvalo_1 = \with\opaque{\mconset'}$.  There are two possibilities for
the origin of $\mval_1$: either it was blessed or it was not.  If
$\mval_1$ was not blessed, then $\mconset$ contains no function
contracts, implying $\mconset'$ contains no function contracts, hence
$\mexpo'= \opaque$,
and the result holds.  Alternatively, $\mval_1$ was blessed and
$\mconset$ contains a function contract $\sdep\mcon\mvar\mcono$.  But
but by the blessed application rule, we have $\mctx =
\mctx_1[\achksimple{\subst{\mval'_2}\mvar\mcono}{[\;]}]$, thus by
assumption $\mctx'=
\mctx'_1[\achksimple{\subst{\mvalo'_2}\mvar{\mcono'}}{[\;]}]$,
implying $\subst{\mval'_2}\mvar\mcono \sqsubseteq
\subst{\mvalo'_2}\mvar{\mcono'}$, finally giving us the needed
conclusion:
\[
\mctx_1[\achksimple{\subst{\mval'_2}\mvar\mcono}{\mexp'}]
\sqsubseteq_\mvmodo
\mctx'_1[\achksimple{\subst{\mvalo'_2}\mvar{\mcono'}}{\mexpo'}]\text,
\]
where $\mexpo'=\with\opaque\{\subst{\mvalo_2}\mvar{\mcono'}\ |\ \sdep{\mcon'}\mvar{\mcono'} \in \mconset'\}$. 

\item Case: $\achksimple\mcon\mval\;\stdstep\;\mexp'$ where $\mcon$ is flat.

If $\ambig\mval\mcon$, then the case holds by use of lemma~\ref{lem:fc}.
If $\proves\mval\mcon$, then the case holds by lemma~\ref{lem:proves-fc}(1).
If $\refutes\mval\mcon$, then the case holds by lemma~\ref{lem:proves-fc}(2).

\item The remaining cases are straightforward.
\end{itemize}

\noindent
Case (2): 
\begin{align*}
\mprg &= \mvmod\;\mctx_1[\mctx_2[\mexp]] \stdstep
  \mvmod\;\mctx_1[\mctx_2[\mexp']]\\
\mprgo &=
  \vec\mmodo\;\mctx'_1[\mctx'_2[\mexpo]]
\end{align*}
where $\mctx_1$ is the largest context such that
$\mctx_1\sqsubseteq_\mvmodo\mctx'_1$ but $\mctx_2 \not\sqsubseteq_\mvmodo \mctx'_2$.

In this case, we have $\mctx_2[\mexp] \sqsubseteq_\mvmodo
\mctx'_2[\mexpo]$, but since $\mctx_2\not\sqsubseteq_\mvmodo\mctx'_2$,
this must follow by one of the non-structural rules for $\sqsubseteq$,
all of which are either oblivious to the contents of $\mexp$ and
$\mexp'$, or do not relate redexes to anything.
\end{proof}


\begin{lemma}
If there exists a context $\mctx$ such that
\[
\mbox{\emph{$\mvmod\;\smod\mmodvar\mcon\mval\;\mctx[\mmodvar] \multistdstep
\simpleblm\mmodvar\mmodvaro$,}}
\]
then 
\[
\mbox{\emph{$\mvmod\;\smod\mmodvar\mcon\mval\;(\sapp{\text{\hv}}\mmodvar)\multistdstep\simpleblm\mmodvar\mmodvaro$.}}
\]
\label{lem:demonic}
\end{lemma}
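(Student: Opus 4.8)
The plan is to exhibit, given the blaming run of $\mctx[\mmodvar]$, a run of $\sapp{\syntax{havoc}}{\mmodvar}$ whose non-deterministic choices ``chase'' the escaping module value until it triggers the same blame. The first observation is that, since $\mmodvar$ is a \emph{concrete} module, blame of the form $\simpleblm\mmodvar\mmodvaro$ must be witnessed by running the code $\mval$ that $\mmodvar$ denotes --- either $\mval$ fails one of its own contract checks, or it misuses a module it imports --- and the only way that code enters the computation of $\mctx[\mmodvar]$ is by resolving the reference in the hole, which the module-reference rule rewrites to a wrapped value $\mval_0 = \chk\mcon\mmodvar\mmodvaro\mmodvar\mval$. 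So the substance of the proof is to track the residuals of $\mval_0$ through the reduction and show \syntax{havoc} can follow the one on which the failing check eventually fires.

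I would formulate this as an auxiliary claim, proved by induction on the length of $\mctx[\mmodvar]\multistdstep\simpleblm\mmodvar\mmodvaro$: at every step there is a distinguished residual of $\mval_0$ ``on the path to blame'', and it is either (i) a value sitting in an elimination position of the current context, or (ii) a contract-machinery redex about to fire on it (the $\eta$-expansion of a function contract, a pair-contract projection, or a flat check). Steps of type (ii) are forced and do not consult the ambient context, so \syntax{havoc}'s run mirrors them directly. For steps of type (i), one unrolling of \syntax{havoc}'s recursion together with an \textsc{amb}-choice lets \syntax{havoc} match the context's elimination: if $\mctx$ applies the residual to an argument $\mval_a$, \syntax{havoc} takes the $\sapp\varx\opaque$ branch and feeds it $\opaque$; if $\mctx$ takes $\scar$ or $\scdr$, \syntax{havoc} takes the corresponding branch; and in each case \syntax{havoc} then recurs, so the next residual is again of the form $\sapp{\syntax{havoc}}{\mval'}$ and the invariant is restored. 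When $\mctx$ finally forces the failing check and yields $\simpleblm\mmodvar\mmodvaro$, \syntax{havoc}'s run stands at the corresponding check, applied to a residual that \emph{approximates} the one in $\mctx$'s run --- the only slack being that \syntax{havoc} passed $\opaque$, refined by whatever pre-condition contracts it had to traverse, where $\mctx$ passed the concrete $\mval_a$, and $\mval_a\cdot\mcon_1\sqsubseteq_\mvmod\opaque\cdot\mcon_1$ by the refinement rule for $\sqsubseteq$. Closing this gap uses Lemma~\ref{lem:main-lemma} (iterated) together with Lemmas~\ref{lem:subst}, \ref{lem:delta}, \ref{lem:fc} and~\ref{lem:proves-fc}, and the crucial fact that $\sqsubseteq_\mvmod$ is \emph{tight} on blame for concrete modules: $\simpleblm\mmodvar\mmodvaro\sqsubseteq_\mvmod\mans$ with $\mmodvar$ concrete forces $\mans=\simpleblm\mmodvar\mmodvaro$, so the approximating answer is exactly the blame we want.

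Two subtleties need care. First, whenever \syntax{havoc} supplies $\opaque$ as an argument, the residual's pre-condition $\mcon_1$ is checked against $\opaque$; this may fail, but only against the distinguished demonic label, never against $\mmodvar$ --- and since a flat check whose provability judgment is ambiguous reduces non-deterministically to its success branch, \syntax{havoc} can always clear $\mcon_1$ with the argument refined exactly as it would be for any value that passed $\mcon_1$ in $\mctx$'s run (higher-order pre-conditions simply $\eta$-expand and defer). These spurious demonic-label errors are discarded, just as the soundness statement prescribes. Second, the dependency order must be respected: the full soundness theorem (Theorem~\ref{thm:soundness}) is unavailable, since its proof invokes the present lemma; but Lemma~\ref{lem:main-lemma} is established independently of \syntax{havoc}-completeness, and iterating it --- plus a routine lemma that $\sqsubseteq_\mvmod$ respects the provability judgments, so that a refuting value cannot be approximated by a proving one --- suffices for all the approximation reasoning above.

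The main obstacle I anticipate is the bookkeeping of the auxiliary induction: making ``the residual on the path to blame'' a well-defined, unique notion, showing it is preserved and transformed in a controlled way by \emph{every} reduction step of $\mctx[\mmodvar]$ --- including the steps that erect the $\eta$-expanded wrappers and split pair contracts into their projections --- and checking that \syntax{havoc}'s fixed three-way \textsc{amb} menu is genuinely exhaustive for the ways a value of this untyped, pair-and-function language can be eliminated by a context.
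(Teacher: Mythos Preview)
Your proposal is correct and in fact more scrupulous than the paper's own argument, though the underlying idea is the same.  The paper does not carry out a step-by-step residual-tracking induction; instead it asserts, without further justification, that the blaming context may be taken \emph{without loss of generality} to have the minimal shape
\[
\mathcal{D} ::= [\;]\ \mid\ (\mathcal{D}\ \mval)\ \mid\ (\scar\ \mathcal{D})\ \mid\ (\scdr\ \mathcal{D})\text,
\]
i.e.\ a pure stack of eliminations applied to the module reference.  It then replaces every concrete argument in $\mathcal{D}$ by $\opaque$ to obtain $\mathcal{D}'$, claims in one sentence that $\mathcal{D}'[\mmodvar]$ still reaches the same blame because ``replacing some component of the redex with $\opaque$ causes at least that reduction to fire, possibly in addition to others,'' and concludes by inspection that \syntax{havoc} enumerates every such $\mathcal{D}'$.

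Your residual-chasing induction is, in effect, a constructive discharge of the paper's ``without loss of generality'' step: tracking the distinguished residual and showing that only application, \scar, and \scdr eliminations are relevant is exactly what justifies collapsing an arbitrary $\mctx$ to a $\mathcal{D}$.  Likewise, your appeal to Lemma~\ref{lem:main-lemma} (and to tightness of $\sqsubseteq_\mvmod$ on concrete blame) is the precise content of the paper's one-line claim that swapping in $\opaque$ preserves the blaming reduction.  So the two arguments share a skeleton; yours supplies the connective tissue the paper omits --- including the dependency-order check and the handling of spurious demonic blame, neither of which the paper mentions.  The ``main obstacle'' you flag at the end, making the residual-on-the-path notion well-defined, is real, and the paper does not resolve it either: it simply posits the minimal form.
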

\begin{proof}
If there exists such an $\mctx$, then 
without loss of generality, it is of some minimal form $\mathcal{D}$ in 
\begin{align*}
  \mathcal{D} &= [\;]\ |\ (\mathcal{D}\ V)\ |\ (\scar\ \mathcal{D})\ |\ (\scdr\ \mathcal{D})\text,
\end{align*}
but then there exists a $\mathcal{D}'$ equal to $\mathcal{D}$
with all values replaced with $\opaque$ such that
   $\mvmod\;\mathcal{D}'[\mval] 
   \multistdstep \simpleblm\mmodvar\mmodvaro$.
This is because at every reduction step, replacing some component of
the redex with $\opaque$ causes at least that reduction to fire, 
possibly in addition to others.
Further, by inspection of \hv, if  
   $\mvmod\;\mathcal{D}'[\mval] 
   \multistdstep \simpleblm\mmodvar\mmodvaro$,
then 
$\mvmod\;(\hv\ \mval) \multistdstep \simpleblm\mmodvar\mmodvaro$.
\end{proof}

\begin{proof}[Theorem~\ref{thm:soundness}]
By the definition of $\mathit{eval}$, we have
$\mprg\multistdstep\mans$.  
Let the number of steps in  $\mprg\multistdstep\mans$ be $n$.
There are two cases: either $\mans = \mval$, or $\mans =
\simpleblm\mlab\mlabo$.  If $\mans = \mval$, then we proceed by
induction on $n$ and  apply lemma~\ref{lem:main-lemma} at each step.  

If $\mans = \simpleblm\mlab\mlabo$ then there are two possibilities.
If $\mlab$ is the name of an opaque module in $\mvmod$ or if $\mlab=\dagger$, then
 $\mans \sqsubseteq_\mvmod \manso$ immediately.  If $\mlab=\mmodvar$ is the name
of a concrete module in $\mvmod$, then
$\sapp{\hv}\mmodvar \multistdstep \mans$ by
lemma~\ref{lem:demonic}, and therefore $\mans \in
\mathit{eval}(\mprgo)$ by the definition of $\mathit{eval}$. 
\end{proof}

\begin{proof}[Theorem~\ref{thm:soundness-machine}]
We reason by case analysis on the transition. In the cases where the
transition is deterministic, the result follows by calculation. For
the the remaining non-deterministic cases, we must show an abstract
state exists such that the simulation is preserved. By examining the
rules for these cases, we see that all hinge on the abstract store in
$\hat\msto$ soundly approximating the concrete store in $\msto$, which
follows from the assumption that $\alpha(\msto) \sqsubseteq
\hat\msto$.
\end{proof}

\begin{proof}[Theorem~\ref{thm:decidability}]
The state-space of the machine is non-recursive with finite sets at
the leaves on the assumption that addresses and base values are
finite. Hence reachability is decidable since the abstract state-space
is finite.
\end{proof}

\else
\fi

\end{document}